\newtheorem{lem}{Lemma}
\newtheorem{thm}{Theorem}
\newtheorem{coro}{Corollary}
\newtheorem{rem}{Remark}
\newtheorem{prop}{Proposition}
\newtheorem{exa}{Example}
\newtheorem{con}{Conjecture}
\begin{document}
\newenvironment{prf}[1][Proof]{\noindent\textit{#1}\quad }

\begin{frontmatter}
\title{Dimensions of some LCD BCH codes}

\author[1]{Hanglong Zhang\corref{cor1}}
\ead{zhanghl9451@163.com}
\author[1,2]{Xiwang Cao}
\ead{xwcao@nuaa.edu.cn}
{\cortext[cor1]{Corresponding author.}}


\tnotetext[]
{This research is supported by the National Natural Science Foundation of China under Grant Nos. 11771007 and  12171241.}
\address[1]{Department of Mathematics, Nanjing University of Aeronautics and Astronautics, Nanjing 210016, China}
\address[2]{Key Laboratory of Mathematical Modelling and High Performance Computing of Air Vehicles (NUAA), MIIT,  Nanjing  211106,  China}

\begin{abstract}
 In this paper,  we  investigate the first few largest coset leaders modulo $\frac{q^m+1}{\lambda}$ where $\lambda\mid q+1$ and $q$ is an odd prime power, and give the dimensions of some LCD BCH codes of length $\frac{q^m+1}{\lambda}$ with  large designed distances.
We also determine the dimensions of  some LCD  BCH codes  of length $n=\frac{(q^m+1)}{\lambda}$ with  designed distances  $2\leq \delta \leq \frac{ q^{\lfloor(m+1)/2\rfloor}}{\lambda}+1$,  where $ \lambda\mid q+1$  and $1<\lambda<q+1$.  The LCD BCH codes presented in this paper have a sharper lower bound on the minimum distance than the BCH bound.
\end{abstract}

\begin{keyword}
BCH code, LCD code, cyclic code, cyclotomic  coset.
\end{keyword}

\end{frontmatter}

\section{Introduction}
 Through this paper, let $q$ be a prime power and $\mathbb{F}_q$ be the finite field with $q$ elements. An $[n,k,d]$ linear code $\mathcal{C}$ over $\mathbb{F}_q$ is a $k$-dimensional subspace of $\mathbb{F}^n_q$ with  minimum (Hamming) distance $d$.
 A linear  code $\mathcal{C}$  of length $n$ over $\mathbb{F}_q$ is called cyclic if for every $(c_0,c_1,\cdots,c_{n-1})\in\mathcal{C}$ implies $(c_{n-1},c_0,c_1,\cdots,c_{n-2})\in\mathcal{C}$. 
 Since the vector $(c_0,c_1,\cdots,c_{n-1})\in\mathbb{F}^n_q$ can be  identified with a  polynomial $c_0+c_1x+\cdots+c_{n-1}x^{n-1}\in \mathbb{F}_q[x]/\langle x^n-1 \rangle$,  a cyclic code of length $n$ over $\mathbb{F}_q$  is an ideal of the quotient ring $\mathbb{F}_q[x]/ \langle x^n-1 \rangle$. 
 Furthermore,  every ideal in  $\mathbb{F}_q[x]/ \langle x^n-1 \rangle$ is a  principal ideal, then $\mathcal{C}$ can be expressed as $\mathcal{C}=\langle g(x)\rangle$, where $g(x)$ is a divisor of $x^n-1$ and has the least degree.
 Then $g(x)$ is called the generator polynomial of $\mathcal{C}$ and $h(x)=(x^n-1)/g(x)$ is called the check polynomial of $\mathcal{C}$. Note that $x^n-1$ has no repeated factors over $\mathbb{F}_{q}$ if and only if ${\rm gcd}(n,q)=1$. In this paper, we always assume that ${\rm gcd}(n,q)=1$.

 Let $n$ be a positive integer and $m={\rm Ord}_n(q)$, that is,  $m$ is the smallest positive integer such that $n\mid q^m-1$. Let $\alpha$ be a generator of $\mathbb{F}^{\ast}_{q^m}$ and  $\beta=\alpha^{\frac{q^m-1}{n}}$. Then $\beta$ is a primitive $n$-th root of unity in $\mathbb{F}_{q^m}$.  It follows that $x^n-1=\prod_{i=0}^{n-1}(x-\beta^i)$. Let $m_i(x)$  denote the minimal polynomial of $\beta^i$ over $\mathbb{F}_q$.
 
 For any $2\leq \delta \leq n$,  we
 define $g_{(q,n,\delta,b)}(x)$ as  the least common multiple of $m_b(x)$, $m_{b+1}(x)$, $\cdots$, $m_{b+\delta-2}(x)$, in symbols, $$g_{(q,n,\delta,b)}(x)={\rm lcm}(m_b(x), m_{b+1}(x),\cdots,m_{b+\delta-2}(x)),$$ where $b$ is an integer.
 Let $\mathcal{C}_{(q,n,\delta,b)}$ denote the cyclic code of length $n$ with the generator polynomial $g_{(q,n,\delta,b)}(x)$. Then  $\mathcal{C}_{(q,n,\delta,b)}$ is called a BCH code. If $n=q^m-1$ , then $\mathcal{C}_{(q,q^m-1,\delta,b)}$ is called  a primitive BCH code. If $b=1$, then  $\mathcal{C}_{(q,n,\delta,1)}$ is called   a narrow-sense BCH code. If $n=q^m+1$,  then  $\mathcal{C}_{(q,q^m+1,\delta,b)}$ is called an antiprimitive BCH code by Ding \cite{DING}.
 Since $g_{(q,n,\delta,b)}(x)$ has $\delta-1$ consecutive roots $\beta^i$ for $b\leq i \leq b+\delta-2$, it follows from the BCH bound that the minimum distance of  $\mathcal{C}_{(q,n,\delta,b)}$ is at least $\delta$. Due to this fact, $\delta$ is called the designed distance of  $\mathcal{C}_{(q,n,\delta,b)}$. The  (Euclidean) dual code  of  a linear code  $\mathcal{C}$ of length $n$ over $\mathbb{F}_q$ is defined by $\mathcal{C}^{\perp}=\{ x\in \mathbb{F}_{q}^n |~ (x,y)=x\cdot y^T=0~ {\rm for~ all~} y\in \mathcal{C}\}$,
 where $y^T$ denotes the transpose of the vector $y=(y_1,\cdots,y_n)$. A linear code $\mathcal{C}$ over $\mathbb{F}_q$ is called an LCD code ( linear code with complementary dual ) if $\mathcal{C}\bigcap \mathcal{C^\perp}=\{0\}$.

 Cyclic codes are a subclass of linear codes and  widely employed for  
 storage devices, communication systems,  consumer electronics and other fields as they have a clear algebraic structure and efficient encoding and decoding algorithms, and thus have  been extensively studied (for example \cite{Char,LIU, LIU2,LIU3,SHI}). BCH codes are a significant class of cyclic codes.
 BCH codes were first introduced in 1960s by Hocquenghem \cite{Ho}, Bose and Ray-Chaudhuri \cite{Bose}. 
 The parameters of BCH codes have been studied in several works such as  \cite{Augot, Augot2, Canteaut, Charpin2,Ding2,ding3,DING,Kasami,Kasami2,Li,li2,liu,Liu2,liu3,wang,Yan, yue,yue1,yue2,shi} and references therein.
 However,    their parameters are unknown in general and   how to determine the parameters of a BCH code  is a  hard  problem as pointed out by Charpin \cite{Char} and Ding \cite{DING}.  
 
 The  research  of antiprimitive BCH codes is a hot topic in recent years. In \cite{Li,liu,Liu2,liu3,wang,Yan,shi}, the authors studied the coset leaders modulo $q^m+1$ and the parameters of antiprimitive BCH codes. In this paper,  we   determine  the    first few largest coset leaders modulo $\frac{q^m+1}{\lambda}$ where  $\lambda\mid q+1$ and $q$  is an odd prime power, and then  get the dimensions of some LCD  BCH codes of length $q^m+1$ with   large designed distances.
 We also determine  the dimensions of the LCD BCH codes $\mathcal{C}_{(q,(q^m+1)/\lambda,\delta+1,0)}$  with $1\leq \delta-1 \leq \frac{ q^{\lfloor(m+1)/2\rfloor}}{\lambda}$  and $\lambda\mid q+1$.
 Moreover,  these LCD BCH codes  have a sharper lower bound on the minimum distance than the well-known BCH bound.

 Here goes the structure of this paper. In Section 2, we  introduce some definitions and    useful results.
 In Section 3, we   determine the  first few largest coset leaders modulo $q^m+1$, where $q$  is an  odd prime power,  and  $m=4,8$, or $m\geq12$ and  $m\equiv 4~ ({\rm mod}~ 8)$. Based on these results,   the dimensions of some LCD BCH codes of length $q^m+1$ with  large designed distances are determined.  
 In Section 4,  for  $1\leq \delta-1 \leq \frac{ q^{\lfloor(m+1)/2\rfloor}}{\lambda}$,   the dimensions of  LCD BCH codes  $\mathcal{C}_{(q,n,\delta+1,0)}$  are determined, where $n=\frac{q^m+1}{\lambda}$ and  $1<\lambda\mid q+1$.  We also give the  first few largest
 coset leaders modulo $\frac{q^m+1}{\lambda}$ for some special values of $\lambda$   and thus determine the dimensions of  some LCD BCH codes of length $n$ with  large designed distances. 
 
\section{Preliminaries}\label{sec2}

In this section, we present some necessary background concerning cyclotomic cosets,  coset leaders, BCH codes and LCD codes.

Let $\mathbb{Z}_{n}$ denote the  ring of integers modulo $n$. Let $s$ be an integer with $0 \leq s < n$. The $q$-cyclotomic coset  of $s$ modulo $n$ is  defined by 
$$C_s=\{s,sq,sq^2,\cdots,sq^{l_s-1}\}~ ({\rm mod}~ n) \subset \mathbb{Z}_{n}, $$
where $l_s$ is the smallest positive integer such that $sq^{l_s}\equiv s ~({\rm mod}~ n)$.  Thus, $\big| C_s  \big|=l_s$. The smallest integer in $C_s$ is called the coset leader of $C_s$. Let $\varGamma_{(n,q)}$ be the set of  the coset leaders. Then we have $C_s\cap C_t=\emptyset$ for any distinct elements $s$ and $t$ in $\varGamma_{(n,q)}$, and $$  \bigcup_{s\in \varGamma_{(n,q)}} C_s =\mathbb{Z}_n.$$
That is, the $q$-cyclotomic cosets modulo $n$ form a  partition of $\mathbb{Z}_n$.  
It is known that the roots of $g_{(q,n,\delta,b)}(x)$ form the set $\{\beta^t:t\in T\}$, where $T$ is the union of some $q$-cyclotomic cosets, and is called the defining set of $\mathcal{C}_{(q,n,\delta,b)}$ with respect to $\beta$.
The dimension $k$ of $\mathcal{C}_{(q,n,\delta,b)}$ is equal to $n-|T|$.
Thus,  in   order  to determine the dimension of the BCH code  $\mathcal{C}_{(q,n,\delta,b)}$, we  need to determine the  coset leaders in  (or not in) the defining set $T$ and the sizes of the $q$-cyclotomic cosets modulo $n$ they are in.

LCD cyclic codes over finite fields were called reversible codes and studied by Massey \cite{Mas}. Let $f(x)=f_tx^t+f_{t-1}x^{t-1}+\cdots+f_1x+f_0$ be a polynomial over 
$\mathbb{F}_q$ with $f_t\neq0$ and $f_0\neq0$. Then the reciprocal polynomial of $f(x)$ is defined by $\widehat{f}(x)=f_0^{-1}x^tf(x^{-1})$. The following results about LCD cyclic codes over $\mathbb{F}_q$ are known in the literature \cite{yang}.
\begin{lem}\cite{yang}
	Let $\mathcal{C}$ be a cyclic code of length $n$ over $\mathbb{F}_q$ with the generator polynomial $g(x)$ and ${\rm gcd}(n,q)=1$.Then the following statements are equivalent:
	
	1. $\mathcal{C}$ is an LCD code.
	
	2. $g(x)$ is self-reciprocal, i.e., $g(x)=\widehat{g}(x)$.
	
	3. $\alpha^{-1}$ is a root of $g(x)$ for every root $\alpha$ of $g(x)$.
\end{lem}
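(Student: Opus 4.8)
\begin{prf}
The plan is to establish the two equivalences $(2)\Leftrightarrow(3)$ and $(1)\Leftrightarrow(3)$; chained together they yield the full statement. I would dispose of $(2)\Leftrightarrow(3)$ first, since it is purely a statement about roots. Because $\gcd(n,q)=1$, the polynomial $x^n-1$ is squarefree over $\mathbb{F}_q$, so its divisor $g(x)$ has only simple roots, all of them powers of $\beta$; in particular the constant term $g_0=g(0)$ is nonzero and $g$ is monic. From the definition $\widehat{g}(x)=g_0^{-1}x^{\deg g}g(x^{-1})$ one checks directly that a nonzero $\gamma$ is a root of $\widehat{g}$ if and only if $\gamma^{-1}$ is a root of $g$, and that $\widehat{g}$ is again monic of degree $\deg g$. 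Hence the root set of $\widehat{g}$ is the image under inversion of the root set of $g$. As both polynomials are monic of equal degree with simple roots, $g=\widehat{g}$ holds exactly when these root sets coincide, i.e. when the root set of $g$ is closed under $\alpha\mapsto\alpha^{-1}$; this is precisely $(3)$.

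Next I would handle $(1)\Leftrightarrow(3)$ by passing to defining sets. Let $T\subseteq\mathbb{Z}_n$ be the defining set of $\mathcal{C}$ with respect to $\beta$, so that $\beta^i$ is a root of $g$ if and only if $i\in T$, and write $-T=\{-t\bmod n:t\in T\}$. The main obstacle is the structural fact that $\mathcal{C}^{\perp}$ is cyclic with generator polynomial $\widehat{h}(x)$, where $h(x)=(x^n-1)/g(x)$ is the check polynomial; equivalently, the defining set of $\mathcal{C}^{\perp}$ is $T^{\perp}=\mathbb{Z}_n\setminus(-T)$. I would derive this from the observation already used above, that the roots of $\widehat{h}$ are the inverses of the roots of $h$, together with the fact that the defining sets of $g$ and $h$ partition $\mathbb{Z}_n$.

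Granting this, the conclusion is a short counting argument. The intersection $\mathcal{C}\cap\mathcal{C}^{\perp}$ is the cyclic code whose defining set is $T\cup T^{\perp}$, so $\mathcal{C}$ is LCD precisely when this intersection is trivial, i.e. when $T\cup T^{\perp}=\mathbb{Z}_n$. Substituting $T^{\perp}=\mathbb{Z}_n\setminus(-T)$ gives $\mathbb{Z}_n\setminus(T\cup T^{\perp})=(-T)\setminus T$, so the equality holds if and only if $-T\subseteq T$, and since $|{-T}|=|T|$ this forces $-T=T$. Finally, $-T=T$ reads $i\in T\Leftrightarrow -i\in T$, i.e. $\beta^i$ is a root of $g$ if and only if $(\beta^i)^{-1}=\beta^{-i}$ is, which is exactly condition $(3)$. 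This closes the chain $(1)\Leftrightarrow(3)\Leftrightarrow(2)$ and proves the lemma.
\end{prf}
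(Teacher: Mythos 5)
The paper contains no internal proof to compare against: this lemma is quoted as a known result from Yang and Massey \cite{yang}, and the authors use it without argument. Judged on its own terms, your proof follows the standard route (essentially the argument of the cited source, specialized to the squarefree case $\gcd(n,q)=1$): the equivalence $(2)\Leftrightarrow(3)$ via the observation that $\widehat{g}$ is monic of the same degree as $g$ with root set the inverses of the roots of $g$, and $(1)\Leftrightarrow(3)$ via defining sets, using that $\mathcal{C}\cap\mathcal{C}^{\perp}$ is cyclic with defining set $T\cup T^{\perp}$ together with the complement computation $\mathbb{Z}_n\setminus(T\cup T^{\perp})=(-T)\setminus T$, whence LCD is equivalent to $-T=T$. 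All of these steps check out, including the cardinality argument upgrading $-T\subseteq T$ to $-T=T$.

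The one genuine weakness is the step you yourself flag as the main obstacle, namely $\mathcal{C}^{\perp}=\langle\widehat{h}(x)\rangle$. The two facts you propose to derive it from --- that the roots of $\widehat{h}$ are the inverses of the roots of $h$, and that the defining sets of $g$ and $h$ partition $\mathbb{Z}_n$ --- only compute the defining set of the code generated by $\widehat{h}$; they make no contact with the inner product and so cannot, by themselves, identify that code with the dual of $\mathcal{C}$. What is missing is the standard orthogonality argument: one shows $\langle\widehat{h}\rangle\subseteq\mathcal{C}^{\perp}$ by a direct computation resting on $g(x)h(x)=x^n-1$ (after reversing one factor, the inner product of a codeword of $\mathcal{C}$ with any cyclic shift of a codeword of $\langle\widehat{h}\rangle$ appears as a coefficient of a multiple of $x^n-1$ reduced modulo $x^n-1$, hence vanishes), and then concludes by the dimension count $\dim\langle\widehat{h}\rangle=n-\deg h=\deg g=\dim\mathcal{C}^{\perp}$. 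Since the rest of your argument is correct, this is a localized gap: supply that computation, or cite the fact as the textbook result it is rather than claiming to derive it from root considerations alone.
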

It was shown in \cite{Li} that every cyclic code over $\mathbb{F}_q$ of length $n$ is an LCD code if $-1$ is a power of $q$ modulo $n$. Thus, it is easy to check that a cyclic code of length $n=(q^m+1)/\lambda$ is an LCD code, where $\lambda \mid q+1$.

\begin{lem}\cite{Li}\label{lem000}
	The BCH code  $\mathcal{C}_{(q,q^m+1,\delta,0)}$ has the minimum distance $d\geq 2(\delta-1)$.
\end{lem}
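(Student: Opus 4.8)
The plan is to exploit the defining characteristic of the length $n=q^m+1$, namely that $q^m\equiv -1\pmod{n}$, so that $-1$ is a power of $q$ modulo $n$. This forces every $q$-cyclotomic coset modulo $n$ to be closed under negation: for each integer $i$ we have $-i\equiv iq^m\in C_i\pmod{n}$. The effect of this symmetry is that a short run of consecutive roots starting at $\beta^0$ automatically drags in its mirror image below $0$, roughly doubling the length of the consecutive run and hence, via the BCH bound, roughly doubling the guaranteed minimum distance.

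Concretely, I would first recall that the defining set $T$ of $\mathcal{C}_{(q,q^m+1,\delta,0)}$ is the union of the cosets $C_0,C_1,\dots,C_{\delta-2}$, so that $\beta^t$ is a root of $g_{(q,q^m+1,\delta,0)}(x)$ for every $t\in\{0,1,\dots,\delta-2\}$. Next I would apply the negation symmetry: since $-t\in C_t\subseteq T$ for each such $t$, the set $T$ also contains $\{0,-1,-2,\dots,-(\delta-2)\}$, which modulo $n$ equals $\{0,\,n-1,\,n-2,\dots,\,n-(\delta-2)\}$. Combining the two ranges, $T$ contains the full block of integers
\[
\{-(\delta-2),\,\dots,\,-1,\,0,\,1,\,\dots,\,\delta-2\}\pmod{n},
\]
which is a set of $2(\delta-2)+1=2\delta-3$ consecutive residues modulo $n$.

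Finally I would invoke the BCH bound in the form already used in the excerpt: a cyclic code whose generator polynomial has $d-1$ consecutive roots $\beta^i$ has minimum distance at least $d$. Here the number of consecutive roots is $2\delta-3$, so the minimum distance satisfies $d\ge (2\delta-3)+1=2(\delta-1)$, which is exactly the claimed bound. The only point requiring care — and the one I would verify explicitly rather than gloss over — is that these $2\delta-3$ exponents are genuinely distinct modulo $n$ and form an honest consecutive interval that does not wrap around and overlap itself; this holds whenever $2(\delta-2)<n=q^m+1$, a condition that is comfortably satisfied for the designed distances considered in this paper. Granting this mild range restriction, the rest of the argument is a direct consequence of the $q^m\equiv-1$ symmetry and the BCH bound, so I do not expect any substantial obstacle beyond correctly bookkeeping the consecutive run.
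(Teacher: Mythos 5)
Your proof is correct and is essentially the same argument as the paper's: the paper itself omits a proof of Lemma~\ref{lem000}, quoting it from \cite{Li}, and the proof there is precisely your observation that $q^m\equiv -1\pmod{q^m+1}$ makes every $q$-cyclotomic coset closed under negation, so the defining set contains the $2\delta-3$ consecutive residues $-(\delta-2),\dots,0,\dots,\delta-2$ and the BCH bound yields $d\geq 2(\delta-1)$. Your wrap-around caveat $2(\delta-2)<n$ is the right point to check and is harmless here, since the designed distances considered never exceed $\delta_1=\frac{n}{2}$.
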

The above lemma gives a lower bound for the  minimum distance of some antiprimitive BCH codes, which is sharper than the well-known BCH bound.  By the same idea of lemma \ref{lem000}, we  get the following lemma and omit the proof.
\begin{lem}\label{lem8}
	Let $\lambda$ be a positive divisor of  $q+1$. Then	the BCH code  $\mathcal{C}_{(q,(q^m+1)/\lambda,\delta,0)}$ has the minimum distance $d\geq 2(\delta-1)$.
\end{lem}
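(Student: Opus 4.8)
The plan is to reuse the argument behind Lemma~\ref{lem000} essentially verbatim; the only thing that needs checking is that its key symmetry still holds after replacing the modulus $q^m+1$ by $n=(q^m+1)/\lambda$.

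First I would record the arithmetic observation that drives the whole proof. Since $n=(q^m+1)/\lambda$ is by definition a positive integer, we have $q^m+1=\lambda n$, so $n\mid q^m+1$ and hence $q^m\equiv-1\pmod n$. In particular $-1$ is a power of $q$ modulo $n$. It follows that every $q$-cyclotomic coset $C_s$ modulo $n$ is closed under negation, because $-s\equiv sq^m\in C_s$; consequently any defining set, being a union of such cosets, is invariant under the map $s\mapsto -s$.

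Next I would unwind the construction of the code. By definition $\mathcal{C}_{(q,n,\delta,0)}$ has generator polynomial $g_{(q,n,\delta,0)}(x)={\rm lcm}(m_0(x),m_1(x),\dots,m_{\delta-2}(x))$, so $\beta^0,\beta^1,\dots,\beta^{\delta-2}$ are among its roots, where $\beta$ is a primitive $n$-th root of unity. Applying the negation-invariance of the defining set, $\beta^{-1},\beta^{-2},\dots,\beta^{-(\delta-2)}$ are roots as well. Thus $\beta^i$ is a root for every $i$ with $-(\delta-2)\le i\le\delta-2$, a string of $2(\delta-2)+1=2\delta-3$ consecutive roots.

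Finally I would invoke the BCH bound: a cyclic code possessing $2\delta-3$ consecutive roots has minimum distance at least $(2\delta-3)+1=2(\delta-1)$, which is exactly the assertion. I do not expect any genuine obstacle here, since the entire content is the remark that $q^m\equiv-1\pmod n$ survives the division by $\lambda$, so the symmetric-roots counting used for the full length $q^m+1$ carries over unchanged. The only point deserving a line of care is the implication $q^m+1=\lambda n\Rightarrow n\mid q^m+1$, which is where the hypothesis $\lambda\mid q+1$, guaranteeing that $n$ is a positive integer, enters.
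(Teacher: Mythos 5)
Your proof is correct and follows exactly the route the paper intends: the paper omits the proof of this lemma, stating only that it follows ``by the same idea of Lemma~2'' (the length-$q^m+1$ case from the LCD cyclic codes literature), and that idea is precisely your observation that $q^m\equiv-1\pmod{n}$ makes the defining set negation-invariant, yielding the $2\delta-3$ consecutive roots $\beta^{-(\delta-2)},\dots,\beta^{\delta-2}$ and hence $d\geq 2(\delta-1)$ by the BCH bound. Your closing remark correctly isolates the only point where $\lambda$ matters, namely the integrality of $n$ (implicitly assumed by the paper, which later notes that for even $m$ this forces $\lambda\in\{1,2\}$).
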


The following two lemmas are well known, which can be used to determine the sizes of some cyclotomic cosets.
\begin{lem}\label{lem1}
	Let  $u,v$ and $b$ be  positive integers. Then
	\begin{eqnarray*}
		\operatorname{gcd}\left(b^{u}+1, b^{v}-1\right) & = & \left\{\begin{array}{ll}
			1, & \text { if } \frac{v}{\operatorname{gcd}(u, v)} \text { is odd and } b \text { is even, } \\
			2, & \text { if } \frac{v}{\operatorname{gcd}(u, v)} \text { is odd and } b \text { is odd, } \\
			b^{\operatorname{gcd}(u, v)}+1, & \text { if } \frac{v}{\operatorname{gcd}(u, v)} \text { is even. }
		\end{array}\right.
	\end{eqnarray*}
\end{lem}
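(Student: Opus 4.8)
The plan is to reduce everything to the classical identity $\gcd(b^a-1,\,b^c-1)=b^{\gcd(a,c)}-1$, which I will take as known. Set $d=\gcd(u,v)$ and write $u=du'$, $v=dv'$ with $\gcd(u',v')=1$. The pivotal observation is how $\gcd(2u,v)$ behaves: since $\gcd(2u,v)=d\cdot\gcd(2u',v')=d\cdot\gcd(2,v')$, one gets $\gcd(2u,v)=2d$ when $v'=v/\gcd(u,v)$ is even and $\gcd(2u,v)=d$ when $v'$ is odd. Feeding this into the classical identity via $b^{2u}-1=(b^u-1)(b^u+1)$ yields $\gcd(b^{2u}-1,\,b^v-1)=b^{2d}-1$ in the even case and $b^d-1$ in the odd case, and every prime dividing $g:=\gcd(b^u+1,\,b^v-1)$ divides this quantity.

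First I would treat the case $v'$ odd. Here $g$ divides $\gcd(b^{2u}-1,\,b^v-1)=b^d-1$; since also $g\mid b^u+1$ and $b^d-1\mid b^u-1$, we obtain $g\mid\gcd(b^u+1,\,b^u-1)$, which divides $2$. Thus $g\in\{1,2\}$, and the parity of $b$ decides the outcome: if $b$ is even then $b^u+1$ is odd and $g=1$, while if $b$ is odd both $b^u+1$ and $b^v-1$ are even and so $g=2$, matching the first two lines of the claim.

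Next, for $v'$ even, I would exhibit the value $b^d+1$ on both sides. Because $\gcd(u',v')=1$ and $v'$ is even, $u'$ is odd, so $b^d+1\mid b^u+1$; and since $v'$ is even, $b^d+1\mid b^{2d}-1\mid b^v-1$. Hence $b^d+1\mid g$. For the reverse divisibility, $g\mid\gcd(b^{2u}-1,\,b^v-1)=b^{2d}-1=(b^d-1)(b^d+1)$, while $\gcd(g,\,b^d-1)$ divides $\gcd(b^u+1,\,b^d-1)\mid 2$; this pins $g$ down to be either $b^d+1$ or $2(b^d+1)$.

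The delicate point, and the step I expect to be the main obstacle, is excluding the factor $2(b^d+1)$ when $b$ is odd. This is a $2$-adic valuation computation: since $u'$ is odd, lifting the exponent gives $v_2(b^u+1)=v_2\big((b^d)^{u'}+1\big)=v_2(b^d+1)$, so $2(b^d+1)\nmid b^u+1$ and therefore $g=b^d+1$, which establishes the third line. The only genuine care needed throughout is this consistent tracking of the power of $2$; away from the prime $2$ the statement follows directly from the orders of $b$ modulo the relevant primes together with the classical gcd identity.
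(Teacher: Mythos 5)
Your proposal is correct, but there is nothing in the paper to measure it against: the paper introduces this statement with ``the following two lemmas are well known'' and supplies no proof at all, so your argument stands as a self-contained justification the paper itself omits. Checking it through: the reduction $\gcd(2u,v)=d\cdot\gcd(2,v')$ is right because $\gcd(u',v')=1$ forces any common divisor of $2u'$ and $v'$ to divide $2$; the odd-$v'$ case correctly lands in $\gcd(b^u+1,b^u-1)\mid 2$ with parity of $b$ deciding between $1$ and $2$; and in the even-$v'$ case your sandwich $b^d+1\mid g\mid (b^d-1)(b^d+1)$ together with $\gcd(g,b^d-1)\mid 2$ pins $g$ to $b^d+1$ or $2(b^d+1)$. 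The ``delicate point'' you flag is handled soundly: since $u'$ is odd, $b^u+1=(b^d+1)\bigl(b^{d(u'-1)}-b^{d(u'-2)}+\cdots+1\bigr)$ and the cofactor is a sum of an odd number of odd terms when $b$ is odd, hence odd, giving $\nu_2(b^u+1)=\nu_2(b^d+1)$ without needing the full lifting-the-exponent machinery (for $b$ even the exclusion is immediate since $g\mid b^u+1$ is odd). The only pedantic caveat is the degenerate case $b=1$, where the classical identity $\gcd(b^a-1,b^c-1)=b^{\gcd(a,c)}-1$ reads $0=0$ under the convention $\gcd(0,0)=0$; the lemma's statement checks directly there ($\gcd(2,0)=2$), so no harm is done, but a one-line remark restricting to $b\geq 2$ and verifying $b=1$ by hand would make the write-up airtight.
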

For  a positive integer $b$, let $\nu_2(b)$ be the 2-adic order function of $b$, i.e., $\nu_2(b)$ is the largest integer  $l$ such that $2^l\mid b$ but $2^{l+1}\not\mid b$.
\begin{lem}\label{lem111}
 Let  $u,v$ and $b$ be  positive integers. Then
	\begin{eqnarray*}
		\operatorname{gcd}\left(b^{u}+1, b^{v}+1\right) & = & \left\{\begin{array}{ll}
			2, & \text { if } \nu_2(v)\neq\nu_2(u), \\
			b^{\operatorname{gcd}(u,v)}+1, & \text{if } \nu_2(v)=\nu_2(u) . 
		\end{array}\right.
	\end{eqnarray*}
\end{lem}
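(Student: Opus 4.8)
The plan is to reduce the gcd of the two ``plus'' expressions $b^{u}+1$ and $b^{v}+1$ to a gcd of one ``plus'' and one ``minus'' expression, so that the already-available Lemma~\ref{lem1} can be invoked directly. By the symmetry of $\gcd$ we may assume $v\ge u$; the case $v=u$ is immediate, since then $\gcd(b^{u}+1,b^{v}+1)=b^{u}+1=b^{\gcd(u,v)}+1$ while $\nu_2(u)=\nu_2(v)$, matching the second branch. So assume $v>u$.

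For the reduction I would subtract a suitable multiple of $b^{u}+1$ from $b^{v}+1$: since
$$(b^{v}+1)-b^{\,v-u}(b^{u}+1)=1-b^{\,v-u},$$
and the gcd is insensitive to sign, this gives $\gcd(b^{u}+1,b^{v}+1)=\gcd(b^{u}+1,b^{\,v-u}-1)$. Now Lemma~\ref{lem1} applies to the right-hand side with exponents $u$ (for the $+1$ term) and $v-u$ (for the $-1$ term). The decisive quantity in that lemma is $\gcd(u,v-u)$, which I would replace by $d:=\gcd(u,v)$ using $\gcd(u,v-u)=\gcd(u,v)$. Lemma~\ref{lem1} then yields $b^{d}+1$ when $\tfrac{v-u}{d}$ is even, and $2$ (for odd $b$, which is the setting $b=q$ of this paper) when $\tfrac{v-u}{d}$ is odd.

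The crux is to translate the parity of $\tfrac{v-u}{d}$ into the $2$-adic condition in the statement. Writing $\alpha=\nu_2(u)$ and $\beta=\nu_2(v)$, one has $\nu_2(d)=\min(\alpha,\beta)$, so $u/d$ is odd iff $\alpha\le\beta$ and $v/d$ is odd iff $\beta\le\alpha$; moreover $u/d$ and $v/d$ cannot both be even, because $\gcd(u/d,v/d)=1$. Hence $\tfrac{v-u}{d}=\tfrac{v}{d}-\tfrac{u}{d}$ is even exactly when $u/d$ and $v/d$ share the same parity, i.e.\ when both are odd, which occurs iff $\alpha=\beta$, that is $\nu_2(u)=\nu_2(v)$; and it is odd precisely when $\nu_2(u)\ne\nu_2(v)$. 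Substituting these two cases into the output of Lemma~\ref{lem1} gives exactly the two branches claimed.

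I expect the only genuine obstacle to be this last bookkeeping step, namely matching the hypothesis ``$\tfrac{v-u}{\gcd(u,v-u)}$ odd/even'' of Lemma~\ref{lem1} with the cleaner hypothesis ``$\nu_2(u)=\nu_2(v)$'' of the present statement, together with the minor caveat that the value $2$ in the first branch uses that $b=q$ is odd (for even $b$ the gcd would instead be $1$). Everything else is a single elimination identity and a direct citation.
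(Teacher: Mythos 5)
Your argument is correct, but there is no proof in the paper to compare it against: Lemma \ref{lem111} is stated without proof, declared (together with Lemma \ref{lem1}) to be well known. Your derivation stands on its own and is sound. The elimination identity $(b^{v}+1)-b^{\,v-u}(b^{u}+1)=1-b^{\,v-u}$ correctly reduces the problem to $\gcd(b^{u}+1,\,b^{\,v-u}-1)$, the substitution $\gcd(u,v-u)=\gcd(u,v)=d$ is valid, and the parity bookkeeping is right: since $\nu_2(d)=\min(\nu_2(u),\nu_2(v))$, at least one of $u/d$, $v/d$ is odd, so $(v-u)/d$ is even iff both are odd, i.e.\ iff $\nu_2(u)=\nu_2(v)$, and odd iff $\nu_2(u)\neq\nu_2(v)$; note also that a single application of the reduction suffices, so no iteration or descent is needed. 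Your closing caveat is not merely a minor footnote --- it actually exposes an imprecision in the paper's statement: for even $b$ both $b^{u}+1$ and $b^{v}+1$ are odd, so the first branch gives $1$ rather than $2$ (e.g.\ $\gcd(2^{1}+1,\,2^{2}+1)=\gcd(3,5)=1$), and Lemma \ref{lem111} as printed is false in that case. This is harmless for the paper, which only invokes the lemma with $b=q$ an odd prime power (as in the proof of Lemma \ref{lem13}), and your explicit restriction to odd $b$ is exactly the hypothesis under which the displayed formula is true.
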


The following proposition was proved in \cite{shi}, which  gives a sufficient and necessary  condition for $0\leq s \leq q^m$ being a coset leader.
\begin{prop}\cite{shi}\label{prop1}
	Let $q$ be an odd prime power, $m\geq2$  a positive integer and $n=q^m+1$. Let $1\leq i\leq m-1$, $l$ and $h$ be integers satisfying 
	\begin{equation*}
		1 \leq l \leq \frac{q^{i}-1}{2} \text { and }-\frac{l\left(q^{m-i}-1\right)}{q^{i}+1}<h<\frac{l\left(q^{m-i}+1\right)}{q^{i}-1} \text {. }
	\end{equation*}
	Then $0\leq s\leq q^m$ is a coset leader if and only if  $s\leq \frac{n}{2}$ and $s\ne l q^{m-i}+h$.
\end{prop}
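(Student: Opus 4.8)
The plan is to reduce the statement to a purely arithmetic description of the non-coset-leaders in the range $[0,n/2]$ and then to match that description against the parametrization $s=lq^{m-i}+h$.

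First I would exploit the defining congruence $q^m\equiv -1\pmod n$. It gives $sq^m\equiv -s\equiv n-s\pmod n$, so the cyclotomic coset $C_s$ is invariant under the reflection $x\mapsto n-x$; in particular $n-s\in C_s$. Hence if $s>n/2$ then $n-s<s$ lies in $C_s$ and $s$ cannot be a coset leader, which yields the necessity of $s\le n/2$. Moreover $n\mid q^{2m}-1$, so $q^{2m}\equiv 1\pmod n$ and every element of $C_s$ has the shape $sq^{j}\bmod n$ with $0\le j\le 2m-1$; using $sq^{m+j}\equiv n-(sq^{j}\bmod n)$ the coset-leader test reduces to the exponents $1\le i\le m-1$. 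I would record the resulting criterion: for $s\le n/2$, $s$ fails to be a coset leader if and only if there is some $i\in\{1,\dots,m-1\}$ with $sq^i\bmod n<s$ (a \emph{direct} descent) or $sq^i\bmod n>n-s$ (a \emph{reflected} descent, i.e.\ $n-(sq^i\bmod n)<s$).

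Next I would carry out the computational core. Writing $s=lq^{m-i}+h$ and using $q^m\equiv -1$ gives $sq^i\equiv hq^i-l\pmod n$. When $h\ge 1$ the residue is $sq^i\bmod n=hq^i-l$ (one checks $0<hq^i-l<n$ from the stated bounds together with $l\le\frac{q^i-1}{2}$), and the direct-descent inequality $hq^i-l<lq^{m-i}+h$ rearranges to $h(q^i-1)<l(q^{m-i}+1)$, that is $h<\frac{l(q^{m-i}+1)}{q^i-1}$, which is exactly the upper bound on $h$. When $h\le 0$, writing $h=-h'$ gives $sq^i\bmod n=n-h'q^i-l$, and the reflected-descent inequality $sq^i\bmod n>n-s$ rearranges to $h'(q^i+1)<l(q^{m-i}-1)$, that is $-h=h'<\frac{l(q^{m-i}-1)}{q^i+1}$, which is precisely the lower bound on $h$. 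Thus the two one-sided bounds on $h$ encode exactly the two ways in which a strictly smaller coset element can arise, and the boundary $h=0$ gives $s=lq^{m-i}$, which is non-leading through a reflected descent. With these equivalences in hand, the implication ``if $s$ has the forbidden form then $s$ is not a coset leader'' is obtained by reading the inequalities backwards to exhibit a smaller member of $C_s$; for the converse I would start from a non-leading $s\le n/2$, fix a witnessing exponent $i$, and set $l=\lfloor s/q^{m-i}\rfloor,\ h=s-lq^{m-i}$ in the direct case (and pass to the representation $(l+1,\,h-q^{m-i})$ with negative $h$ in the reflected case). Here the bound $s\le n/2=\frac{q^m+1}{2}$ together with $q$ odd is what forces $l\le\frac{q^i-1}{2}$: oddness makes $\frac{q^i-1}{2}$ an integer and renders the bound sharp.

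The hard part will be the range bookkeeping rather than any single inequality. I expect the main obstacle to be verifying that $sq^i\bmod n$ equals $hq^i-l$ (respectively $n-h'q^i-l$) on the nose, i.e.\ that no further multiple of $n$ is subtracted, and then guaranteeing that a non-leading $s$ admits \emph{some} triple $(i,l,h)$ landing in the admissible box $1\le l\le\frac{q^i-1}{2}$. The delicate point is that for $s$ close to $n/2$ the naive floor/ceiling choice can push $l$ up to $\frac{q^i+1}{2}$, which the lower bound on $h$ then excludes; one must argue that in exactly those cases a different exponent $i$ (or the opposite descent type) supplies an admissible representation, so that the two sign regimes for $h$ tile the non-leaders without gaps or double counting. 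Once the residue identities are pinned down and the constraint $l\le\frac{q^i-1}{2}$ is shown to be simultaneously necessary and attainable, the stated equivalence follows by combining the reflection reduction with the two one-sided computations above.
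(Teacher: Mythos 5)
The paper itself contains no proof of Proposition~\ref{prop1}: it is quoted verbatim from \cite{shi}, so there is no in-paper argument to compare against, and your proposal has to be judged on its own terms. Judged so, it is correct, and it is essentially the natural argument behind the cited result. The reduction via $q^m\equiv -1\pmod n$ is sound: $n-s\in C_s$ kills every $s>n/2$, and $sq^{m+j}\equiv n-(sq^j\bmod n)$ reduces the leader test to your two descent types over $1\le i\le m-1$. Your residue identities also check out exactly: for $h\ge 1$ the stated bounds give $0<hq^i-l<\frac{ln}{q^i-1}\le\frac{n}{2}$, so $sq^i\bmod n=hq^i-l$ with no extra multiple of $n$, and for $h=-h'\le 0$ one gets $sq^i\bmod n=n-(h'q^i+l)$ with $0<h'q^i+l<\frac{ln}{q^i+1}+l<n$; the upper and lower bounds on $h$ are then literally the direct and reflected descent inequalities. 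Equivalently, the forbidden triples $(i,l,h)$ sweep out exactly the open intervals $\frac{ln}{q^i+1}<s<\frac{ln}{q^i-1}$, which is precisely the form in which the present paper applies the proposition (see the proof of Lemma~\ref{lem4}).

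One substantive remark: the obstacle you flag at the end --- that for $s$ near $n/2$ the floor choice might force $l=\frac{q^i+1}{2}$, requiring a switch of exponent or descent type --- does not actually arise; it dissolves within the same $i$ once $l$ is taken to be the multiplier of $n$ in the descent. Write the direct descent as $0<sq^i-kn<s$ and the reflected one as $0<kn-sq^i<s$; these give $\frac{kn}{q^i}<s<\frac{kn}{q^i-1}$ and $\frac{kn}{q^i+1}<s<\frac{kn}{q^i}$ respectively, and then $kn<sq^i\le\frac{nq^i}{2}$ in the first case and $kn<s(q^i+1)\le\frac{n(q^i+1)}{2}$ in the second, so in both cases $k<\frac{q^i+1}{2}$, hence $k\le\frac{q^i-1}{2}$ because $q$ is odd; also $k\ge 1$ is automatic. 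One checks that this $k$ coincides with your $\lfloor s/q^{m-i}\rfloor$ in the direct case and with your $(l+1,\,h-q^{m-i})$ shift in the reflected case, so the two sign regimes for $h$ tile the non-leaders with no gaps and no need for an auxiliary exponent. With that observation, and the routine boundary checks ($h=0$ descends via reflection since $n-(sq^i\bmod n)=l<s$; the interval endpoints $s=\frac{kn}{q^i\pm 1}$ correspond to non-strict descents and are correctly excluded), your sketch closes into a complete proof of the equivalence.
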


For any even positive integer, we have the following lemma.
%
%
\begin{lem}\label{lem00}
	Let $m$ be an even positive integer. Then  $m$ is a power of $2$, or  $m\equiv 2^{\nu_2(m)} ~({\rm mod}~ 2^{\nu_2(m)+1})$.	
\end{lem}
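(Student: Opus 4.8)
The plan is to reduce everything to the canonical factorization supplied by the $2$-adic valuation. Setting $k := \nu_2(m)$, the very definition of $\nu_2$ yields a decomposition $m = 2^k t$ with $t$ odd, and the hypothesis that $m$ is even forces $k \geq 1$. The two alternatives in the statement should then correspond exactly to whether the odd part $t$ equals $1$ or is strictly larger, so the whole argument splits cleanly into these two cases.

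First I would treat the case $t = 1$: here $m = 2^k$ is literally a power of $2$, placing us in the first alternative with nothing left to verify. Next, assuming $t \geq 3$ (equivalently, that $m$ is not a power of $2$), I would establish the congruence by computing the difference $m - 2^k = 2^k t - 2^k = 2^k (t - 1)$. Since $t$ is odd, $t - 1$ is even, whence $2^{k+1} \mid 2^k(t-1) = m - 2^k$; this is precisely $m \equiv 2^{\nu_2(m)} \pmod{2^{\nu_2(m)+1}}$, the second alternative.

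I do not anticipate a substantive obstacle. Indeed, the congruence in the second alternative holds for \emph{every} positive integer as an immediate consequence of writing $m = 2^{\nu_2(m)} \cdot (\text{odd})$; the power-of-$2$ clause merely isolates the degenerate sub-case $t = 1$. The only details warranting care are recording that $k \geq 1$ (so that the modulus $2^{k+1}$ is consistent with $m$ being even) and noting that the dichotomy on the parity of $t$ is exhaustive, after which the claim follows at once.
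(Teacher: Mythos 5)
Your proof is correct and essentially identical to the paper's: both write $m=2^{\nu_2(m)}\cdot(\text{odd part})$ and split on whether the odd part is $1$ (power of $2$) or at least $3$, in which case the odd part minus one is even and the congruence modulo $2^{\nu_2(m)+1}$ follows. Your closing observation that the congruence in fact holds for every positive integer (making the power-of-$2$ clause a redundant sub-case) is a nice extra remark but does not change the argument.
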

\begin{proof}
	Since $m$ is even,  $m=2^{\nu_2(m)}m'$ where $\nu_2(m)\geq 1$ and  $m'$ is odd. 
	Put $m'=1+2i$, where $i\geq 0$. If $i=0$, then $m=2^{\nu_2(m)}$. If $i>0$, then $m=2^{\nu_2(m)}+i\cdot2^{\nu_2(m)+1}$, i.e.,  $m\equiv 2^{\nu_2(m)}~ ({\rm mod}~ 2^{\nu_2(m)+1})$.
\end{proof}


\section{Dimensions of BCH codes of length \texorpdfstring {$n=q^m+1$}{}}
Assume that $q$ is an odd prime power and $m$ is an even positive integer. 
Denote  $\delta_{i}$ as the  $i$-th largest coset leader  modulo $n=q^m+1$.
In \cite{shi},  the authors proved that $\delta_{1}=\frac{n}{2}$  for any $m$. They also determined   $\delta_{2}$ and $\delta_{3}$ for $m=2$. 
In \cite{wang},  $\delta_{2}$ and $\delta_{3}$  were given for  $m\geq6$ and $m \equiv 2~ ({\rm mod~ } 4)$.	
In \cite{liu3}, for every even integer $m$, the authors determined the first largest and second largest coset leaders.
In order to state the main results in \cite{liu3}, we  define a function by 
\begin{eqnarray*}
	\Psi_q(x) & = & \left\{\begin{array}{ll}
		\frac{q-1}{2} \prod_{j=0}^{x}(q^{2^j}-1) , & \text { if } x\geq 0,  \\
		\frac{q-1}{2}	, & \text { if } x<0.
	\end{array}\right.
\end{eqnarray*}

\begin{thm}\cite{liu3,shi}\label{thm0}
	Let $q$ be an odd prime power. Then $\delta_{1}=\frac{n}{2}$  for any $m$, and 
	\begin{eqnarray*}
		\delta_{2} & = & \left\{\begin{array}{ll}
			\Psi_q(k-1) , & \text { if } m=2^k \text{~for~}k\geq 2, \\
			\frac{n}{q^{2^{\nu_2(m)}}+1}\Psi_q(\nu_2(m)-1), & \text { if } 
			m\equiv 2^{\nu_2(m)}~ ({\rm mod}~ 2^{\nu_2(m)+1}).
		\end{array}\right.
	\end{eqnarray*}
\end{thm}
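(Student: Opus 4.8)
The plan is to treat the two assertions separately, since $\delta_1=\tfrac{n}{2}$ is elementary while $\delta_2$ carries all of the difficulty. First I would dispose of $\delta_1$. Because $n=q^m+1$ gives $q^m\equiv -1\pmod n$, multiplication by $q^m$ sends $s$ to $-s$, so $n-s\in C_s$ for every $s$; consequently the leader of any $C_s$ is at most $\tfrac{n}{2}$. Writing $q=2t+1$ one checks $\tfrac{n}{2}\cdot q=nt+\tfrac{n}{2}\equiv \tfrac{n}{2}\pmod n$, so $C_{n/2}=\{\tfrac{n}{2}\}$ is a singleton and $\tfrac{n}{2}$ is itself a coset leader. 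Being the largest value a coset leader can take, it is the largest coset leader, i.e. $\delta_1=\tfrac{n}{2}$.

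For $\delta_2$ the strategy is to use Proposition~\ref{prop1} as the decision procedure for ``is $s$ a coset leader?'' and to locate the largest leader strictly below $\tfrac{n}{2}$. Concretely I would (a) verify that the claimed value is a coset leader, and (b) show that no coset leader lies in the open interval $(\delta_2,\tfrac{n}{2})$. For (a) I would write $\delta_2$ in its base-$q$ expansion and check, via Proposition~\ref{prop1}, that it avoids every forbidden representation $s=lq^{m-i}+h$ with $1\le l\le\tfrac{q^i-1}{2}$ and $h$ in the stated range; equivalently, that $\delta_2$ is the minimal element of its own coset under the shift-and-negate action $s\mapsto qs$, whose top base-$q$ digit wraps around with a sign change because $q^m\equiv-1$. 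For (b) I would show that every $s$ with $\delta_2<s<\tfrac{n}{2}$ does admit such a representation and is therefore excluded.

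The case division of the statement is governed by Lemma~\ref{lem00}: an even $m$ is either a power of $2$ or satisfies $m\equiv 2^{\nu_2(m)}\pmod{2^{\nu_2(m)+1}}$. I would in fact treat the two formulas as a single expression $\delta_2=\tfrac{n}{q^{2^{\nu_2(m)}}+1}\,\Psi_q(\nu_2(m)-1)$, noting that when $m=2^k$ one has $q^{2^{\nu_2(m)}}+1=q^m+1=n$, so the cofactor collapses to $1$ and recovers $\Psi_q(k-1)$. Writing $m=2^{\nu_2(m)}m'$ with $m'$ odd, the cofactor equals the alternating sum $\sum_{t=0}^{m'-1}(-1)^t q^{2^{\nu_2(m)}t}$, so $2^{\nu_2(m)}$ is the effective period of the repeating block in the base-$q$ expansion of the minimal coset representative and $\Psi_q(\nu_2(m)-1)$, of magnitude about $\tfrac12 q^{2^{\nu_2(m)}}$, is exactly the seed filling one such block. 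Both the period and the relevant coset sizes are extracted from the gcd identities of Lemmas~\ref{lem1} and~\ref{lem111} applied to $\gcd(q^{2^{\nu_2(m)}}+1,q^m\pm 1)$.

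The main obstacle is step (b): proving that $(\delta_2,\tfrac{n}{2})$ contains no coset leader. Since $\tfrac{n}{2}$ has the fully balanced base-$q$ expansion with all digits $\tfrac{q-1}{2}$ except the units digit $\tfrac{q+1}{2}$, every integer just below it is very close to balanced, and one must show that each such $s$ has a conjugate (a cyclic shift with the sign change coming from $q^m\equiv-1$) that is strictly smaller, or equivalently that it meets a forbidden form of Proposition~\ref{prop1}. The delicate point is the behaviour of $h$ at the two ends of its allowed range $-\tfrac{l(q^{m-i}-1)}{q^i+1}<h<\tfrac{l(q^{m-i}+1)}{q^i-1}$, where floors and the parity of $q$ interact; it is precisely in controlling these boundary contributions that the two cases of Lemma~\ref{lem00} must be handled separately, and this is where I expect the bulk of the technical work to lie.
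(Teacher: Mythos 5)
Your treatment of $\delta_1$ is complete and correct, and your skeleton for $\delta_2$ --- use Proposition~\ref{prop1} as the decision procedure, (a) verify the claimed value is a coset leader, (b) exclude the interval $(\delta_2,\tfrac n2)$ --- is exactly the blueprint this paper follows for its analogous new results (the paper itself only cites Theorem~\ref{thm0} from \cite{liu3,shi}; its own implementations of this blueprint are Lemmas~\ref{lem3}, \ref{lem11}, \ref{lem14} and the first halves of Lemmas~\ref{lem4}, \ref{lem6}, \ref{lem10}). Your unification of the two cases into the single expression $\tfrac{n}{q^{2^{\nu_2(m)}}+1}\Psi_q(\nu_2(m)-1)$ is also a correct observation. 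But as written the proposal has a genuine gap in step (b), and moreover you mislocate where the difficulty lies. For (b) it does not suffice to say every $s$ near $\tfrac n2$ ``meets a forbidden form''; you must exhibit the pair $(i,l)$, and there is a specific choice that makes this step short rather than delicate: take $i=2^j$ and $l=\Psi_q(j-1)$ for $j=0,1,\dots$, so that the forbidden intervals $\bigl(\frac{ln}{q^i+1},\frac{ln}{q^i-1}\bigr)$ telescope exactly, since
\begin{equation*}
\frac{\Psi_q(j)\,n}{q^{2^{j+1}}-1}=\frac{\Psi_q(j-1)(q^{2^j}-1)\,n}{(q^{2^j}-1)(q^{2^j}+1)}=\frac{\Psi_q(j-1)\,n}{q^{2^j}+1},
\end{equation*}
i.e.\ the upper endpoint of the $(j{+}1)$-st interval equals the lower endpoint of the $j$-th, and the chain starts at $\frac n2$ (the case $j=0$, $l=\frac{q-1}{2}$). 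One checks $\Psi_q(j-1)\le\frac{q^{2^j}-1}{2}$ so Proposition~\ref{prop1} applies, the intermediate endpoints are non-integers for even $m$, and the chain terminates at $j=\nu_2(m)$ (giving the integer endpoint $\frac{n}{q^{2^{\nu_2(m)}}+1}\Psi_q(\nu_2(m)-1)$) or, when $m=2^k$, at $j=k-1$, where the floor of the lower endpoint is precisely $\Psi_q(k-1)$, as in the proofs of Lemmas~\ref{lem6} and~\ref{lem10}. The anticipated ``delicate boundary behaviour of $h$'' simply does not arise with this choice.

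Conversely, step (a), which you treat as routine base-$q$ digit checking, is where the bulk of the technical work actually sits. Verifying via Proposition~\ref{prop1} that $\delta_2$ avoids \emph{every} representation $lq^{m-i}+h$ requires, for each $i\in\mathbb{Z}_m\backslash\{0\}$, determining $h(i)\equiv\delta_2\pmod{q^{m-i}}$ with $|h(i)|<\frac{q^{m-i}+1}{2}$ and then proving the inequalities $l(i)(q^{m-i}-1)<-h(i)(q^i+1)$ or $l(i)(q^{m-i}+1)<h(i)(q^i-1)$; in the paper's analogues this forces the remainder analysis $R(\cdot,q^{m-i})$, the partition of $\mathbb{Z}_m\backslash\{0\}$ into $S_1$ and $S_2$ according to whether $R\ge\frac{q^{m-i}+1}{2}$, and the case-by-case estimates of Lemmas~\ref{lem3} and~\ref{lem11}. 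None of this is sketched in your proposal, so while the plan is the right one and would go through, both halves of the $\delta_2$ argument remain to be supplied, and the one idea that trivializes (b) --- the telescoping parameter choice --- is missing.
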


Moreover, for the third largest and the fourth largest  coset leaders, a conjecture was left open in \cite{liu3}.
\begin{con}\label{con11} \cite{liu3}
	Let $q$ be an odd prime power. Then
	\begin{eqnarray*}
		\delta_{3} & = & \left\{\begin{array}{ll}
			\delta_2-2\Psi_q(k-3) , & \text { if } m=2^k \text{ for~}k\geq 2, \\
			\delta_2- 2\cdot\frac{\delta_2+\Psi_q(\nu_2(m))}{q^{2^{\nu_2(m)+1}}}, & \text { if } 
			m\equiv 2^{\nu_2(m)}~ ({\rm mod}~ 2^{\nu_2(m)+1}),
		\end{array}\right.
	\end{eqnarray*}	 
	and 
	\begin{eqnarray*}
		\delta_{4} & = & \left\{\begin{array}{ll}
			\delta_3-1, &\text {if } m=4, \\
			\delta_2- 2q^{2^{k-2}}\Psi_q(k-4), &\text {if } m=2^k \text{~for~}k\geq 3,\\
			\delta_3-2q^{2^{\nu_2(m)-1}}\Psi_q(\nu_2(m)-1), &\text{if~} m= 2^{\nu_2(m)}+ 2^{\nu_2(m)+1},\\
			\delta_3-2\Psi_q(\nu_2(m)) , &\text{if~} m= 2^{\nu_2(m)}+ t\cdot2^{\nu_2(m)+1}  \text{~for~}  t\geq2.
		\end{array}\right.
	\end{eqnarray*}		 
\end{con}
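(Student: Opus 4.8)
The plan is to build directly on Theorem \ref{thm0}, which already pins down $\delta_1=\frac n2$ and $\delta_2$, and to reduce the determination of $\delta_3$ and $\delta_4$ to two purely local statements for each case of $m$. Since $\delta_2$ is known to be the second largest coset leader, establishing the claimed $\delta_3$ means showing two things: that the proposed value is itself a coset leader, and that every integer $s$ with $\delta_3<s<\delta_2$ fails to be one. The value of $\delta_4$ is then obtained the same way, with the open interval $(\delta_4,\delta_3)$ in place of $(\delta_3,\delta_2)$, together with the trivial numeric checks that the candidate values are ordered as asserted.

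The single tool I would use throughout is Proposition \ref{prop1}. Because every candidate value and every integer in the gaps lies strictly below $\frac n2$, the size condition in that proposition is automatic, so whether $s$ is a coset leader is governed entirely by whether $s$ can be written as $lq^{m-i}+h$ with $1\leq i\leq m-1$, $1\leq l\leq \frac{q^{i}-1}{2}$, and $h$ inside the prescribed two-sided interval. The two local statements then read: (a) the candidate admits no such representation, and (b) every $s$ strictly inside the gap admits at least one. For (a) I would argue by contradiction, writing $s=lq^{m-i}+h$ as a split of the base-$q$ expansion of $s$ at position $m-i$ (high digits feeding $l$, low digits feeding $h$, with borrowing when $h<0$), and then showing that the explicit digit pattern of the candidate forces $h$ to violate its bound. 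For (b) I would, for each $s$ in the gap, choose an index $i$ adapted to the $2$-adic structure of $m$, read off $l$ from the leading block and $h$ from the remainder, and verify that this $h$ lands inside its interval.

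I would organize everything along the same dichotomy that appears in the statement, namely $m=2^{k}$ versus $m\equiv 2^{\nu_2(m)}\ (\mathrm{mod}\ 2^{\nu_2(m)+1})$, since the base-$q$ expansion of $\delta_2$, and hence the sizes of the gaps $\delta_2-\delta_3$ and $\delta_3-\delta_4$, take different closed forms built from $\Psi_q$ in the two regimes; Lemma \ref{lem00} is exactly what guarantees this split is exhaustive for even $m$. The four sub-cases for $\delta_4$ correspond to the further splitting forced by the behaviour of the digit at position $\nu_2(m)+1$, according as $m$ equals $2^{\nu_2(m)}$, equals $2^{\nu_2(m)}+2^{\nu_2(m)+1}$, or equals $2^{\nu_2(m)}+t\cdot 2^{\nu_2(m)+1}$ with $t\geq2$. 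Concretely I would first carry out the cases where $\nu_2(m)$ is smallest, that is $m=4,8$ and $m\equiv 4\ (\mathrm{mod}\ 8)$, where the relevant blocks are short enough that the digit bookkeeping can be made fully explicit, and only then attempt the general exponent.

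The hard part will be (b): producing, uniformly over an entire gap interval, a valid triple $(i,l,h)$ for each $s$, and in particular verifying the delicate inequalities $-\frac{l(q^{m-i}-1)}{q^{i}+1}<h<\frac{l(q^{m-i}+1)}{q^{i}-1}$. These bounds are tight precisely near the endpoints of the gap, so a naive choice of $i$ will push $h$ outside its interval for the borderline values of $s$, and the argument only closes because the length of the gap, encoded by the $\Psi_q$ terms, matches exactly the range of $s$ representable with the chosen $i$. Handling the boundary values of $s$ just above $\delta_3$ (and just above $\delta_4$), while simultaneously confirming in (a) that the candidate itself sits just outside the representable range, is where essentially all of the case-by-case effort will concentrate.
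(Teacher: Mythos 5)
You have proposed to prove Conjecture \ref{con11} in full, but the paper itself does not: the statement is an open conjecture from \cite{liu3}, and the paper's contribution is only a partial verification, namely $m=4$, $m=8$, and $m\geq 12$ with $m\equiv 4\ ({\rm mod}\ 8)$ (i.e.\ $\nu_2(m)=2$), together with a remark in the concluding section that each further \emph{fixed} value of $\nu_2(m)\geq 3$ (e.g.\ $m\equiv 8\ ({\rm mod}\ 16)$, $m\geq 24$) can be checked the same way, while the case of unbounded $\nu_2(m)$ is explicitly left as future work. For the cases that are treated, your plan coincides with the paper's method: your step (a) (the candidate admits no representation $lq^{m-i}+h$, argued via the base-$q$ expansion split at position $m-i$ with borrowing when $h<0$) is exactly the machinery of Lemmas \ref{lem3}, \ref{lem11} and \ref{lem14} (the $S_1/S_2$ partition according to whether $R(\Delta_3,q^{m-i})\geq \frac{q^{m-i}+1}{2}$, forcing $h=R(\Delta_3,q^{m-i})$ or $h=R(\Delta_3,q^{m-i})-q^{m-i}$ and then violating the two-sided bound), and your step (b) is Lemma \ref{lem4}. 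The genuine gap is that your closing step, ``only then attempt the general exponent,'' is precisely the unsolved part and your proposal contains no idea for it: the closed-form digit patterns of $\delta_3$ and $\delta_4$ change shape with $\nu_2(m)$ (compare Eq.~(\ref{eq1}) for $\nu_2(m)=2$ with Eq.~(\ref{eq10}) for $\nu_2(m)=3$), so the $S_1/S_2$ bookkeeping must be redone for each residue regime, and there are infinitely many such regimes; without a uniform induction on $\nu_2(m)$ (which nobody currently has) the case-by-case plan never terminates. The same objection applies to $m=2^k$: the paper settles only $k=2,3$, and your plan offers nothing new for general $k$.

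One smaller methodological point: for step (b) you propose, for each $s$ in the gap, to read off $l$ and $h$ from the digits of $s$ and then verify the ``delicate inequalities'' on $h$ directly, which you correctly flag as the hard boundary analysis. The paper sidesteps this entirely in Lemma \ref{lem4} by eliminating $h$: the representation $s=l(i)q^{m-i}+h(i)$ with admissible $h(i)$ exists if and only if some integer $l(i)$ with $1\leq l(i)\leq \frac{q^i-1}{2}$ lies in the open interval $\bigl(\frac{(q^i-1)s}{n},\frac{(q^i+1)s}{n}\bigr)$, after which one only needs the single estimate $\frac{(q^i+1)\Delta_2}{n}<\frac{q^i-1}{2}$, uniformly over the whole gap and with no per-$s$ digit surgery. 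You should adopt this reformulation; it removes most of the boundary-case effort you anticipate, though it does not touch the fundamental obstruction above.
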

\begin{rem}
	For $m= 2^{\nu_2(m)}+ 2^{\nu_2(m)+1}$, $\delta_4$ is wrong in \cite{liu3}. Here, we give the corrected form of it.
\end{rem}
In this section,  we use  a different method from  \cite{liu3} to  determine $\delta_{2}$, $\delta_{3}$ and $\delta_{4}$ for $m=4,8$, or $m\geq12$ and  $m\equiv 4~ ({\rm mod}~ 8)$, respectively, which partially proves  Conjecture \ref{con11}.  

\subsection{\texorpdfstring {$m\geq 12$}{} and  \texorpdfstring{$m\equiv 4~ ({\rm mod}~ 8)$}{}}

In this subsection, we always assume that  $m\geq 12$ and  $m\equiv 4~ ({\rm mod}~ 8)$.
Denote 	
\begin{equation*}
\begin{aligned}	 
	& \Delta_2:=\frac{(q-1)^2(q^2-1)n}{2(q^4+1)}, \Delta_3:=\frac{(q-1)^2(q^2-1)}{2(q^4+1)}(q^m-2q^{m-8}-1) \text{ and }\\
	& \Delta_4:=	
	\begin{cases}
		\Delta_3-q^2(q^2-1)(q-1)^2, & m=12, \\
		\Delta_3-(q^4-1)(q^2-1)(q-1)^2, &m>12.\\
	\end{cases}
\end{aligned}
\end{equation*}

We aim to prove that $\delta_2=\Delta_2$, $\delta_3=\Delta_3$ and  $\delta_4=\Delta_4$, and  determine the sizes of $C_{\delta_2}$, $C_{\delta_3}$ and $C_{\delta_4}$. Consequently, the dimensions of  some BCH codes of length $q^m+1$  with  large designed distances are given.

Let  $\mathbb{Z}_m\backslash\{0\}=\{1,2,\cdots,m-1\}$. 	Let  $A_j:=\{11<i\leq m-1:  i \equiv j ~({\rm mod} ~8)\}$ for $0\leq j\leq7$. Let $R(\alpha,\beta)$ be the remainder of $\alpha$ dividing by $\beta$ and $R(\alpha,\beta)\geq 0$ for  positive integers $\alpha$ and $\beta$.
As is easily checked,
\begin{equation}\label{eq1}
\begin{aligned}
	\Delta_3&=\frac{(q-1)^2(q^2-1)}{2(q^4+1)}(q^m-2q^{m-8}-1) \\
	&=\frac{n}{2}-q^{m-1}+q^{m-3}-q^{m-4}+q^{m-5}-q^{m-7}+q^{m-9}-q^{m-11}\\
	&	+ \sum^{(m-12)/8}_{j=1}  (q^{8j}-q^{8j-1}+q^{8j-3}-q^{8j-4}+q^{8j-5}-q^{8j-7})\\
	&=\frac{n}{2}-q^{m-1}+q^{m-3}-q^{m-4}+q^{m-5}-q^{m-7}+q^{m-9}-q^{m-11}+ \sum^{(m-12)/8}_{j=1} \Phi(j),
\end{aligned} 
\end{equation}where $\Phi(j)=q^{8j}-q^{8j-1}+q^{8j-3}-q^{8j-4}+q^{8j-5}-q^{8j-7}$.
Note that if $(m-12)/8=0$, i.e. $m=12$, then $\sum^{(m-12)/8}_{j=1}  \Phi(j)=0$.

The following lemma divides $\mathbb{Z}_m\backslash\{0\}$ into two subsets $S_1$ and $S_2$, based on  whether $R(\Delta_3,q^{m-i})$ $\geq\frac{q^{m-i}+1}{2}$ or not, where $i$ runs through  $\mathbb{Z}_m\backslash\{0\}$.
\begin{lem}\label{lem3}
Let 	\begin{align*}
	&S_1=\{1,2,4,7,8,11\} \cup A_0 \cup A_3\cup A_5 \cup A_6  ~ {\rm and}~ \\
	&	S_2=\{3,5,6,9,10\} \cup A_1 \cup A_2\cup A_4 \cup A_7.
\end{align*} Then $S_1$ and $S_2$ form a partition of $\mathbb{Z}_m\backslash\{0\}$. Moreover, for each $i\in S_1$, $R(\Delta_3,q^{m-i})$ $\geq \frac{q^{m-i}+1}{2}$; for each $i\in S_2$, $R(\Delta_3,q^{m-i})< \frac{q^{m-i}+1}{2}$. 
\end{lem}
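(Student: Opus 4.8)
The plan is to reduce the two-sided claim to a single question about the sign of a truncated signed sum of powers of $q$. Starting from the expansion (\ref{eq1}), I would write $\Delta_3 = \frac{n}{2} + \sum_{e=1}^{m-1} c_e q^e$, where each coefficient $c_e \in \{-1,0,1\}$ is read off directly: from the top block $-q^{m-1}+q^{m-3}-q^{m-4}+q^{m-5}-q^{m-7}+q^{m-9}-q^{m-11}$ for the exponents $m-11 \le e \le m-1$, and from the periodic blocks $\Phi(j)$ for $1 \le e \le m-12$, the latter giving $c_e$ as a function of $e \bmod 8$ alone: $+1$ for residues $0,3,5$, $-1$ for residues $1,4,7$, and $0$ for residues $2,6$. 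Next I would record the elementary congruence $\frac{n}{2} \equiv \frac{q^{m-i}+1}{2} \pmod{q^{m-i}}$, which holds because $2\cdot\frac{n}{2} = q^m+1 \equiv 1$ and $\frac{q^{m-i}+1}{2}$ is the residue of $2^{-1}$. Since every $q^e$ with $e \ge m-i$ vanishes modulo $q^{m-i}$, this yields $\Delta_3 \equiv \frac{q^{m-i}+1}{2} + P_i \pmod{q^{m-i}}$, where $P_i := \sum_{e=1}^{m-i-1} c_e q^e$ collects exactly the low-order terms.

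The second step is a size estimate. Because $|c_e| \le 1$, one has $|P_i| \le \sum_{e=1}^{m-i-1} q^e = \frac{q^{m-i}-q}{q-1}$, and for every odd prime power $q \ge 3$ this is strictly smaller than $\frac{q^{m-i}-1}{2}$. Hence $\frac{q^{m-i}+1}{2} + P_i$ already lies in the interval $(0,q^{m-i})$, so it equals $R(\Delta_3,q^{m-i})$ outright. Consequently $R(\Delta_3,q^{m-i}) \ge \frac{q^{m-i}+1}{2}$ iff $P_i \ge 0$, and $< \frac{q^{m-i}+1}{2}$ iff $P_i < 0$, so the whole lemma rests on the sign of $P_i$. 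By the same geometric domination, $\operatorname{sign}(P_i)$ is the sign of its leading term, i.e. $\operatorname{sign}(P_i) = c_{e^*(i)}$ with $e^*(i) := \max\{ e \le m-i-1 : c_e \ne 0\}$, where the empty range $i=m-1$ gives $P_{m-1}=0 \ge 0$, placing $i=m-1$ in $S_1$.

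It then remains to compute $e^*(i)$ and its sign for each $i \in \{1,\dots,m-1\}$ and match it to the stated partition. That $S_1,S_2$ partition $\mathbb{Z}_m\setminus\{0\}$ is immediate from their definitions, since the classes $A_j$ exhaust $\{12,\dots,m-1\}$ and the two explicit lists exhaust $\{1,\dots,11\}$ disjointly; the substance is the remainder dichotomy. For the small indices $1\le i\le 11$ the truncation point $m-i-1$ lies in the top block, so $e^*(i)$ is found by stepping down past the zero coefficients at $m-2,m-6,m-8,m-10$; a direct tabulation yields sign $+1$ exactly for $i\in\{1,2,4,7,8,11\}$ and $-1$ for $i\in\{3,5,6,9,10\}$, matching the explicit parts of $S_1$ and $S_2$. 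For $i>11$ the truncation point lies in the periodic region, and since $m\equiv 4\pmod 8$ we have $m-i-1\equiv 3-i\pmod 8$; feeding this into the period-$8$ pattern (again skipping residues $2,6$) gives leading sign $+1$ precisely for $i\equiv 0,3,5,6\pmod 8$ and $-1$ for $i\equiv 1,2,4,7$, which is exactly the assignment of the $A_j$ to $S_1$ and $S_2$.

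The main obstacle I anticipate is the bookkeeping near the top: correctly handling the junction between the top block and the first periodic block $\Phi((m-12)/8)$, and the way $e^*(i)$ skips the zero coefficients for the small indices. One must also treat the degenerate case $m=12$ separately, where the periodic sum in (\ref{eq1}) is empty and only the indices $1\le i\le 11$ occur. The size estimate and the residue arithmetic are then routine, the only genuine input being $q\ge 3$.
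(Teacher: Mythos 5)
Your proposal is correct and takes essentially the same route as the paper's proof: both expand $\Delta_3$ via Eq.~(\ref{eq1}) as $\frac{n}{2}$ plus a signed sum of powers of $q$, reduce modulo $q^{m-i}$ using $R(\frac{n}{2},q^{m-i})=\frac{q^{m-i}+1}{2}$, bound the truncated tail by $\frac{q^{m-i}-1}{q-1}\leq\frac{q^{m-i}-1}{2}$ so that the remainder equals $\frac{q^{m-i}+1}{2}$ plus the low-order sum, and decide the dichotomy by the sign of the leading surviving term (the paper's ``second term''), tabulated directly for $1\leq i\leq 11$ and by the residue of $i$ modulo $8$ for $i>11$. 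Your coefficient pattern, the congruence $m-i-1\equiv 3-i \pmod 8$, the resulting sign assignments, and the edge cases ($m=12$ and $i=m-1$) all agree with the paper's case analysis.
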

\begin{proof} For $0\leq \alpha,\beta\leq m-1$, $R(q^{\alpha},q^{\beta})=0$ if $\alpha\geq \beta$, and $R(q^{\alpha},q^{\beta})= q^{\alpha}$ if $\alpha< \beta$.   Since $\frac{n}{2}=\frac{q^m+1}{2}=\frac{q^{m-i}(q^i-1)+q^{m-i}+1}{2}$, we get that  $R(\frac{n}{2},q^{m-i})=\frac{q^{m-i}+1}{2}$.  As is easily checked,
\begin{equation}\label{eq0}
	\begin{aligned}
		R(\Delta_3,q^{m-i})&= R(\frac{n}{2},q^{m-i})-R(q^{m-1},q^{m-i})+R(q^{m-3},q^{m-i})-R(q^{m-4},q^{m-i})\\
		&+R(q^{m-5},q^{m-i})-R(q^{m-7},q^{m-i})+R(q^{m-9},q^{m-i})-R(q^{m-11},q^{m-i})\\
		&+ 	 \sum^{(m-12)/8}_{j=1}R(q^{8j},q^{m-i})-R(q^{8j-1},q^{m-i})+R(q^{8j-3},q^{m-i})\\
		&-R(q^{8j-4},q^{m-i})+R(q^{8j-5},q^{m-i})-R(q^{8j-7},q^{m-i}),
	\end{aligned}
\end{equation}
because the left-hand side is congruent to the right-hand side of Eq.(\ref{eq0}), and the absolute value of   sum of all the terms on the right-hand side of Eq.(\ref{eq0}), except $R(\frac{n}{2},q^{m-i})$, is less than $1+q+\cdots+q^{m-i-1}=\frac{q^{m-i}-1}{q-1}\leq \frac{q^{m-i}-1}{2}$, which implies  that the right-hand side of Eq.(\ref{eq0}) is greater than 0 and less than $q^{m-i}$.
Below  we determine which  values of $i$ result in the second term of $R(\Delta_3,q^{m-i})$ being negative, or not.
Suppose that $1\leq i\leq 11$. Then $m-11\leq m-i\leq m-1$.
When 
$ m-i=m-3,m-5,m-6,m-9$ and $m-10$, by Eq.(\ref{eq0}), the  second terms of $R(\Delta_3,q^{m-i})$ are $-q^{m-4},-q^{m-7},-q^{m-7},-q^{m-11}$ and $-q^{m-11}$, respectively. This implies that $R(\Delta_3,q^{m-i})<R(\frac{n}{2},q^{m-i})= \frac{q^{m-i}+1}{2}$ for $i\in\{3,5,6,9,10 \}$.
When $m-i=m-1,m-2,m-4,m-7,m-8$ and $m-11$, by Eq.(\ref{eq0}),  the  second terms of $R(\Delta_3,q^{m-i})$ are $q^{m-3},q^{m-3},q^{m-5},q^{m-9},q^{m-9}$ and $q^{m-12}$, respectively. This implies that $R(\Delta_3,q^{m-i})\geq  R(\frac{n}{2},q^{m-i})= \frac{q^{m-i}+1}{2}$ for $i\in\{1,2,4,7,8,11 \}$.

Suppose that $11<i\leq m-1$. Then $1\leq m-i \leq m-12$. When $m-i=8j$ for some $j\geq 1$, since   $m\geq 12$ and  $m\equiv 4~ ({\rm mod}~ 8)$, we have $i\equiv 4~ ({\rm mod}~ 8)$, i.e., $i\in A_4$. In this case, by Eq.(\ref{eq0}), the second term of $R(\Delta_3,q^{m-i})$ is $-q^{8j-1}$. This implies $R(\Delta_3,q^{m-i})<\frac{q^{m-i}+1}{2}$. With a same analysis, when $m-i\in\{8j-3,8j-5,8j-6\}$ for some $j\geq1$, i.e., $i\in A_7\cup A_1 \cup A_2$, the  second term of $R(\Delta_3,q^{m-i})$ is	 negative. This implies $R(\Delta_3,q^{m-i})< \frac{q^{m-i}+1}{2}$. When $m-i\in\{8j-1,8j-2,8j-4,8j-7\}$ for some $j\geq1$, i.e., $i\in A_5\cup A_6 \cup A_0\cup A_3$, the  second term of $R(\Delta_3,q^{m-i})$ is not negative. This implies $R(\Delta_3,q^{m-i})\geq \frac{q^{m-i}+1}{2}$. It is evident that  $S_1$ and $S_2$ form a partition of $\mathbb{Z}_m\backslash\{0\}$. This completes the proof.
\end{proof}

\begin{lem}\label{lem11}
With the notations given above. For  $i\in S_1$, let $h_1(i)=R(\Delta_3,q^{m-i})$ $-q^{m-i}$ and $l_1(i)=\frac{\Delta_3-h_1(i)}{q^{m-i}}$.  For $i\in S_2$, let $h_2(i)=R(\Delta_3,q^{m-i})$ and $l_2(i)=\frac{\Delta_3-h_2(i)}{q^{m-i}}$. Then
\begin{equation*}
	l_1(i)(q^{m-i}-1)<-h_1(i)(q^i+1) ~and~ l_2(i)(q^{m-i}+1)<h_2(i)(q^i-1).
\end{equation*}
\end{lem}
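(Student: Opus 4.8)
The plan is to read each of the two inequalities as the assertion that the chosen splitting $\Delta_3=l_j(i)q^{m-i}+h_j(i)$ puts $h_j(i)$ strictly \emph{outside} the admissible $h$-window of Proposition \ref{prop1} (below the lower endpoint when $i\in S_1$, above the upper endpoint when $i\in S_2$), and to reduce this to a transparent polynomial comparison in $q$. I would begin by eliminating $q^{m-i}$ using the defining identities $l_1(i)q^{m-i}=\Delta_3-h_1(i)$ and $l_2(i)q^{m-i}=\Delta_3-h_2(i)$ together with $n=q^m+1$. A one-line substitution then shows that for $i\in S_1$ the claim is equivalent to $\Delta_3(q^i+1)<l_1(i)\,n$, and for $i\in S_2$ it is equivalent to $\Delta_3(q^i-1)>l_2(i)\,n$; eliminating the $l_j(i)$ instead gives the equally useful forms $\Delta_3(q^{m-i}-1)<-h_1(i)\,n$ and $\Delta_3(q^{m-i}+1)<h_2(i)\,n$. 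This is the conceptual core: it converts the two mixed inequalities into clean one-sided bounds that no longer mention both $l_j$ and $h_j$ at once.

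Next I would make $h_j(i)$ and $l_j(i)$ explicit. By Eq.\ (\ref{eq0}), $R(\Delta_3,q^{m-i})$ is $\tfrac{q^{m-i}+1}{2}$ plus exactly those signed powers $\pm q^{a}$ occurring in the expansion (\ref{eq1}) of $\Delta_3$ with $a<m-i$; reading off this low part of the base-$q$ expansion of $\Delta_3$ yields $R(\Delta_3,q^{m-i})$, hence $h_j(i)$, and then $l_j(i)=\frac{\Delta_3-h_j(i)}{q^{m-i}}$, all in closed form. I would organize the computation exactly as in the proof of Lemma \ref{lem3}: the eight residue classes of $i$ modulo $8$ in the range $11<i\le m-1$ (which determine the sign and position of the leading correction term, and thus whether $i\in S_1$ or $i\in S_2$), together with the finitely many boundary values $1\le i\le 11$ treated individually.

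Finally I would substitute these closed forms into the reformulated inequalities. In each case the dominant contribution $\tfrac12 q^{m+i}$ appears identically on both sides and cancels, so the inequality collapses onto its lower-order terms; since every quantity involved is an explicit signed sum of powers of $q$ (including the geometric-type tail $\sum_{j}\Phi(j)$ from (\ref{eq1})), what remains is a polynomial inequality in $q$ to be verified for all odd prime powers $q\ge 3$. I expect the main obstacle to be precisely this tightness at leading order: because the verdict is decided entirely by the subleading terms, the truncation and summation of $\sum_{j}\Phi(j)$ must be tracked with care, and the case $m=12$ (where this tail is empty) has to be separated out. The residue-class bookkeeping modulo $8$ should collapse the many values of $i$ into a small number of genuinely distinct computations, leaving only routine---if delicate---arithmetic.
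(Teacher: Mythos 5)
Your proposal is correct and matches the paper's own proof in all essentials: the paper likewise extracts $h_1(i)$ and $h_2(i)$ in closed form from the $q$-adic expansion of $\Delta_3$ in Eqs.~(\ref{eq1}) and (\ref{eq0}), runs the same case analysis (the boundary values $1\le i\le 11$ individually, then the residue classes $A_0,\dots,A_7$ for $11<i\le m-1$, with the tail $\sum_j\Phi(j)$ empty when $m=12$), and settles each inequality by comparing the first few subleading powers of $q$ after the leading terms cancel. Your multiplied-through reformulations $\Delta_3(q^i+1)<l_1(i)\,n$ and $\Delta_3(q^i-1)>l_2(i)\,n$ are genuine equivalences and check out, but they amount only to a rescaling of the paper's chains $l_1(i)(q^{m-i}-1)<\Delta_3-h_1(i)<-h_1(i)(q^i+1)$ for $i\in S_1$ and the direct term-by-term comparison of $l_2(i)(q^{m-i}+1)$ against $h_2(i)(q^i-1)$ for $i\in S_2$, so this is the same approach rather than a different route.
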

\begin{proof}
Suppose that $i\in S_1$. Let $i=7$. Then we have
\begin{align*}
	h_1(7)&=R(\Delta_3,q^{m-7})-q^{m-7}=\frac{q^{m-7}+1}{2}+q^{m-9}-q^{m-11}+ \sum^{(m-12)/8}_{j=1} \Phi(j)-q^{m-7},\\
	\Delta_3-h_1(7)&=\frac{q^m}{2}-q^{m-1}+q^{m-3}-q^{m-4}+q^{m-5}-\frac{q^{m-7}}{2}.
\end{align*}
By calculating only the first 3 terms in $	-h_1(7)(q^7+1)$,
\begin{align*}
	-h_1(7)(q^7+1)
	&=\big(\frac{q^{m-7}-1}{2}-q^{m-9}+q^{m-11}- \sum^{(m-12)/8}_{j=1} \Phi(j)\big)(q^7+1)\\
	&=\frac{q^m-q^7}{2}-q^{m-2}+q^{m-4}+\cdots,
\end{align*}  we  find that $l_1(7)(q^{m-7}-1)=\frac{\Delta_3-h_1(7)}{q^{m-7}}(q^{m-7}-1)< \Delta_3-h_1(7)=\frac{q^m}{2}-q^{m-1}+q^{m-3}-q^{m-4}+q^{m-5}-\frac{q^{m-7}}{2}<\frac{q^m-q^7}{2}-q^{m-2}<	-h_1(7)(q^7+1)$. With a same analysis, when $i\in\{1,2,4,8,11\}$, we can prove that $l_1(i)(q^{m-i}-1)< \Delta_3-h_1(i)<	-h_1(i)(q^i+1)$.  Let $i\in A_5$. Then $m-i=8j-1<m-11$ for some $j\geq 1$. Then we have 
\begin{align*}
	h_1(i)&=R(\Delta_3,q^{m-i})-q^{m-i}
	=\frac{-q^{m-i}+1}{2}+q^{m-i-2}-q^{m-i-3}+q^{m-i-4}-q^{m-i-6}+\sum_{j=1}^{(m-i-7)/8}\Phi(j).
\end{align*} 	By calculating only the first 4 terms in $ \Delta_3-h_1(i)$ and  $-h_1(i)(q^i+1)$,
\begin{align*}
	\Delta_3-h_1(i)&=\frac{n}{2}-q^{m-1}+q^{m-3}-q^{m-4}+\cdots,\\
	-h_1(i)(q^i+1)&=\frac{q^m-q^i}{2}-q^{m-2}+q^{m-3}-q^{m-4}+\cdots,
\end{align*} we can find that $l_1(i)(q^{m-i}-1)<\Delta_3-h_1(i)<   \frac{n}{2}-q^{m-1}+q^{m-3} < \frac{q^m-q^i}{2}-q^{m-2} <-h_1(i)(q^i+1)$. With a same analysis, when $i\in  A_0\cup A_3 \cup A_6$, we can prove that  $l_1(i)(q^{m-i}-1)<\Delta_3-h_1(i)<-h_1(i)(q^i+1)$.

Suppose that $i\in S_2$. Let $i=3$. Then we have 
\begin{align*}
	h_2(3)&=R(\Delta_3,q^{m-3})
	=\frac{q^{m-3}+1}{2}-q^{m-4}+q^{m-5}-q^{m-7}+q^{m-9}-q^{m-11}
	+ \sum^{(m-12)/8}_{j=1} \Phi(j),\\
	l_2(3) &=\frac{\Delta_3-h_2(3)}{q^{m-3}}=\frac{\frac{q^m}{2}-q^{m-1}+\frac{q^{m-3}}{2}}{q^{m-3}}=\frac{q^3+1}{2}-q^2,\\
	l_2(3)(q^{m-3}+1)&=\frac{q^m+q^{m-3}}{2}-q^{m-1}+\frac{q^3+1}{2}-q^2.
\end{align*} By calculating only the first 5 terms in $	h_2(3)(q^3-1)$,
\begin{align*}
	h_2(3)(q^3-1)&=\frac{q^m+q^3}{2}-q^{m-1}+q^{m-2}-q^{m-4}+q^{m-6}-\cdots,
\end{align*}	 we can find that $l_2(3)(q^{m-3}+1) <\frac{q^m+q^3}{2}-q^{m-1}+q^{m-2}-q^{m-4}<h_2(3)(q^3-1)$. With a same analysis, when $i\in\{5,6,9,10\}$, we can prove that $l_2(i)(q^{m-i}+1)<h_2(i)(q^i-1)$.  Let $i\in A_7$. Then $m-i=8j-3<m-11$ for some $j\geq 1$. Then we have 
\begin{align*}
	h_2(i)&=\frac{q^{m-i}+1}{2}-q^{m-i-1}+q^{m-i-2}-q^{m-i-4}+\sum_{j=1}^{(m-i-5)/8}\Phi(j),\\
	l_2(i)&=\frac{\Delta_3-h_2(i)}{q^{m-i}}=\bigg( \frac{q^m+q^{m-i}}{2}-q^{m-1}+q^{m-3}-q^{m-4}+q^{m-5}-q^{m-7}+q^{m-9}-q^{m-11}\\
&	+ \sum^{(m-12)/8}_{j=(m-i-5)/8+1} \Phi(j) +q^{m-i+3}-q^{m-i+2} \bigg)/q^{m-i}\\
	&=\frac{q^i+1}{2}-q^{i-1}+q^{i-3}-q^{i-4}+q^{i-5}-q^{i-7}+q^{i-9}-q^{i-11} + \sum^{(m-12)/8}_{j=(m-i-5)/8+1} \frac{\Phi(j)}{q^{m-i}}+q^3+q^2,
\end{align*} where $\sum^{(m-12)/8}_{j=(m-i-5)/8+1} \frac{\Phi(j)}{q^{m-i}}=0$ if $(m-12)/8<(m-i-5)/8+1$. 	By calculating only the first 4 terms in $l_2(i)(q^{m-i}+1)$ and   $ h_2(i)(q^i-1)$,
\begin{align*}
	l_2(i)(q^{m-i}+1)&=\frac{q^m+q^{m-i}}{2}-q^{m-1}+q^{m-3}-q^{m-4}+\cdots,\\
	h_2(i)(q^i-1)&=\frac{q^m+q^i}{2}-q^{m-1}+q^{m-2}-q^{m-4}+\cdots,
\end{align*} we  find that $l_2(i)(q^{m-i}+1)< \frac{q^m+q^{m-i}}{2}-q^{m-1}+q^{m-3}<  \frac{q^m+q^i}{2}-q^{m-1}+q^{m-2}-q^{m-4}<h_2(i)(q^i-1)$.
When $i\in A_1\cup A_2\cup A_4$, we can prove that $l_2(i)(q^{m-i}+1)< h_2(i)(q^i-1)$. This completes the proof.
\end{proof}

\begin{lem}\label{lem14}
Let  $q$ be an odd prime power, $m\geq12$ and $m\equiv 4~ ({\rm mod}~ 8)$ and $n=q^m+1$. Then $\Delta_3=\frac{(q-1)^2(q^2-1)}{2(q^4+1)}(q^m-2q^{m-8}-1)$ is a coset leader.
\end{lem}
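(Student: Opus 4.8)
The plan is to verify the criterion of Proposition \ref{prop1} directly. That proposition tells us that $\Delta_3$, which lies in $[0,q^m]$ since $0<\Delta_3<\frac n2<q^m$, is a coset leader modulo $n=q^m+1$ exactly when two conditions hold: first, $\Delta_3\leq \frac n2$; and second, $\Delta_3\neq lq^{m-i}+h$ for every $i$ with $1\leq i\leq m-1$ and every pair of integers $(l,h)$ subject to $1\leq l\leq \frac{q^i-1}{2}$ and $-\frac{l(q^{m-i}-1)}{q^i+1}<h<\frac{l(q^{m-i}+1)}{q^i-1}$. The first condition is immediate from the expansion in Eq.(\ref{eq1}): there $\Delta_3$ is written as $\frac n2$ minus a strictly positive quantity (dominated by the term $q^{m-1}$), so $\Delta_3<\frac n2$. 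Hence the entire burden falls on the second condition.

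The key reduction I would make is that, for each fixed $i$, there is at most one pair $(l,h)$ that could possibly realize $\Delta_3=lq^{m-i}+h$ inside the admissible window. Indeed, since $l\leq \frac{q^i-1}{2}$, the upper endpoint of the $h$-interval satisfies $\frac{l(q^{m-i}+1)}{q^i-1}\leq \frac{q^{m-i}+1}{2}$ and the lower endpoint satisfies $-\frac{l(q^{m-i}-1)}{q^i+1}> -\frac{q^{m-i}-1}{2}$; thus every admissible $h$ is forced into the balanced interval of length $q^{m-i}$ centred at $0$. Because $q^{m-i}$ is odd, each residue class modulo $q^{m-i}$ meets this balanced interval exactly once, so $h$ must equal the balanced remainder of $\Delta_3$ modulo $q^{m-i}$ and $l$ the associated quotient. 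This is precisely the dichotomy organised by Lemma \ref{lem3}: when $R(\Delta_3,q^{m-i})<\frac{q^{m-i}+1}{2}$ (that is, $i\in S_2$) the balanced remainder is the non-negative $h_2(i)$ with quotient $l_2(i)$, whereas when $R(\Delta_3,q^{m-i})\geq \frac{q^{m-i}+1}{2}$ (that is, $i\in S_1$) it is the negative $h_1(i)=R(\Delta_3,q^{m-i})-q^{m-i}$ with quotient $l_1(i)=\frac{\Delta_3-h_1(i)}{q^{m-i}}$.

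It then remains only to rule out this single candidate for every $i$, and here Lemma \ref{lem11} does the work. For $i\in S_1$ we have $h_1(i)<0$, so the upper bound on $h$ is automatically satisfied; the inequality $l_1(i)(q^{m-i}-1)<-h_1(i)(q^i+1)$ of Lemma \ref{lem11} rearranges to $h_1(i)<-\frac{l_1(i)(q^{m-i}-1)}{q^i+1}$, which violates the lower bound, so the candidate is inadmissible. Symmetrically, for $i\in S_2$ we have $h_2(i)\geq 0$, so the lower bound is automatic, while $l_2(i)(q^{m-i}+1)<h_2(i)(q^i-1)$ rearranges to $h_2(i)>\frac{l_2(i)(q^{m-i}+1)}{q^i-1}$, violating the upper bound. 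Since the unique balanced representation fails the range constraint for every $i$, and any other representation has $h$ outside the balanced interval (hence outside the admissible window), no admissible $(l,h)$ exists for any $i$. Thus the second condition of Proposition \ref{prop1} holds and $\Delta_3$ is a coset leader.

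The genuinely technical part of this argument is entirely front-loaded into Lemmas \ref{lem3} and \ref{lem11}, which already carry out the delicate residue-modulo-$8$ bookkeeping and the term-by-term estimates needed to pin down $R(\Delta_3,q^{m-i})$ and to compare $l(q^{m-i}\mp 1)$ with the relevant multiples of $q^i\pm 1$. Within the proof of Lemma \ref{lem14} itself, the only point demanding care is the uniqueness reduction, namely verifying that the admissible $h$-window always sits inside the balanced interval of length $q^{m-i}$; this is what guarantees that inspecting the single balanced representation per $i$ suffices. Once that observation is in place, the conclusion is a direct assembly of the two preceding lemmas together with Proposition \ref{prop1}.
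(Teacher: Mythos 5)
Your proof is correct and takes essentially the same route as the paper's: both arguments reduce, via the bound $|h|<\frac{q^{m-i}+1}{2}$ forced by $1\leq l\leq\frac{q^{i}-1}{2}$, to the unique balanced representative of $\Delta_3$ modulo $q^{m-i}$, identify that representative through the partition of Lemma \ref{lem3}, and then rule it out with the inequalities of Lemma \ref{lem11}. Your only additions---explicitly checking $\Delta_3<\frac{n}{2}$ and spelling out why exactly one candidate pair $(l,h)$ exists per $i$---are presentational refinements of the paper's argument rather than a different approach.
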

\begin{proof}
From Proposition \ref{prop1}, $\Delta_3$ is a coset leader if and only if for each $i\in\mathbb{Z}_m\backslash\{0\}$, $l(i)$ and $h(i)$ are integers satisfying   $ 1\leq l(i) \leq \frac{q^{i}-1}{2}$  and $-\frac{l\left(q^{m-i}-1\right)}{q^{i}+1}<h(i)<\frac{l\left(q^{m-i}+1\right)}{q^{i}-1} $, then 
$	\Delta_3\neq l(i)q^{m-i} + h(i)$.
It suffices to  prove that if there exist  $i\in \mathbb{Z}_m\backslash\{0\}$,  $l(i)$ and $h(i)$ satisfy   $1\leq l(i) \leq \frac{q^{i}-1}{2}$, $-\frac{l\left(q^{m-i}-1\right)}{q^{i}+1}<h(i)<\frac{l\left(q^{m-i}+1\right)}{q^{i}-1}$ and 
\begin{equation}\label{equ1}
	\Delta_3= l(i)q^{m-i} + h(i),
\end{equation}
then $-\frac{l\left(q^{m-i}-1\right)}{q^{i}+1}>h(i)$ or  $h(i)>\frac{l\left(q^{m-i}+1\right)}{q^{i}-1} $.  

By the range of the values of $l(i)$ and $h(i)$, we  get $|h(i)|< \frac{q^{m-i}+1}{2}$.		From Eq.   (\ref{equ1}), one has $	h(i) \equiv \Delta_3  ~({\rm mod} ~q^{m-i})$, which is equivalent to 
\begin{equation}\label{eq2}
	\begin{aligned}
		h(i) &\equiv R(\Delta_3,q^{m-i})  ~({\rm mod} ~q^{m-i}). 
	\end{aligned} 
\end{equation} 
By Lemma \ref{lem3}, $\mathbb{Z}_m\backslash\{0\}$ can be divided into two disjoint subsets $S_1$ and $S_2$. Suppose that   $i\in S_1$, since  $R(\Delta_3,q^{m-i})\geq \frac{q^{m-i}+1}{2}$, $|h(i)|< \frac{q^{m-i}+1}{2}$ and (\ref{eq2}), we get $h(i)= R(\Delta_3,q^{m-i})-q^{m-i}$. Consequently, by Lemma \ref{lem11}, one has  $	 l(i)(q^{m-i}-1)<-h(i)(q^i+1)$, i.e., $-\frac{l(i)\left(q^{m-i}-1\right)}{q^{i}+1}>h(i)$.
Suppose that   $i\in S_2$, since  $R(\Delta_3,q^{m-i})<  \frac{q^{m-i}+1}{2}$, $|h(i)|< \frac{q^{m-i}+1}{2}$ and (\ref{eq2}), we get $h(i)= R(\Delta_3,q^{m-i})$. Consequently, by Lemma \ref{lem11}, one has $	 l(i)(q^{m-i}+1)<h(i)(q^i-1)$, i.e., $\frac{l(i)\left(q^{m-i}+1\right)}{q^{i}-1}<h(i)$. This completes the proof.
\end{proof}

Next, we prove that $\Delta_2$ and $\Delta_4$ are coset leaders.    As is easily checked, we have
\begin{equation*}
\begin{aligned}	 
	\Delta_2&=\frac{(q-1)^2(q^2-1)n}{2(q^4+1)}\\
	&=\frac{q^m+1}{2}-q^{m-1}+q^{m-3}
	+ \sum^{(m-12)/8}_{j=1}  -(q^{8j}-q^{8j-1}+q^{8j-3}-q^{8j-4}+q^{8j-5}-q^{8j-7})-1
\end{aligned}
\end{equation*}
and when $m=12$, we have 
\begin{align*}
\Delta_4&=\Delta_3-q^2(q^2-1)(q-1)^2
=\frac{q^m+1}{2}-q^{11}+q^9-q^8+q^7-q^6+q^5-q^3+q^2-q;
\end{align*}
when $m\geq20$, we have 
\begin{align*}
\Delta_4&=\Delta_3-(q^4-1)(q^2-1)(q-1)^2\\
&=\frac{q^m+1}{2}-q^{m-1}+q^{m-3}-q^{m-4}+q^{m-5}-q^{m-7}+q^{m-9}-q^{m-11}\\
&+\sum^{(m-12)/8}_{j=2}  (q^{8j}-q^{8j-1}+q^{8j-3}-q^{8j-4}+q^{8j-5}-q^{8j-7})\\
&+ q^7-q^5+q^4-q^3+q-1.
\end{align*}
The proofs of following, namely, Lemmas 10-15, are essentially the same as that given in the proofs of Lemmas \ref{lem3}, \ref{lem11} and \ref{lem14}, respectively, and are therefore omitted.

\begin{lem}
Let  $A_j:=\{3<i\leq m-1:  i \equiv j ~({\rm mod} ~8)\}$ for $0\leq j\leq7$.	Let 	\begin{align*}
	&S_1=\{1,2\}\cup A_1 \cup A_2\cup A_4 \cup A_7   ~and~ \\
	&	S_2=\{3\} \cup A_0 \cup A_3\cup A_5 \cup A_6.
\end{align*} Then $S_1$ and $S_2$ form a partition of $\mathbb{Z}_m\backslash\{0\}$. Moreover, for each $i\in S_1$, $R(\Delta_2,q^{m-i})$ $\geq \frac{q^{m-i}+1}{2}$; for each $i\in S_2$, $R(\Delta_2,q^{m-i})< \frac{q^{m-i}+1}{2}$. 
\end{lem}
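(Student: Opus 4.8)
The plan is to carry over the proof of Lemma~\ref{lem3} almost verbatim, the only change being that the expansion of $\Delta_3$ is replaced by that of $\Delta_2$. First I would reuse the two elementary facts from that proof: for $0\le\alpha,\beta\le m-1$ one has $R(q^{\alpha},q^{\beta})=0$ if $\alpha\ge\beta$ and $R(q^{\alpha},q^{\beta})=q^{\alpha}$ if $\alpha<\beta$, and from $\frac{n}{2}=\frac{q^{m-i}(q^{i}-1)+q^{m-i}+1}{2}$ one gets $R(\frac{n}{2},q^{m-i})=\frac{q^{m-i}+1}{2}$. The substantive input is the signed $q$-adic form of $\Delta_2$. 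Starting from $\Delta_2=\frac{(q-1)^2(q^2-1)n}{2(q^4+1)}$ one checks that $\frac{n}{2}-\Delta_2=(q^3-q+1)\cdot\frac{q^m+1}{q^4+1}$; since $m\equiv4\pmod{8}$ the second factor equals the alternating sum $q^{m-4}-q^{m-8}+\cdots-q^{4}+1$, and multiplying out shows that the terms of $\Delta_2-\frac{n}{2}$ sit exactly at the exponents $\equiv0,1,3,4,5,7\pmod{8}$ (those $\equiv2,6$ being absent), with a positive sign at each exponent $\equiv1,4,7\pmod{8}$ and a negative sign at each exponent $\equiv0,3,5\pmod{8}$.

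Exactly as in Lemma~\ref{lem3}, I would then express $R(\Delta_2,q^{m-i})$ as the signed sum of the remainders of the individual powers occurring in $\Delta_2$. This termwise identity is valid because its right-hand side is congruent to $\Delta_2$ modulo $q^{m-i}$ and, apart from the term $\frac{q^{m-i}+1}{2}$ coming from $\frac{n}{2}$, has absolute value at most $1+q+\cdots+q^{m-i-1}=\frac{q^{m-i}-1}{q-1}\le\frac{q^{m-i}-1}{2}$, so the whole quantity lies strictly between $0$ and $q^{m-i}$ and is the true remainder. Consequently $R(\Delta_2,q^{m-i})-\frac{q^{m-i}+1}{2}$ is the sum of those terms of $\Delta_2-\frac{n}{2}$ whose exponent is $<m-i$, and since $\sum_{b<a}q^{b}<q^{a}$ its sign equals the sign of its leading term, i.e. of the term of largest exponent strictly below $m-i$. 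Hence $i\in S_1$ precisely when this leading term is positive, and $i\in S_2$ precisely when it is negative.

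It then remains to read off that sign. For $i\in\{1,2\}$ the largest exponent occurring below $m-i$ is $m-3$, where the term is positive since $m-3\equiv1\pmod{8}$, so $\{1,2\}\subset S_1$; for $i=3$ the largest such exponent is $m-4$, where the term is negative since $m-4\equiv0\pmod{8}$, so $3\in S_2$. For $4\le i\le m-1$ I would use $m-i\equiv4-i\pmod{8}$ and descend from $m-i-1$ to the first occurring exponent, skipping the absent residues $2$ and $6$; this yields a positive leading term exactly when $i\equiv1,2,4,7\pmod{8}$ and a negative one exactly when $i\equiv0,3,5,6\pmod{8}$, which is precisely $A_1\cup A_2\cup A_4\cup A_7\subset S_1$ and $A_0\cup A_3\cup A_5\cup A_6\subset S_2$. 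Since these two sign conditions are mutually exclusive and together exhaust $\mathbb{Z}_m\backslash\{0\}$, the sets $S_1$ and $S_2$ form a partition.

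The step I expect to require the most care is establishing the signed expansion of $\Delta_2$ and confirming that its sign pattern really is periodic modulo $8$ throughout, from the top terms $-q^{m-1}+q^{m-3}$ down to the constant $-1$; only once this density is in hand does the residue rule apply uniformly across all $i$. The genuinely delicate points are the two ends of the range: at the top one must check that $-q^{m-1}$ and $+q^{m-3}$ obey the same residue pattern and dispose of $i\in\{1,2,3\}$ directly, while at the bottom (large $i$) the leading term is one of the lowest powers and the constant $-1$ takes over. These are the same boundary verifications made explicitly in the proof of Lemma~\ref{lem3}, and because $\Delta_2$ has fewer isolated high-order terms than $\Delta_3$, the case analysis is correspondingly shorter.
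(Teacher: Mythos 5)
Your proof is correct and follows essentially the same route as the paper, which omits the proof of this lemma precisely because it is the argument of Lemma \ref{lem3} with the signed $q$-adic expansion of $\Delta_3$ replaced by that of $\Delta_2$: the termwise remainder identity, the tail bound $1+q+\cdots+q^{m-i-1}=\frac{q^{m-i}-1}{q-1}\le\frac{q^{m-i}-1}{2}$, and the rule that the sign of $R(\Delta_2,q^{m-i})-\frac{q^{m-i}+1}{2}$ is that of the leading surviving term are exactly the steps of that proof. One point in your favour: your expansion, i.e.\ $\frac{n}{2}-\Delta_2=(q^3-q+1)\cdot\frac{q^m+1}{q^4+1}$ with terms of $\Delta_2-\frac{n}{2}$ at every exponent $\not\equiv 2,6 \pmod 8$ (positive at residues $1,4,7$, negative at $0,3,5$), is the correct one — the expansion of $\Delta_2$ displayed in the paper just before this lemma is misprinted (it drops the blocks $-q^{m-4}+q^{m-5}-q^{m-7}+q^{m-8}-\cdots$; for $q=3$, $m=12$ it would give $108256$ rather than the true $\Delta_2=103696$), and it is your version, not the printed one, that produces the sets $S_1$ and $S_2$ of the statement.
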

\begin{lem}
With the notations given above. For  $i\in S_1$, let $h_1(i)=R(\Delta_2,q^{m-i})$ $-q^{m-i}$ and $l_1(i)=\frac{\Delta_2-h_1(i)}{q^{m-i}}$.  For $i\in S_2$, let $h_2(i)=R(\Delta_2,q^{m-i})$ and $l_2(i)=\frac{\Delta_2-h_2(i)}{q^{m-i}}$. Then
\begin{equation*}
	l_1(i)(q^{m-i}-1)<-h_1(i)(q^i+1) ~and~ l_2(i)(q^{m-i}+1)<h_2(i)(q^i-1).
\end{equation*}
\end{lem}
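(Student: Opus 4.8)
The plan is to mirror the proof of Lemma~\ref{lem11}, replacing $\Delta_3$ by $\Delta_2$ and using the base-$q$ expansion
\[
\Delta_2=\frac{q^m+1}{2}-q^{m-1}+q^{m-3}-\sum_{j=1}^{(m-12)/8}\Phi(j)-1
\]
recorded just above the statement, together with the partition $\mathbb{Z}_m\backslash\{0\}=S_1\cup S_2$ and the sign information $R(\Delta_2,q^{m-i})\geq\frac{q^{m-i}+1}{2}$ for $i\in S_1$, $R(\Delta_2,q^{m-i})<\frac{q^{m-i}+1}{2}$ for $i\in S_2$ supplied by the preceding lemma. In every case $l_1(i)$ (resp. $l_2(i)$) is a positive integer with $l_1(i)q^{m-i}=\Delta_2-h_1(i)$ (resp. $l_2(i)q^{m-i}=\Delta_2-h_2(i)$), so the whole argument reduces to comparing leading powers of $q$ on the two sides of each displayed inequality.

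First I would treat $i\in S_1$. Here $h_1(i)=R(\Delta_2,q^{m-i})-q^{m-i}<0$, and since $l_1(i)>0$ one has the free inequality
\[
l_1(i)(q^{m-i}-1)=(\Delta_2-h_1(i))-l_1(i)<\Delta_2-h_1(i).
\]
Thus it suffices to prove $\Delta_2-h_1(i)\leq-h_1(i)(q^i+1)$. I would verify this by writing $h_1(i)$ as an explicit signed sum of powers of $q$ obtained by reducing the expansion of $\Delta_2$ modulo $q^{m-i}$ (exactly as was done for $\Delta_3$), then computing the top terms of $\Delta_2-h_1(i)$ and of $-h_1(i)(q^i+1)$ and observing that the leading term of the right-hand side, of the form $\frac{q^m-q^i}{2}$, already dominates. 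The small indices $i\in\{1,2\}$ are checked by hand, and the residue classes $A_1,A_2,A_4,A_7$ are handled uniformly by the same leading-term comparison, the location of the top terms shifting with $m-i$.

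For $i\in S_2$ the remainder $h_2(i)=R(\Delta_2,q^{m-i})$ is the ordinary positive remainder, and now $l_2(i)(q^{m-i}+1)=(\Delta_2-h_2(i))+l_2(i)$ \emph{exceeds} $\Delta_2-h_2(i)$, so the free inequality used for $S_1$ is unavailable. Instead I would expand both $l_2(i)(q^{m-i}+1)$ and $h_2(i)(q^i-1)$ fully as polynomials in $q$ and compare their leading terms directly, exactly as in the $i=3$ and $A_7$ computations for $\Delta_3$; the excess $l_2(i)$ is of order $q^i$, negligible against the order-$q^{m-2}$ discrepancy that forces $l_2(i)(q^{m-i}+1)<h_2(i)(q^i-1)$. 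The isolated index $i=3$ is done explicitly, and the classes $A_0,A_3,A_5,A_6$ follow by the uniform comparison.

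The main obstacle is purely organizational: correctly reducing $\Delta_2$ modulo each $q^{m-i}$ so that $h_1(i)$ and $h_2(i)$ carry the right signs and the right top terms in every residue class $A_j$, and then confirming that the advertised leading term of one side \emph{strictly} dominates the other rather than merely matching it. Because the expansion of $\Delta_2$ is shorter than that of $\Delta_3$—it carries only the three leading terms $\tfrac{q^m+1}{2},-q^{m-1},q^{m-3}$ before the periodic block $-\sum\Phi(j)$ and the trailing $-1$—each individual comparison is in fact lighter than its $\Delta_3$ counterpart, which is why the authors defer it; the only real care needed is the bookkeeping of the boundary terms near $j=1$ and $j=(m-12)/8$ and the check of strictness in the borderline classes.
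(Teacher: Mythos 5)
Your overall template --- mirror the proof of Lemma \ref{lem11} with $\Delta_2$ in place of $\Delta_3$ --- is indeed what the authors intend by omitting this proof, but two concrete problems sink the execution. First, you take the paper's displayed expansion of $\Delta_2$ at face value, and that display contains a typo: the summation limit must be $(m-4)/8$, not $(m-12)/8$. As printed, the right-hand side does not even equal $\Delta_2$ (for $q=3$, $m=12$ it evaluates to $108256$, while $\Delta_2=103696$), and reducing it modulo $q^{m-i}$ produces remainders that contradict the very partition lemma you invoke: at $i=4\in A_4\subseteq S_1$ the printed expansion's largest term below $q^{m-4}$ would be $-q^{m-12}$, forcing $R(\Delta_2,q^{m-4})<\frac{q^{m-4}+1}{2}$, i.e.\ $S_2$-type behavior. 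With the corrected expansion
\begin{equation*}
\Delta_2=\frac{q^m+1}{2}-q^{m-1}+q^{m-3}-\sum_{j=1}^{(m-4)/8}\Phi(j)-1,
\end{equation*}
the periodic block reaches up to $q^{m-4}$, immediately below the term $q^{m-3}$, so the bookkeeping is exactly as heavy as for $\Delta_3$; your closing claim that each comparison is ``lighter'' because the expansion is ``shorter'' is an artifact of the typo.

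Second, and decisively: the strict inequality you set out to prove is false at the two indices $i=4$ and $i=m-4$, where equality holds identically. Put $K=\frac{(q-1)^2(q^2-1)}{2}$, so that $\Delta_2=K\cdot\frac{q^m+1}{q^4+1}$. For $i=4$ one computes $h_1(4)=-K\cdot\frac{q^{m-4}-1}{q^4+1}$ and $l_1(4)=K$, whence
\begin{equation*}
l_1(4)\left(q^{m-4}-1\right)=K\left(q^{m-4}-1\right)=-h_1(4)\left(q^{4}+1\right)
\end{equation*}
(numerically, for $q=3$, $m=12$: $l_1=16$, $h_1=-1280$, and both sides equal $104960$); dually, for $i=m-4\in A_0\subseteq S_2$ one gets $h_2=K$, $l_2=K\cdot\frac{q^{m-4}-1}{q^4+1}$, and $l_2(q^4+1)=h_2(q^{m-4}-1)$. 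In particular your $S_1$ reduction --- discard $l_1(i)$ via $l_1(i)(q^{m-i}-1)<\Delta_2-h_1(i)$ and then prove $\Delta_2-h_1(i)\le -h_1(i)(q^i+1)$ --- cannot work at $i=4$: there $\Delta_2-h_1(4)=Kq^{m-4}$ exceeds $-h_1(4)(q^4+1)=K(q^{m-4}-1)$ by exactly the discarded $l_1(4)$, and no leading-term domination argument can restore strictness when the two sides coincide to all orders. The lemma must be weakened to ``$\le$'' at these two indices (which still suffices for the downstream claim that $\Delta_2$ is a coset leader, because Proposition \ref{prop1} only excludes the open range $-\frac{l(q^{m-i}-1)}{q^{i}+1}<h<\frac{l(q^{m-i}+1)}{q^{i}-1}$), or $i=4$ and $i=m-4$ must be treated separately; your proposal does neither, and as written it would ``verify'' a strict dominance that does not hold.
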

\begin{lem} \label{lem15}
Let  $q$ be an odd prime power, $m\geq12$ and $m\equiv 4~ ({\rm mod}~ 8)$ and $n=q^m+1$. Then $\Delta_2$ is a coset leader.
\end{lem}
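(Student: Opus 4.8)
The plan is to follow, almost verbatim, the argument used for $\Delta_3$ in the proof of Lemma~\ref{lem14}, replacing the two supporting results for $\Delta_3$ (Lemmas~\ref{lem3} and~\ref{lem11}) by their $\Delta_2$-analogues, namely the partition lemma and the inequality lemma for $\Delta_2$ established immediately above. Concretely, I would invoke Proposition~\ref{prop1}: the integer $\Delta_2$ is a coset leader modulo $n=q^m+1$ precisely when $\Delta_2\le n/2$ and $\Delta_2\neq l q^{m-i}+h$ for every $i$ with $1\le i\le m-1$ and every pair $(l,h)$ in the admissible ranges $1\le l\le\frac{q^i-1}{2}$ and $-\frac{l(q^{m-i}-1)}{q^i+1}<h<\frac{l(q^{m-i}+1)}{q^i-1}$.

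First I would dispose of the size condition. From $\Delta_2=\frac{(q-1)^2(q^2-1)}{2(q^4+1)}n$ and the identity $(q-1)^2(q^2-1)=q^4-2q^3+2q-1<q^4+1$, we obtain $\Delta_2<n/2$, so the hypothesis $\Delta_2\le n/2$ of Proposition~\ref{prop1} holds and $\Delta_2$ lies in the range $0\le\Delta_2\le q^m$. The base-$q$ expansion of $\Delta_2$ recorded above also makes it transparent that $\Delta_2$ is a positive integer, since $\frac{q^m+1}{2}$ is an integer for odd $q$.

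The core of the argument is a contradiction. Suppose $\Delta_2=l(i)q^{m-i}+h(i)$ for some admissible $i,l(i),h(i)$. The admissible ranges immediately force $|h(i)|<\frac{q^{m-i}+1}{2}$, while the equation gives the congruence $h(i)\equiv R(\Delta_2,q^{m-i})\pmod{q^{m-i}}$. Now I would split on the partition $\mathbb{Z}_m\backslash\{0\}=S_1\cup S_2$ supplied by the partition lemma for $\Delta_2$. For $i\in S_1$ we have $R(\Delta_2,q^{m-i})\ge\frac{q^{m-i}+1}{2}$, which together with $|h(i)|<\frac{q^{m-i}+1}{2}$ and the congruence pins $h(i)$ down exactly as $h(i)=R(\Delta_2,q^{m-i})-q^{m-i}=h_1(i)$; for $i\in S_2$ we have $R(\Delta_2,q^{m-i})<\frac{q^{m-i}+1}{2}$, forcing $h(i)=R(\Delta_2,q^{m-i})=h_2(i)$. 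Then $l(i)=\frac{\Delta_2-h(i)}{q^{m-i}}$ coincides with $l_1(i)$ or $l_2(i)$ accordingly, and the inequality lemma for $\Delta_2$ applies: on $S_1$ it yields $l_1(i)(q^{m-i}-1)<-h_1(i)(q^i+1)$, i.e. $h(i)<-\frac{l(i)(q^{m-i}-1)}{q^i+1}$, and on $S_2$ it yields $l_2(i)(q^{m-i}+1)<h_2(i)(q^i-1)$, i.e. $h(i)>\frac{l(i)(q^{m-i}+1)}{q^i-1}$. In either case $h(i)$ violates its admissible range, the desired contradiction; hence no admissible representation exists and $\Delta_2$ is a coset leader.

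As for the difficulty: within Lemma~\ref{lem15} itself there is essentially no obstacle, as it is a clean deduction once Proposition~\ref{prop1} and the two $\Delta_2$-lemmas are in hand. The one delicate point is purely logical, namely that the bound $|h(i)|<\frac{q^{m-i}+1}{2}$ combined with the sharp threshold $\frac{q^{m-i}+1}{2}$ in the partition lemma determines $h(i)$ uniquely from its residue, so that $l(i)$ and $h(i)$ must coincide with the explicit $l_j(i),h_j(i)$ to which the inequality lemma applies. The real work is hidden in the two supporting lemmas, which are granted here: the partition lemma requires computing $R(\Delta_2,q^{m-i})$ for every $i$ from the base-$q$ expansion of $\Delta_2$ and comparing it to $\frac{q^{m-i}+1}{2}$, while the inequality lemma requires a leading-term comparison of $l_j(i)(q^{m-i}\mp1)$ against $\mp h_j(i)(q^i\pm1)$. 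These mirror the case analyses of Lemmas~\ref{lem3} and~\ref{lem11}, but are in fact simpler because the expansion of $\Delta_2$ is shorter than that of $\Delta_3$.
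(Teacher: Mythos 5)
Your proposal is correct and takes essentially the same approach as the paper: the paper omits the proof of Lemma~\ref{lem15} precisely because it is the argument of Lemma~\ref{lem14} repeated with the $\Delta_2$-analogues of Lemmas~\ref{lem3} and~\ref{lem11}, which is exactly what you do --- Proposition~\ref{prop1}, the bound $|h(i)|<\frac{q^{m-i}+1}{2}$, the congruence $h(i)\equiv R(\Delta_2,q^{m-i})\pmod{q^{m-i}}$ pinning $h(i)$ to $h_1(i)$ or $h_2(i)$ on the partition $S_1\cup S_2$, and the inequality lemma forcing $h(i)$ out of its admissible range. Your explicit verification that $\Delta_2<n/2$ (and that $\Delta_2$ is an integer) is a detail the paper leaves implicit, not a deviation.
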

\begin{lem}	Let  $A_j:=\{11<i<m-7:  i \equiv j ~({\rm mod} ~8)\}$ for $0\leq j\leq7$. When $m=12$, let 
\begin{align*}
	& S_1=\{1,2,4,6,9,11\} ~and~\\
	&S_2=\{3,5,7,8,10\};
\end{align*}
when $m\geq20$, let	
\begin{align*}
	&S_1=\{1,2,4,7,8,11,m-5,m-3,m-2\} \cup A_0 \cup A_3\cup A_5 \cup A_6  ~and~ \\
	&	S_2=\{3,5,6,9,10,m-7,m-6,m-4,m-1\} \cup A_1 \cup A_2\cup A_4 \cup A_7.
\end{align*} Then $S_1$ and $S_2$ form a partition of $\mathbb{Z}_m\backslash\{0\}$. Moreover, for each $i\in S_1$, $R(\Delta_4,q^{m-i})$ $\geq \frac{q^{m-i}+1}{2}$; for each $i\in S_2$, $R(\Delta_4,q^{m-i})< \frac{q^{m-i}+1}{2}$. 
\end{lem}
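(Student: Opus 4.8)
\noindent The plan is to follow the proof of Lemma~\ref{lem3} line for line, replacing $\Delta_3$ by $\Delta_4$ and using the two explicit $q$-adic expansions of $\Delta_4$ recorded above, treating the cases $m=12$ and $m\geq 20$ separately.

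First I would recall the two elementary facts already exploited in Lemma~\ref{lem3}: for $0\leq\alpha,\beta\leq m-1$ one has $R(q^{\alpha},q^{\beta})=0$ when $\alpha\geq\beta$ and $R(q^{\alpha},q^{\beta})=q^{\alpha}$ when $\alpha<\beta$, together with $R(\frac{n}{2},q^{m-i})=\frac{q^{m-i}+1}{2}$. Substituting the expansion of $\Delta_4$ and reducing each monomial modulo $q^{m-i}$ gives an alternating-sum expression for $R(\Delta_4,q^{m-i})$ completely analogous to Eq.~(\ref{eq0}). The decisive point, exactly as before, is that the sum of all the tail terms (everything except the leading contribution $R(\frac{n}{2},q^{m-i})=\frac{q^{m-i}+1}{2}$) has absolute value at most $1+q+\cdots+q^{m-i-1}=\frac{q^{m-i}-1}{q-1}\leq\frac{q^{m-i}-1}{2}$, so the displayed expression lies strictly between $0$ and $q^{m-i}$ and therefore genuinely equals $R(\Delta_4,q^{m-i})$.

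With this in place, whether $R(\Delta_4,q^{m-i})\geq\frac{q^{m-i}+1}{2}$ or $R(\Delta_4,q^{m-i})<\frac{q^{m-i}+1}{2}$ is governed entirely by the sign of the leading tail term: a positive second term forces the remainder above $\frac{q^{m-i}+1}{2}$, a negative one below it. I would then carry out the same case split on $i$. For the small range $1\leq i\leq 11$ the second term is read off directly from the expansion, yielding $\{1,2,4,7,8,11\}\subseteq S_1$ and $\{3,5,6,9,10\}\subseteq S_2$ just as in Lemma~\ref{lem3}. For the middle range $11<i<m-7$ the relevant monomials of $\Delta_4$ agree with the periodic block $\Phi(j)$, so the sign depends only on $i\bmod 8$ and reproduces $A_0\cup A_3\cup A_5\cup A_6\subseteq S_1$ and $A_1\cup A_2\cup A_4\cup A_7\subseteq S_2$.

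The only genuine departure from Lemma~\ref{lem3}, and hence the main point requiring care, is the top range $m-7\leq i\leq m-1$, i.e. $1\leq m-i\leq 7$. Here the low-degree correction $q^7-q^5+q^4-q^3+q-1$ that distinguishes $\Delta_4$ from $\Delta_3$ (for $m\geq 20$) is precisely the part of the expansion that survives reduction modulo $q^{m-i}$, so the mod-$8$ periodicity breaks down and each of the seven indices $m-7,m-6,m-5,m-4,m-3,m-2,m-1$ must be checked individually; this yields $\{m-5,m-3,m-2\}\subseteq S_1$ and $\{m-7,m-6,m-4,m-1\}\subseteq S_2$. The case $m=12$ is handled by its own expansion, where the index sum is empty and all monomials occupy the single exponent range $1$ to $m-1$, giving $S_1=\{1,2,4,6,9,11\}$ and $S_2=\{3,5,7,8,10\}$. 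Finally, collecting the three ranges shows $S_1$ and $S_2$ are disjoint with union $\mathbb{Z}_m\backslash\{0\}$, which is the asserted partition and completes the proof.
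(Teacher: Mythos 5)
Your overall strategy is exactly the intended one: the paper omits this proof precisely because it is meant to mirror Lemma \ref{lem3}, and your treatment of the remainder identity (the alternating sum lies strictly between $0$ and $q^{m-i}$, so it equals $R(\Delta_4,q^{m-i})$), of the small range $1\leq i\leq 11$, of the top range $m-7\leq i\leq m-1$ via the surviving correction $q^7-q^5+q^4-q^3+q-1$, and of the case $m=12$ is correct; I checked each of those indices and they give exactly the stated memberships.

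However, your blanket claim for the middle range --- that the relevant monomials agree with the periodic block $\Phi(j)$, so the sign depends only on $i\bmod 8$ --- fails at the boundary index $i=m-8$, which lies in the middle range ($11<m-8<m-7$) for every $m\geq 20$ and satisfies $i\equiv 4~({\rm mod}~8)$, hence falls into $A_4\subseteq S_2$. For $m-i=8$ the expansion of $\Delta_4$ contains no $\Phi(1)$ (its sum starts at $j=2$), and the monomials of exponent less than $8$ come from the low-degree correction, whose leading term is $+q^7$ --- opposite in sign to the $-q^{8j-1}$ that governs $m-i=8j$ for $j\geq 2$. Hence
\begin{equation*}
R(\Delta_4,q^{8})=\frac{q^{8}+1}{2}+q^{7}-q^{5}+q^{4}-q^{3}+q-1\geq \frac{q^{8}+1}{2},
\end{equation*}
so $i=m-8$ lands on the $S_1$ side, contradicting $A_4\subseteq S_2$. (Concretely, for $q=3$ and $m=20$ one has $\Delta_4=680138224$ and $R(\Delta_4,3^8)=5281>3281=\frac{3^8+1}{2}$; note that for $m=20$ the middle range is exactly $\{m-8\}$, so there your middle-range paragraph is wrong in its entirety.) This appears to be a misclassification in the lemma as printed, which your line-by-line transcription inherits rather than detects: a proof that genuinely checks $m-i=8$ would have flagged it, and the downstream argument (the companion lemma on $h_1,l_1,h_2,l_2$ and Lemma \ref{lem16}) then needs $m-8$ handled with $h_1=R(\Delta_4,q^8)-q^8$ on the $S_1$ side. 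A smaller instance of the same phenomenon occurs at $m-i=9$ (i.e.\ $i=m-9\in A_3$), where the second term is $+q^7$ from the correction rather than a $\Phi$-block term; the sign, and hence the conclusion, happens to agree there, but your periodicity reasoning is not literally valid at that index either, so both boundary indices $m-i\in\{8,9\}$ must be checked separately.
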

\begin{lem}
With the notations given above. For  $i\in S_1$, let $h_1(i)=R(\Delta_4,q^{m-i})$ $-q^{m-i}$ and $l_1(i)=\frac{\Delta_4-h_1(i)}{q^{m-i}}$.  For $i\in S_2$, let $h_2(i)=R(\Delta_4,q^{m-i})$ and $l_2(i)=\frac{\Delta_4-h_2(i)}{q^{m-i}}$. Then
\begin{equation*}
	l_1(i)(q^{m-i}-1)<-h_1(i)(q^i+1) ~and~ l_2(i)(q^{m-i}+1)<h_2(i)(q^i-1).
\end{equation*}
\end{lem}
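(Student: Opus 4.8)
The plan is to transcribe, almost line for line, the argument used for $\Delta_3$ in Lemma~\ref{lem11}, now feeding in the two signed $q$-ary expansions of $\Delta_4$ recorded above (one for $m=12$, one for $m\geq 20$) together with the partition $S_1,S_2$ and the threshold information supplied by the preceding lemma, namely whether $R(\Delta_4,q^{m-i})\geq\frac{q^{m-i}+1}{2}$ or $R(\Delta_4,q^{m-i})<\frac{q^{m-i}+1}{2}$. First I would fix $i\in\mathbb{Z}_m\backslash\{0\}$. If $i\in S_1$ then $R(\Delta_4,q^{m-i})\geq\frac{q^{m-i}+1}{2}$, so $h_1(i)=R(\Delta_4,q^{m-i})-q^{m-i}<0$ and $l_1(i)=(\Delta_4-h_1(i))/q^{m-i}$ is a positive integer; if $i\in S_2$ then $h_2(i)=R(\Delta_4,q^{m-i})>0$ and $l_2(i)$ is likewise a positive integer. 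In each regime the required inequality then reduces to a comparison of the leading $q$-power terms of the two sides, exactly as in the $\Delta_3$ case.

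For $i\in S_1$ I would use $l_1(i)(q^{m-i}-1)<l_1(i)q^{m-i}=\Delta_4-h_1(i)$ and then establish $\Delta_4-h_1(i)<-h_1(i)(q^i+1)$: writing out the first few terms, the left-hand side begins $\frac{q^m}{2}-q^{m-1}+\cdots$ while the right-hand side begins $\frac{q^m-q^i}{2}-q^{m-2}+\cdots$, so they already disagree at the $q^{m-1}$ level, and all remaining terms are dominated by a geometric sum bounded by $\frac{q^{m-i}-1}{q-1}\leq\frac{q^{m-i}-1}{2}$ as in Lemma~\ref{lem3}; this forces the strict inequality. For $i\in S_2$ the simple chain no longer applies, since $l_2(i)(q^{m-i}+1)$ exceeds $\Delta_4-h_2(i)$, so I would instead compute the leading terms of $l_2(i)(q^{m-i}+1)$ and of $h_2(i)(q^i-1)$ directly and compare; here the decisive discrepancy is the $q^{m-2}$ term (present on the right, absent on the left), and the same geometric-sum domination closes the argument. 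One then repeats this for each of the finitely many explicit indices in $S_1$ and $S_2$ and for each residue class $A_j$, the periodicity of the $\Phi(j)$-blocks making the leading terms uniform in $j$.

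The hard part will be the bookkeeping rather than any new idea. The two forms of $\Delta_4$ have to be handled separately, and the low-order correction $q^7-q^5+q^4-q^3+q-1$ that appears for $m\geq 20$ perturbs the otherwise periodic signed-digit pattern in the bottom positions. As a consequence the boundary indices $i\in\{m-7,m-6,m-5,m-4,m-3,m-2,m-1\}$ — precisely the extra explicit elements that the preceding lemma assigns to $S_1$ and $S_2$ — fall outside the clean $A_j$ analysis and must be checked one at a time, and the case $m=12$, whose expansion contains no $\Phi(j)$-blocks at all, needs its own short verification. In each of these cases the computation is the same leading-term comparison; the delicacy is only in tracking which signed digit first distinguishes the two sides, which is exactly why the authors declare this lemma essentially identical to Lemma~\ref{lem11} and omit the details.
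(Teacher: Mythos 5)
Your proposal matches the paper exactly: the authors omit this proof, stating it is essentially the same as those of Lemmas \ref{lem3}, \ref{lem11} and \ref{lem14}, and your plan reproduces the Lemma \ref{lem11} argument faithfully --- the chain $l_1(i)(q^{m-i}-1)<\Delta_4-h_1(i)<-h_1(i)(q^i+1)$ for $i\in S_1$, the direct leading-term comparison of $l_2(i)(q^{m-i}+1)$ against $h_2(i)(q^i-1)$ for $i\in S_2$, and the separate treatment of $m=12$ versus $m\geq 20$ with the boundary indices $m-7,\dots,m-1$ checked individually. You also correctly identify why the simple chain fails on $S_2$ and where the decisive digit discrepancies lie, so the proposal is sound and in line with the intended (omitted) proof.
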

\begin{lem} \label{lem16}
Let  $q$ be an odd prime power, $m\geq12$ and $m\equiv 4~ ({\rm mod}~ 8)$ and $n=q^m+1$. Then $\Delta_4$ is a coset leader.
\end{lem}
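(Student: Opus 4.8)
The plan is to mirror the proof of Lemma \ref{lem14} essentially verbatim, invoking Proposition \ref{prop1} together with the two lemmas immediately preceding this one, namely the partition of $\mathbb{Z}_m\backslash\{0\}$ according to the size of $R(\Delta_4,q^{m-i})$ and the two associated inequalities. First I would record that the explicit expansions of $\Delta_4$ displayed above (in both the case $m=12$ and the case $m\geq 20$) show $0\leq\Delta_4\leq\frac{n}{2}$: the only term of order $q^m$ is $\frac{q^m+1}{2}$, and it is corrected by $-q^{m-1}+q^{m-3}-\cdots$, whose total absolute value is far below $\frac{q^m}{2}$ for $q\geq 3$. Hence Proposition \ref{prop1} applies and the requirement $s\leq\frac{n}{2}$ is automatically satisfied.

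By Proposition \ref{prop1}, $\Delta_4$ is a coset leader if and only if for every $i\in\mathbb{Z}_m\backslash\{0\}$ and every pair of integers $l(i),h(i)$ with $1\leq l(i)\leq\frac{q^{i}-1}{2}$ and $-\frac{l(i)(q^{m-i}-1)}{q^{i}+1}<h(i)<\frac{l(i)(q^{m-i}+1)}{q^{i}-1}$, one has $\Delta_4\neq l(i)q^{m-i}+h(i)$. As in Lemma \ref{lem14}, I would argue contrapositively: suppose such $i,l(i),h(i)$ exist with $\Delta_4=l(i)q^{m-i}+h(i)$, and show that this forces $h(i)$ out of its admissible interval. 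The two ingredients are that the range constraints yield $|h(i)|<\frac{q^{m-i}+1}{2}$, and that the defining relation $\Delta_4=l(i)q^{m-i}+h(i)$ gives $h(i)\equiv R(\Delta_4,q^{m-i})\ ({\rm mod}\ q^{m-i})$.

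Next I would split on the partition $\mathbb{Z}_m\backslash\{0\}=S_1\cup S_2$ supplied by the preceding partition lemma for $\Delta_4$. For $i\in S_1$ one has $R(\Delta_4,q^{m-i})\geq\frac{q^{m-i}+1}{2}$; combined with $|h(i)|<\frac{q^{m-i}+1}{2}$ and the congruence, the unique admissible value is $h(i)=R(\Delta_4,q^{m-i})-q^{m-i}$, whereupon the preceding inequality lemma gives $l(i)(q^{m-i}-1)<-h(i)(q^{i}+1)$, i.e. $h(i)<-\frac{l(i)(q^{m-i}-1)}{q^{i}+1}$, contradicting the lower bound. For $i\in S_2$ one has $R(\Delta_4,q^{m-i})<\frac{q^{m-i}+1}{2}$, forcing $h(i)=R(\Delta_4,q^{m-i})$, and the inequality lemma gives $l(i)(q^{m-i}+1)<h(i)(q^{i}-1)$, i.e. $h(i)>\frac{l(i)(q^{m-i}+1)}{q^{i}-1}$, contradicting the upper bound. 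In either case no such decomposition of $\Delta_4$ can occur, so $\Delta_4$ is a coset leader.

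The genuinely laborious part is not this reduction, which is purely formal once the two preceding lemmas are granted, but those lemmas themselves, and this is precisely where the case distinction $m=12$ versus $m\geq 20$ bites: $\Delta_4$ carries two different explicit expansions, and the partition $S_1,S_2$ must absorb the irregular boundary indices $m-7,m-6,m-5,m-4,m-3,m-2,m-1$ that do not fall into the uniform residue classes $A_j$. I therefore expect the main obstacle to be the bookkeeping—verifying, for each of these finitely many exceptional indices as well as for the generic residue classes, that the sign of the second surviving term of $R(\Delta_4,q^{m-i})$ places $i$ in the correct part and that the claimed inequalities hold after comparing only the few leading powers of $q$, exactly in the manner of Lemmas \ref{lem3} and \ref{lem11}.
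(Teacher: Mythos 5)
Your proposal is correct and matches the paper exactly: the paper omits the proof of this lemma, stating it is essentially the same as that of Lemma \ref{lem14}, and you reconstruct precisely that argument—Proposition \ref{prop1}, the forced value of $h(i)$ from $|h(i)|<\frac{q^{m-i}+1}{2}$ and the congruence $h(i)\equiv R(\Delta_4,q^{m-i})\ ({\rm mod}\ q^{m-i})$, then the $S_1$/$S_2$ split with the two inequality conclusions. Your added check that $\Delta_4\leq\frac{n}{2}$ and your remarks on the exceptional boundary indices $m-7,\dots,m-1$ in the $\Delta_4$ partition are accurate and, if anything, slightly more careful than the paper's treatment.
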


\begin{lem}\label{lem4}
With the notations given  before. Then 
\begin{equation*}
	\delta_2=\Delta_2,		\delta_3=\Delta_3 \text{ and } 	\delta_4=\Delta_4.
\end{equation*} 
\end{lem}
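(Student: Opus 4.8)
The plan is to combine the three facts already in hand---that $\Delta_2,\Delta_3,\Delta_4$ are coset leaders (Lemmas \ref{lem15}, \ref{lem14}, \ref{lem16})---with a proof that no \emph{other} coset leader can intrude between them. First I would record the strict chain $\frac{n}{2}>\Delta_2>\Delta_3>\Delta_4$, which is read off directly from the expansion in Eq.~(\ref{eq1}) and the displayed formulas for $\Delta_2$ and $\Delta_4$; in particular $\Delta_2-\Delta_3=\frac{(q-1)^2(q^2-1)(q^{m-8}+1)}{q^4+1}>0$, while $\Delta_3-\Delta_4>0$ by the very definition of $\Delta_4$. Since every coset leader is at most $\frac{n}{2}=\delta_1$, it then suffices to show that each of the three open intervals $(\Delta_2,\frac{n}{2})$, $(\Delta_3,\Delta_2)$ and $(\Delta_4,\Delta_3)$ contains no coset leader; the emptiness of these gaps, together with the three lemmas, forces $\delta_2=\Delta_2$, $\delta_3=\Delta_3$ and $\delta_4=\Delta_4$. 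For the first gap I would simply invoke Theorem \ref{thm0}: when $m\equiv 4~(\mathrm{mod}~8)$ we have $\nu_2(m)=2$, and the formula $\frac{n}{q^{4}+1}\Psi_q(1)$ collapses to $\frac{(q-1)^2(q^2-1)n}{2(q^4+1)}=\Delta_2$, so $\delta_2=\Delta_2$ already and nothing lies in $(\Delta_2,\frac{n}{2})$.

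The substance is therefore the emptiness of $(\Delta_3,\Delta_2)$ and $(\Delta_4,\Delta_3)$, which I would establish through Proposition \ref{prop1}. Fix an arbitrary integer $s$ in one of these gaps; since $s<\frac{n}{2}$, Proposition \ref{prop1} tells us that $s$ fails to be a coset leader as soon as one exhibits some $i\in\{1,\dots,m-1\}$ and integers $l,h$ with $1\le l\le\frac{q^i-1}{2}$, $-\frac{l(q^{m-i}-1)}{q^i+1}<h<\frac{l(q^{m-i}+1)}{q^i-1}$, and $s=lq^{m-i}+h$. To manufacture such a triple I would reuse the device already applied to $\Delta_3$: for the relevant exponent $m-i$ set $h=R(s,q^{m-i})$ or $h=R(s,q^{m-i})-q^{m-i}$ according as $R(s,q^{m-i})<\frac{q^{m-i}+1}{2}$ or $R(s,q^{m-i})\ge\frac{q^{m-i}+1}{2}$ (the very dichotomy that organizes $S_1$ and $S_2$ in Lemma \ref{lem3}), and then take $l=\frac{s-h}{q^{m-i}}$. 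The membership $1\le l\le\frac{q^i-1}{2}$ and the two-sided bound on $h$ are verified by comparing truncated geometric sums, exactly in the style of Lemmas \ref{lem3} and \ref{lem11}.

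Because each gap is wide---of order $q^{m-8}$ for $(\Delta_3,\Delta_2)$ and of order $q^{8}$ (resp.\ $q^{6}$ when $m=12$) for $(\Delta_4,\Delta_3)$---no single exponent $i$ will work across a whole interval. I would therefore partition each gap into sub-ranges according to the leading base-$q$ digits of $s$, assign to each sub-range the exponent $i$ dictated by the first position at which $s$ deviates from the upper endpoint's $q$-adic expansion, and verify the constraints sub-range by sub-range. The main obstacle is precisely this bookkeeping: one must check that the chosen sub-ranges cover the open interval completely, so that \emph{every} candidate $s$ receives an admissible decomposition, and that near the transition points---where the appropriate $i$ switches and where $h$ approaches the two bounds of Proposition \ref{prop1}---the geometric-series estimates stay strict. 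These estimates are of the same flavor as those carried out for the single value $\Delta_3$ in Lemmas \ref{lem3} and \ref{lem11}, but must now be run uniformly over a continuum of values of $s$, which is what makes the verification laborious rather than conceptually hard.
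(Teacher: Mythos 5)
Your skeleton (the three coset-leader lemmas plus emptiness of the three gaps) matches the paper's overall structure, and your treatment of the first gap is fine: invoking Theorem \ref{thm0} with $\nu_2(m)=2$, so that $\frac{n}{q^4+1}\Psi_q(1)=\Delta_2$, legitimately gives $\delta_2=\Delta_2$ in one line, since that theorem is quoted as known. (The paper instead re-derives this from Proposition \ref{prop1}, applying it with $(i,l)=\left(1,\frac{q-1}{2}\right)$, $\left(2,\frac{(q-1)^2}{2}\right)$, $\left(4,\frac{(q-1)^2(q^2-1)}{2}\right)$ and using that $\frac{(q-1)n}{2(q+1)}$ and $\frac{(q-1)^2n}{2(q^2+1)}$ are non-integers for even $m$, so the three excluded intervals tile $(\Delta_2,\frac{n}{2})$; your shortcut is logically sound within the paper's framework.)

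The genuine gap is in your handling of $(\Delta_3,\Delta_2)$ and $(\Delta_4,\Delta_3)$: the remainder-and-digit bookkeeping you propose is never carried out, and it is the wrong tool for this half of the problem. The device of Lemmas \ref{lem3} and \ref{lem11} works for the single values $\Delta_2,\Delta_3,\Delta_4$ precisely because each has one explicit base-$q$ expansion, so $R(\cdot,q^{m-i})$ has a fixed digit pattern for every $i$ and the truncated-geometric-sum comparisons can be checked case by case. An arbitrary $s$ in a gap of width on the order of $q^{m-8}$ has no such structure: the remainders $R(s,q^{m-i})$ vary freely, your $S_1/S_2$ dichotomy and the choice of the ``first deviating digit'' $i$ would have to be re-justified across roughly $q^{m-8}$ values, and you give no argument that your sub-ranges cover the open interval or that the strict estimates survive at the transition points --- you flag this yourself as ``the main obstacle'' and leave it unresolved. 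The paper sidesteps the whole issue by eliminating $h$: writing $h=s-lq^{m-i}$, the admissibility window for $h$ in Proposition \ref{prop1} is equivalent to the $h$-free condition $\frac{(q^i-1)s}{n}<l<\frac{(q^i+1)s}{n}$, i.e.\ inequality (\ref{equ5}), which concerns $l$ alone. It then suffices to observe, uniformly for all $s<\Delta_2$, that
\begin{equation*}
\frac{(q^i+1)s}{n}<\frac{(q-1)^2(q^2-1)(q^i+1)}{2(q^4+1)}\leq\frac{(q-1)^2(q^2-1)(q^i-1)}{2(q^4-1)}<\frac{q^i-1}{2},
\end{equation*}
while some $1\leq i\leq m-1$ makes $\frac{(q^i-1)s}{n}\geq 1$; hence an admissible pair $(i,l)$ exists for every $s$ in the gap at once, with no partition into digit-based sub-ranges and no control of remainders. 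This reformulation is the missing idea in your plan; without it (or a completed version of your case analysis, which you have only sketched), the claims $\delta_3=\Delta_3$ and $\delta_4=\Delta_4$ are not established.
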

\begin{proof}
Let $i=1$ and $l=\frac{q-1}{2}$ in Proposition \ref{prop1}. Then $s$ is not a coset leader if 
\begin{equation*}
	\frac{(q-1)n}{2(q+1)}<s<\frac{n}{2}.
\end{equation*}
Let $i=2$ and $l=\frac{(q-1)^2}{2}$ in Proposition \ref{prop1}. Then $s$ is not a coset leader if 
\begin{equation*}
	\frac{(q-1)^2n}{2(q^2+1)}<s<\frac{(q-1)n}{2(q+1)}.
\end{equation*}
Let $i=4$ and $l=\frac{(q-1)^2(q^2-1)}{2}$ in Proposition \ref{prop1}. Then $s$ is not a coset leader if 
\begin{equation*}
	\frac{(q-1)^2(q^2-1)n}{2(q^4+1)}<s<	\frac{(q-1)^2n}{2(q^2+1)}.
\end{equation*}
Since $m$ is even, 	both $\frac{(q-1)^2n}{2(q^2+1)}$ and $	\frac{(q-1)n}{2(q+1)}$ are not integers.  Thus, for $	\Delta_2=\frac{(q-1)^2(q^2-1)n}{2(q^4+1)}<s<\frac{n}{2}=\delta_1$, $s$ is not a coset leader. By Lemma \ref{lem15}, $
\delta_2=\Delta_2$. 

Next, we prove that $		\delta_3=\Delta_3$. We aim to prove that for $0< u <  \Delta_2-\Delta_3$, $\Delta_{3}+u$ is not a coset leader.
By Proposition \ref{prop1}, we need to  prove that there always exists $i\in \mathbb{Z}_{m}\backslash\{0\}$,  $ 1\leq l(i) \leq \frac{q^{i}-1}{2}$  and $-\frac{l(i)\left(q^{m-i}-1\right)}{q^{i}+1}<h(i)<\frac{l(i)\left(q^{m-i}+1\right)}{q^{i}-1} $ such that 
\begin{equation}\
	\Delta_{3}+u=l(i)q^{m-i} + h(i).
\end{equation}   By the range of $l(i)$ and $h(i)$, we have $$\frac{l(i)n}{q^i+1}< l(i)q^{m-i} + h(i)<  \frac{l(i)n}{q^i-1},$$
i.e.
$$ \frac{(q^i-1)(\Delta_{3}+u)}{n} <l (i)<  \frac{(q^i+1)(\Delta_{3}+u)}{n}. $$
Thus,
it suffices to prove  that  there always exists $i\in\mathbb{Z}_m\backslash\{0\}$,  $ 1\leq l(i) \leq \frac{q^{i}-1}{2}$ such that 
\begin{equation}\label{equ5}
	\frac{(q^i-1)(\Delta_3+u)}{n} <l (i)<  \frac{(q^i+1)(\Delta_3+u)}{n}. 
\end{equation}
It is obvious that there exists $1\leq i \leq m-1$ such that $ \frac{(q^i-1)(\Delta_3+u)}{n}\geq1$. Moreover, 
\begin{align*}
	\frac{(q^i+1)(\Delta_3+u)}{n} < \frac{(q^i+1)\Delta_2}{n} &=\frac{q^i+1}{n}\cdot	\frac{(q-1)^2(q^2-1)n}{2(q^4+1)}=\frac{(q-1)^2(q^2-1)(q^i+1)}{2(q^4+1)} .
\end{align*}
By $\frac{q^i+1}{q^4+1} \leq \frac{q^i-1}{q^4-1}$, we have 
\begin{equation*}
	\frac{(q-1)^2(q^2-1)(q^i+1)}{2(q^4+1)} \leq 
	\frac{(q-1)^2(q^2-1)(q^i-1)}{2(q^4-1)} < \frac{q^i-1}{2}.
\end{equation*} 
Therefore, there always exists $1\leq i\leq m-1$, $ 1\leq l \leq \frac{q^{i}-1}{2}$   such that   (\ref{equ5}) holds. Hence $		\delta_3=\Delta_3$. By a similar analysis, we can prove that $\delta_4=\Delta_4$.
\end{proof}

\begin{lem}\label{lem13}
With the notations given before. Then $\left|C_{	\delta_2}\right|=8$ and $\left|C_{	\delta_3}\right|=\left|C_{	\delta_4}\right|=2m$.
\end{lem}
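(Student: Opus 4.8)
The plan is to compute each coset size through the multiplicative order of $q$. Recall that $|C_s|$ is the least positive integer $j$ with $sq^{j}\equiv s\pmod n$, equivalently $\frac{n}{{\rm gcd}(n,s)}\mid q^{j}-1$, so that $|C_s|={\rm ord}_{n/{\rm gcd}(n,s)}(q)$. Since $q^{m}\equiv-1\pmod n$ we have ${\rm ord}_n(q)=2m$, hence $|C_s|\mid 2m$ for every $s$. By Lemma \ref{lem4} I may use the explicit values $\delta_2=\Delta_2$, $\delta_3=\Delta_3$, $\delta_4=\Delta_4$ throughout. Write $N:=\frac{q^{m}+1}{q^{4}+1}$, which is an integer because $m/4$ is odd, and note $\nu_2(n)=\nu_2(q^{4}+1)=1$, so $N$ is odd.

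For $C_{\delta_2}$ I would use $\delta_2=\frac{(q-1)^{2}(q^{2}-1)}{2}\,N$, whence ${\rm gcd}(n,\delta_2)=N\cdot{\rm gcd}\!\bigl(q^{4}+1,\tfrac{(q-1)^{2}(q^{2}-1)}{2}\bigr)$. Reducing $q\equiv1\pmod{q-1}$ and $q\equiv-1\pmod{q+1}$ gives ${\rm gcd}(q^{4}+1,q\mp1)=2$, so no odd prime is shared and the $2$-part is $2$; thus the inner gcd is $2$ and $\frac{n}{{\rm gcd}(n,\delta_2)}=\frac{q^{4}+1}{2}$. Finally $q^{4}\equiv-1\pmod{(q^{4}+1)/2}$ yields $q^{8}\equiv1$ but $q^{4}\not\equiv1$ (as $(q^{4}+1)/2>2$), so ${\rm ord}_{(q^{4}+1)/2}(q)=8$ and $|C_{\delta_2}|=8$.

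For $C_{\delta_3}$ and $C_{\delta_4}$ the goal is to show the order is the full $2m$, i.e. $|C_{\delta_i}|\nmid 2m/p$ for every prime $p\mid 2m$; since $|C_{\delta_i}|\mid 2m$ it suffices to treat the maximal divisors. For $p=2$ this is automatic for any coset leader: $q^{m}\equiv-1$ gives $\delta_i q^{m}\equiv-\delta_i$, and $0<\delta_i<\tfrac n2$ forces $n\nmid 2\delta_i$, so $\delta_i q^{m}\not\equiv\delta_i$. For an odd prime $p\mid m$, Lemma \ref{lem1} applies with $\tfrac{2m/p}{{\rm gcd}(m,2m/p)}=2$ even to give ${\rm gcd}(q^{m}+1,q^{2m/p}-1)=q^{m/p}+1$; therefore $\delta_i q^{2m/p}\equiv\delta_i$ is equivalent to $\frac{n}{q^{m/p}+1}\mid\delta_i$. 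Because $\frac{n}{q^{m/p}+1}$ also divides $n$, this would force $\frac{n}{q^{m/p}+1}\mid{\rm gcd}(n,\delta_i)$, so it is enough to prove ${\rm gcd}(n,\delta_i)<\frac{q^{m}+1}{q^{m/p}+1}$ for every odd prime $p\mid m$.

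The main obstacle, and the computational heart, is bounding ${\rm gcd}(n,\delta_3)$ (and likewise ${\rm gcd}(n,\delta_4)$). Using $q^{m}-2q^{m-8}-1=(q^{m}+1)-2(q^{m-8}+1)$ I would write $\delta_3=\frac{(q-1)^{2}(q^{2}-1)}{2}\,(N-2N')$ with $N'=\frac{q^{m-8}+1}{q^{4}+1}$. Since $\nu_2(m)=\nu_2(m-8)=2$, Lemma \ref{lem111} gives ${\rm gcd}(q^{m}+1,q^{m-8}+1)=q^{{\rm gcd}(m,8)}+1=q^{4}+1$, hence ${\rm gcd}(N,N')=1$, and as $N$ is odd one gets ${\rm gcd}(N,N-2N')=1$ and ${\rm gcd}(n,N-2N')={\rm gcd}(q^{4}+1,N-2N')$. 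Reducing modulo $q^{4}+1$ via $q^{4}\equiv-1$ yields $N\equiv m/4$ and $N'\equiv m/4-2$, so this gcd divides both $q^{4}+1$ and $m/4-4$ and is therefore at most $q^{4}+1$. Combining with ${\rm gcd}\bigl(n,\tfrac{(q-1)^{2}(q^{2}-1)}{2}\bigr)=2$ and the submultiplicativity ${\rm gcd}(n,ab)\mid{\rm gcd}(n,a)\,{\rm gcd}(n,b)$, I obtain ${\rm gcd}(n,\delta_3)\le 2(q^{4}+1)$. This is smaller than $\frac{q^{m}+1}{q^{m/p}+1}\ge\frac12 q^{\,m-m/p}\ge\frac12 q^{\,2m/3}$ for every odd prime $p\mid m$ once $m\ge12$ and $q\ge3$, which gives $|C_{\delta_3}|=2m$. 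The same bookkeeping, now carried out on the two displayed forms of $\Delta_4$ (the case $m=12$ and the case $m\ge20$) and their analogues of Lemmas \ref{lem1} and \ref{lem111}, settles $|C_{\delta_4}|=2m$.
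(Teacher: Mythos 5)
Your proposal is correct, and it rests on the same two number-theoretic pillars as the paper's proof --- Lemma \ref{lem1} (to dispose of the divisors of $2m$ giving ${\rm gcd}(q^m+1,q^j-1)=2$, hence the impossible $n\mid 2\delta_i$) and Lemma \ref{lem111} combined with the splitting $q^m-2q^{m-8}-1=(q^m+1)-2(q^{m-8}+1)$ --- but the packaging is genuinely different. The paper argues by contradiction on the shape of the coset size: writing $\left|C_{\delta_3}\right|=2^{\epsilon}i'$ with $\epsilon\leq 3$ and $i'$ dividing the odd part of $m$, it pushes the closed form of $\delta_3$ through $n\mid\delta_3(q^{4i'}+1)$ to reach $n\mid(q^{4i'}+1)(q-1)^2(q^2-1)$, refuted by the degree count $4i'+4<m$, with a separate such computation needed for each candidate divisor. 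You instead convert everything into a single statement about ${\rm gcd}(n,\delta_i)$: the reformulation $\left|C_s\right|={\rm ord}_{n/{\rm gcd}(n,s)}(q)$, the observation that $\left|C_{\delta_i}\right|\neq 2m$ would force $\frac{n}{q^{m/p}+1}\mid\delta_i$ for some odd prime $p\mid m$, and the uniform bound ${\rm gcd}(n,\delta_3)\leq 2(q^4+1)$ (via ${\rm gcd}(N,N')=1$ and reduction $N\equiv m/4 \pmod{q^4+1}$) dispose of all exponents at once; you also make $\left|C_{\delta_2}\right|=8$ fully explicit through $n/{\rm gcd}(n,\delta_2)=(q^4+1)/2$, where the paper only says it is easy to check. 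One caution on the part you, like the paper, leave as ``the same bookkeeping'': for $m>12$ one has $\Delta_4=\frac{(q-1)^2(q^2-1)}{2}\bigl(N-2N'-2(q^4-1)\bigr)$, and the needed coprimality ${\rm gcd}\bigl(N,\,N-2N'-2(q^4-1)\bigr)=1$ is \emph{not} inherited from ${\rm gcd}(N,N')=1$; you must add the identity $N'+q^4-1=\frac{q^8(q^{m-16}+1)}{q^4+1}$ and apply Lemma \ref{lem111} once more (${\rm gcd}(q^m+1,q^{m-16}+1)=q^4+1$, since $\nu_2(m)=\nu_2(m-16)=2$) to conclude ${\rm gcd}(N,N'+q^4-1)=1$. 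Without this extra step the trivial bound on that gcd is of order $q^{m-12}$, and your comparison against $\frac{1}{2}q^{2m/3}$ breaks down already for $m\geq 28$; with it, your scheme settles $\left|C_{\delta_4}\right|=2m$ exactly as claimed.
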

\begin{proof}
It is easy to check that $\left|C_{\delta_2}\right|=8$. Since  ${\rm Ord}_{n}(q)=2m$,  $\left|C_{\delta_3}\right|$ is a divisor of $2m$. We claim that  $\left|C_{		\delta_3}\right|=2m$. For otherwise,
by the assumption  $m\geq 12$ and  $m\equiv 4~ ({\rm mod}~ 8)$,  we get 
that $\nu_2(m)=2$ and $m=4(1+2i)$ for some $i\geq 1$ from Lemma \ref{lem00}. 
Suppose that $\left|C_{\delta_3}\right|=2^\epsilon i'$ where $\epsilon\leq 3$ and $i' \mid (1+2i)$. Note that if $\epsilon=3$, then $i' \neq 1+2i$ and $i'\leq \frac{1+2i}{3}$.
By the definition of cyclotomic coset, we have
$$ n \mid 	\delta_3(q^{2^\epsilon i'}-1).  $$
If $\epsilon \leq 2$, then  $ \frac{2^\epsilon i'}{\operatorname{gcd}(m, 2^\epsilon i')}=1$. By Lemma \ref{lem1},  one has $ n \mid 2	\delta_3$, which never holds since $\delta_3<\delta_1=\frac{n}{2}$. 	If $\epsilon=3$, then  $ \frac{2^\epsilon i'}{\operatorname{gcd}(m,  2^\epsilon i' )}$ is even. Again by Lemma \ref{lem1},  we have  ${\rm gcd}(n,  q^{8i'}-1)=q^{4i'}+1$ and $n\mid 		\delta_3(q^{4i'}+1)$.
Furthermore, 
\begin{align*}
	\delta_3(q^{4i'}+1)=  \frac{q^{4i'}+1}{q^4+1} \frac{(q-1)^2(q^2-1)}{2}(n-2(q^{m-8}+1)),
\end{align*}
this implies 
\begin{align}\label{eq123}
	n \mid  \frac{q^{4i'}+1}{q^4+1}(q-1)^2(q^2-1)(q^{m-8}+1).
\end{align}
By Lemma \ref{lem111}, we have ${\rm gcd}(n,q^{m-8}+1)=q^4+1$. Then (\ref{eq123}) becomes
\begin{align}\label{eq1234}
	n \mid (q^{4i'}+1)(q-1)^2(q^2-1),
\end{align}
which never holds since $4i'+4\leq 4\cdot\frac{1+2i}{3}+4<4(1+2i)=m$.
Thus  $\left|C_{\delta_3}\right|=2m$. By a similar analysis, we can prove that $\left|C_{\delta_4}\right|=2m$.
\end{proof}		

%

With the conclusions of Lemmas \ref{lem4} and \ref{lem13},  we determine the dimensions of some antiprimitive BCH codes as follows.
\begin{thm}\label{th3}
Let $q$ be an odd prime power,  $m\geq 12$ and  $m\equiv 4~ ({\rm mod}~ 8)$ and $n=q^m+1$. Let $\delta_1=\frac{n}{2}$, $\delta_2$, $\delta_3$ and  $\delta_4$ be given in Lemma \ref{lem4}. Then the LCD BCH code
$\mathcal{C}_{(q,q^m+1,\delta+1,0)}$ has parameters 
$$[q^m+1,8i-7,d\geq 2\delta]$$
if $	\delta_{i+1}+1                                                                                                                                                                                                                                                                                   \leq \delta \leq 	\delta_i$ (i=1,2);  
$\mathcal{C}_{(q,q^m+1,\delta+1,0)}$ has parameters 
$$[q^m+1,2m+9,d\geq 2\delta]$$
if $	\delta_4+1                                                                                                                                                                                                                                                                                  \leq \delta \leq \delta_3$;
and 
$\mathcal{C}_{(q,q^m+1,\delta_4+1, 0)}$ has parameters 
$$[q^m+1,4m+9,d\geq 2\delta_4].$$  
\end{thm}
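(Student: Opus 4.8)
The plan is to reduce the dimension computation to a count of $q$-cyclotomic coset sizes, relying on the already-established values of $\delta_2,\delta_3,\delta_4$ (Lemma \ref{lem4}) together with the coset sizes $|C_{\delta_2}|=8$ and $|C_{\delta_3}|=|C_{\delta_4}|=2m$ (Lemma \ref{lem13}). Recall that the defining set of $\mathcal{C}_{(q,q^m+1,\delta+1,0)}$ is $T=\bigcup_{j=0}^{\delta-1}C_j$, a union of full $q$-cyclotomic cosets, and that the dimension equals $n-|T|$. Since the coset leader is the smallest element of its coset, a coset $C_s$ with leader $s$ meets $\{0,1,\dots,\delta-1\}$ if and only if $s\leq\delta-1$; hence $T=\bigcup_{s\in\varGamma_{(n,q)},\,s\leq\delta-1}C_s$ and therefore
$$\dim\mathcal{C}_{(q,q^m+1,\delta+1,0)}=n-|T|=\sum_{\substack{s\in\varGamma_{(n,q)}\\ s>\delta-1}}|C_s|.$$
Thus it suffices to identify precisely which coset leaders exceed $\delta-1$ in each prescribed range of $\delta$.

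First I would treat the two ranges $\delta_{i+1}+1\leq\delta\leq\delta_i$ for $i=1,2$. Here $\delta-1$ satisfies $\delta_{i+1}\leq\delta-1<\delta_i$. Because $\delta_1>\delta_2>\delta_3>\delta_4$ are by definition the four largest coset leaders, there is no coset leader strictly between $\delta_{i+1}$ and $\delta_i$; consequently the leaders exceeding $\delta-1$ are exactly $\delta_1,\dots,\delta_i$, and the sum above collapses to $\sum_{j=1}^{i}|C_{\delta_j}|$. A short computation gives $|C_{\delta_1}|=1$: since $\delta_1=n/2$ and $q$ is odd, $\frac{n}{2}\cdot q=\frac{n(q-1)}{2}+\frac{n}{2}\equiv\frac{n}{2}\pmod n$, so $C_{n/2}=\{n/2\}$ is a singleton. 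Combining this with $|C_{\delta_2}|=8$ yields dimension $1$ for $i=1$ and $1+8=9$ for $i=2$, matching $8i-7$.

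Next I would handle the range $\delta_4+1\leq\delta\leq\delta_3$ in the same fashion: now $\delta_4\leq\delta-1<\delta_3$, so the leaders exceeding $\delta-1$ are exactly $\delta_1,\delta_2,\delta_3$ and the dimension is $|C_{\delta_1}|+|C_{\delta_2}|+|C_{\delta_3}|=1+8+2m=2m+9$. For the final code $\mathcal{C}_{(q,q^m+1,\delta_4+1,0)}$ one has $\delta-1=\delta_4-1<\delta_4$, so the leaders exceeding $\delta-1$ are $\delta_1,\delta_2,\delta_3,\delta_4$, giving $1+8+2m+2m=4m+9$. In every case the bound $d\geq 2\delta$ is immediate from Lemma \ref{lem000} applied to designed distance $\delta+1$, and the LCD property holds because $q^m\equiv-1\pmod n$ makes $-1$ a power of $q$ modulo $n$, whence every cyclic code of length $n=q^m+1$ is LCD.

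The genuinely substantive work, namely determining $\delta_2,\delta_3,\delta_4$ and the coset sizes, has already been carried out in Lemmas \ref{lem4} and \ref{lem13}, so the theorem is essentially a bookkeeping argument. The one point demanding care, and the step I would check most carefully, is the claim that no coset leader lies strictly between consecutive $\delta_j$'s: this is exactly what licenses replacing ``all leaders above $\delta-1$'' by the finite list $\delta_1,\dots,\delta_i$, and it relies entirely on $\delta_1,\dots,\delta_4$ being genuinely the four successive largest coset leaders, as established earlier.
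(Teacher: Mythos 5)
Your proposal is correct and follows exactly the route the paper intends: the paper states Theorem~\ref{th3} without a written proof, treating it as immediate bookkeeping from Lemma~\ref{lem4} (the values $\delta_1,\dots,\delta_4$) and Lemma~\ref{lem13} (the coset sizes), which is precisely the counting argument you spell out, including the observations that $|C_{n/2}|=1$, that $d\geq 2\delta$ follows from Lemma~\ref{lem000}, and that the LCD property holds since $-1\equiv q^m \pmod{n}$. Your explicit justification that no coset leader lies strictly between consecutive $\delta_j$'s (by the very definition of the $i$-th largest coset leader) is the one step the paper leaves tacit, and you handle it correctly.
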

We  employ SageMath to give  the following  example, where the dimensions are consistent with the results in Theorem \ref{th3}.
\begin{exa}
Let $q=3$ and $m=12$. Then the LCD BCH code $\mathcal{C}_{(3,531442,103697,0)}$ has the parameters $[531442, 9, d\geq 207392]$ and   $\mathcal{C}_{(3,531442,103665,0)}$ has the parameters   $[531442,33,d\geq 207328]$.
\end{exa}

\subsection{\texorpdfstring {$m=4$}{}}

The proofs of the following partial results are omitted because they  are essentially the same as that the one  given in the case where $ m\geq 12$ and  $m\equiv 4~ ({\rm mod}~ 8)$.
\begin{lem}\label{lem5}
Let  $q$ be an odd prime power,  $m=4$ and $n=q^4+1$. Let 
\begin{equation*}
	\begin{aligned}
		&	\Delta_{2,4}:=\frac{(q-1)^2(q^2-1)}{2},  \Delta_{3,4}:=\frac  {(q-1)^2(q^2-1)}{2}-(q-1) \text{ and }\\
		&  \Delta_{4,4}:=\frac  {(q-1)^2(q^2-1)}{2}-q. 
	\end{aligned}
\end{equation*}
Then $\Delta_{2,4}$, $\Delta_{3,4}$ and $\Delta_{4,4}$  are  coset leaders.
\end{lem}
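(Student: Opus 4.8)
The plan is to apply Proposition \ref{prop1} directly, exactly as in the proofs of Lemmas \ref{lem3}, \ref{lem11} and \ref{lem14}, but now over the very short index range $\mathbb{Z}_4\backslash\{0\}=\{1,2,3\}$. Since each of the three candidates is plainly positive and less than $\frac{n}{2}$, it suffices to show that for every $i\in\{1,2,3\}$ there is no admissible pair $(l,h)$ with $1\leq l\leq\frac{q^i-1}{2}$ and $-\frac{l(q^{4-i}-1)}{q^i+1}<h<\frac{l(q^{4-i}+1)}{q^i-1}$ satisfying $\Delta=l q^{4-i}+h$.

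First I would rewrite the three candidates relative to $\frac{n}{2}$, so that their residues modulo the moduli $q^{4-i}$ become transparent. A short computation gives $\Delta_{2,4}=\frac{n}{2}-q^3+q-1$, $\Delta_{3,4}=\frac{n}{2}-q^3$ and $\Delta_{4,4}=\frac{n}{2}-q^3-1$. Using $R(\frac{n}{2},q^{4-i})=\frac{q^{4-i}+1}{2}$ together with $R(q^\alpha,q^\beta)=0$ for $\alpha\geq\beta$ and $R(q^\alpha,q^\beta)=q^\alpha$ for $\alpha<\beta$, I can read off $R(\Delta,q^{4-i})$ for each $i$ and decide, exactly as in Lemma \ref{lem3}, whether $R(\Delta,q^{4-i})\geq\frac{q^{4-i}+1}{2}$ (in which case the unique $h$ with $|h|<\frac{q^{4-i}+1}{2}$ is $h=R(\Delta,q^{4-i})-q^{4-i}<0$) or $R(\Delta,q^{4-i})<\frac{q^{4-i}+1}{2}$ (in which case $h=R(\Delta,q^{4-i})\geq 0$). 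In either case $l$ is then forced to equal $\frac{\Delta-h}{q^{4-i}}$.

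With $h$ and $l$ pinned down, the remaining work is, for each of the nine pairs $(\Delta,i)$, the single polynomial inequality that certifies $h$ lies outside its admissible window, namely $l(q^{4-i}-1)<-h(q^i+1)$ in the first case (so that $h<-\frac{l(q^{4-i}-1)}{q^i+1}$) and $l(q^{4-i}+1)<h(q^i-1)$ in the second (so that $h>\frac{l(q^{4-i}+1)}{q^i-1}$). Each of these reduces to comparing two explicit polynomials in $q$ whose difference is a positive combination of powers of $q$; for instance, for $\Delta_{2,4}$ at $i=1$ one finds $l=\frac{q-1}{2}$, $h=-\frac{q^3-2q+1}{2}$ and the gap $(-h)(q+1)-l(q^3-1)=q^2(q-1)>0$, so the lower bound on $h$ is indeed violated. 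I expect all nine inequalities to hold for every odd $q$ by the same mechanism, mirroring Lemma \ref{lem11}.

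The only genuine subtlety, and the step I expect to need the most care, is the degenerate case in which the forced quotient $l$ falls outside the range $[1,\frac{q^i-1}{2}]$. This occurs for $\Delta_{4,4}$ at $i=1$, where $l=\frac{q-3}{2}$, so that $l=0$ precisely when $q=3$; there is then no admissible decomposition at $i=1$ at all, so this index imposes no constraint, and the obstruction is instead supplied by $i=2$ and $i=3$. One must also watch the boundary cases where $R(\Delta,q^{4-i})$ equals $\frac{q^{4-i}+1}{2}$ exactly (as happens for $\Delta_{3,4}$), since there the strict bound $|h|<\frac{q^{4-i}+1}{2}$ forces $h=R(\Delta,q^{4-i})-q^{4-i}$ into the first case. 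Once these boundary and degenerate situations are handled, Proposition \ref{prop1} yields that $\Delta_{2,4}$, $\Delta_{3,4}$ and $\Delta_{4,4}$ are all coset leaders, completing the proof.
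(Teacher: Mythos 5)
Your proposal is correct and follows exactly the route the paper intends: the paper omits the proof of Lemma \ref{lem5} precisely because it is ``essentially the same'' as the $m\equiv 4 \pmod 8$ case, i.e.\ the residue analysis of Lemmas \ref{lem3}, \ref{lem11} and \ref{lem14} via Proposition \ref{prop1}, which is what you carry out over $i\in\{1,2,3\}$. Your computations check out (e.g.\ the gap $q^2(q-1)$ for $\Delta_{2,4}$ at $i=1$), and you correctly flag the two genuine subtleties the omitted proof must handle: the boundary residue $R(\Delta_{3,4},q^{4-i})=\frac{q^{4-i}+1}{2}$ forcing $h=R-q^{4-i}$, and the forced $l=\frac{q-3}{2}$ dropping below $1$ at $q=3$ for $\Delta_{4,4}$, $i=1$, where the nonexistence of an admissible pair is itself the desired conclusion at that index.
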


\begin{lem}\label{lem6}
With the notations given above. Then 
\begin{equation*}
	\delta_2=\Delta_{2,4}, \delta_3=\Delta_{3,4} \text{ and } 
	\delta_4=\Delta_{4,4}
\end{equation*}
\end{lem}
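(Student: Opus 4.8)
The plan is to build on Lemma~\ref{lem5}, which already guarantees that $\Delta_{2,4}$, $\Delta_{3,4}$ and $\Delta_{4,4}$ are coset leaders; it remains only to prove they are \emph{consecutive} in the descending list of coset leaders, i.e.\ that no coset leader lies strictly between $\delta_1=\frac{n}{2}$ and $\Delta_{2,4}$, none strictly between $\Delta_{2,4}$ and $\Delta_{3,4}$, and none strictly between $\Delta_{3,4}$ and $\Delta_{4,4}$. Throughout I will use Proposition~\ref{prop1} in the packaged form obtained by reading off its admissible range of $h$: for a fixed pair $(i,l)$ with $1\le l\le\frac{q^{i}-1}{2}$, every integer $s$ lying in the open interval $\left(\frac{ln}{q^{i}+1},\frac{ln}{q^{i}-1}\right)$ fails to be a coset leader. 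The one structural difference from the proof of Lemma~\ref{lem4} is that now $m=4$, so the index $i=m=4$ exploited there is no longer admissible (only $i\in\{1,2,3\}$ are allowed); recovering the interval that $i=4$ would have supplied is the crux.

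For $\delta_2$, the pairs $(i,l)=\left(1,\frac{q-1}{2}\right)$ and $\left(2,\frac{(q-1)^{2}}{2}\right)$ exclude, respectively, every integer in $\left(\frac{(q-1)n}{2(q+1)},\frac{n}{2}\right)$ and in $\left(\frac{(q-1)^{2}n}{2(q^{2}+1)},\frac{(q-1)n}{2(q+1)}\right)$; since $m$ is even, the common endpoint $\frac{(q-1)n}{2(q+1)}$ is not an integer, so together they rule out every integer of $\left(\frac{(q-1)^{2}n}{2(q^{2}+1)},\frac{n}{2}\right)$. What remains is the window $\left(\Delta_{2,4},\frac{(q-1)^{2}n}{2(q^{2}+1)}\right)$, exactly where the loss of $i=4$ would hurt. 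Here I would invoke the identity $\frac{(q-1)^{2}n}{2(q^{2}+1)}=\Delta_{2,4}+\frac{(q-1)^{2}}{q^{2}+1}$, whose correction satisfies $\frac{(q-1)^{2}}{q^{2}+1}<1$; thus this window is shorter than a unit interval and, $\Delta_{2,4}$ being an integer, contains no integer at all. Hence $\delta_2=\Delta_{2,4}$.

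For $\delta_3$ it suffices to rule out the $q-2$ integers $\Delta_{2,4}-u$ with $1\le u\le q-2$ (equivalently $\Delta_{3,4}+u$). My plan is to exhibit a single admissible pair, $i=3$ and $l=\frac{q^{3}-2q^{2}+1}{2}=\frac{(q-1)(q^{2}-q-1)}{2}$ (which satisfies $1\le l\le\frac{q^{3}-1}{2}$), and to verify that each such $\Delta_{2,4}-u$ lies in $\left(\frac{ln}{q^{3}+1},\frac{ln}{q^{3}-1}\right)$. Concretely, a short division gives $\frac{(q^{3}-1)\Delta_{2,4}}{n}=l+\frac{q^{2}-q}{q^{4}+1}$ and $\frac{(q^{3}+1)\Delta_{2,4}}{n}=(l+1)-\frac{2q^{3}-q^{2}-q+2}{q^{4}+1}$, so these two quantities sit just above $l$ and just below $l+1$; replacing $\Delta_{2,4}$ by $\Delta_{2,4}-u$ decreases both, and the required inequalities $\frac{(q^{3}-1)(\Delta_{2,4}-u)}{n}<l<\frac{(q^{3}+1)(\Delta_{2,4}-u)}{n}$ reduce to $(q^{3}-1)u>q^{2}-q$ and $2q^{3}-q^{2}-q+2+(q^{3}+1)u<q^{4}+1$, both of which hold for $1\le u\le q-2$. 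Finally $\delta_4=\Delta_{4,4}$ is immediate, since $\Delta_{4,4}=\Delta_{3,4}-1=\delta_3-1$ is a coset leader by Lemma~\ref{lem5} and no integer lies strictly between $\delta_3-1$ and $\delta_3$. The main obstacle is the $\delta_3$ step: a single value of $l$ must simultaneously capture all $q-2$ intermediate integers, which forces the explicit, uniform-in-$u$ control of the fractional parts of $\frac{(q^{3}\pm1)(\Delta_{2,4}-u)}{n}$ that the unavailable $i=4$ interval would otherwise have provided for free.
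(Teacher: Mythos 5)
Your proposal is correct and follows the paper's approach: the $\delta_2$ step (the two exclusion intervals from $(i,l)=(1,\frac{q-1}{2})$ and $(2,\frac{(q-1)^2}{2})$, plus the observation that $\bigl\lfloor \frac{(q-1)^2n}{2(q^2+1)}\bigr\rfloor=\Delta_{2,4}$, equivalently your identity $\frac{(q-1)^2n}{2(q^2+1)}=\Delta_{2,4}+\frac{(q-1)^2}{q^2+1}$) matches the paper verbatim, and your treatment of $\delta_3$ and $\delta_4$ is a correct instantiation of the Lemma~\ref{lem4}-style argument that the paper invokes without detail. If anything, your version is more explicit than the source: the single pair $(i,l)=\bigl(3,\frac{(q-1)(q^2-q-1)}{2}\bigr)$ with the verified inequalities $(q^3-1)u>q^2-q$ and $2q^3-q^2-q+2+(q^3+1)u<q^4+1$ for $1\le u\le q-2$, together with the one-line observation $\Delta_{4,4}=\Delta_{3,4}-1$, supplies concretely what the paper leaves as ``essentially the same as the proof of Lemma~\ref{lem4}.''
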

\begin{proof}
Let $i=1$ and $l=\frac{q-1}{2}$ in Proposition \ref{prop1}. Then $s$ is not a coset leader if 
\begin{equation*}
	\frac{(q-1)n}{2(q+1)}<s<\frac{n}{2}.
\end{equation*}
Let $i=2$ and $l=\frac{(q-1)^2}{2}$ in Proposition \ref{prop1}. Then $s$ is not a coset leader if 
\begin{equation*}
	\frac{(q-1)^2n}{2(q^2+1)}<s<\frac{(q-1)n}{2(q+1)}.
\end{equation*}
Since $m$ is even, 	 $	\frac{(q-1)n}{2(q+1)}$ is not an integer. Moreover,
\begin{align*}
	\biggl\lfloor 	\frac{(q-1)^2n}{2(q^2+1)} \biggr\rfloor =  \Delta_{2,4}.	 
\end{align*}	
By Lemma \ref{lem5}, we have  $\delta_2=\Delta_{2,4}$. The remaining proof
is essentially the same as that    given in  the proof of
Lemma \ref{lem4}, and then is  omitted  here.
\end{proof}

\begin{lem}\label{lem9}
Let  $q$ be an odd prime power and $m=2^k$ with $k\geq2$. Then   $\left|C_{s}\right|=2m$ for any coset leader  $0<s<\frac{n}{2}$.
\end{lem}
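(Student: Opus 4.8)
The plan is to show that the coset size $l_s=|C_s|$ equals $2m$ by ruling out every smaller possibility simultaneously, exploiting the fact that $2m=2^{k+1}$ is a power of $2$. First I would recall that $q^m=n-1\equiv -1\pmod n$, whence $q^{2m}\equiv 1\pmod n$ and ${\rm Ord}_n(q)=2m$ (the order cannot divide $m$, since $q^m\equiv -1\not\equiv 1\pmod n$ as $n>2$). Consequently $l_s$, being the minimal period of $s$ under multiplication by $q$, is a divisor of $2m=2^{k+1}$. The crucial structural observation is that the divisors of $2^{k+1}$ form the single chain $1\mid 2\mid\cdots\mid 2^{k+1}$, so the \emph{only} divisor of $2m$ that does not already divide $m=2^k$ is $2m$ itself. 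Hence it suffices to prove that $l_s$ cannot divide $m$.

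Next I would argue by contradiction. Assume $l_s\neq 2m$; then by the previous paragraph $l_s\mid m$, say $m=l_s t$. Iterating the defining relation $sq^{l_s}\equiv s\pmod n$ exactly $t$ times yields $sq^{m}\equiv s\pmod n$. On the other hand $q^m\equiv -1\pmod n$, so $sq^m\equiv -s\pmod n$, and comparing the two gives $2s\equiv 0\pmod n$, i.e. $n\mid 2s$. But $0<s<\frac n2$ forces $0<2s<n$, so $n\mid 2s$ is impossible, a contradiction. Therefore $l_s=2m$, which is exactly $|C_s|=2m$.

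Finally, I would remark that the argument never uses the fact that $s$ is a coset leader: it shows $|C_s|=2m$ for \emph{every} residue $s$ with $0<s<\frac n2$, and the stated lemma is simply the special case in which $s$ is moreover the smallest element of its coset. I do not anticipate a genuine obstacle here; the one point that must be handled with care is the reduction in the first paragraph, namely that a proper divisor of $2^{k+1}$ necessarily divides $2^k$. This is precisely where the hypothesis $m=2^k$ (as opposed to a general even $m$) is essential, and it is what makes all intermediate coset sizes collapse to the single value $2m$.
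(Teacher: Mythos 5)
Your proof is correct, and it shares the paper's overall skeleton: both note that $|C_s|$ divides ${\rm Ord}_n(q)=2m=2^{k+1}$, both exploit the fact that the divisors of $2^{k+1}$ form a chain (so any coset size smaller than $2m$ must divide $m=2^k$), and both terminate in the contradiction $n\mid 2s$ against $0<2s<n$. The genuine difference is the middle step. The paper routes through Lemma \ref{lem1}: assuming $|C_s|=2^t$ with $t\le k$, it computes $\gcd(q^{2^k}+1,\,q^{2^t}-1)=2$ and combines this with $n\mid s(q^{2^t}-1)$ to get $n\mid 2s$. You instead iterate $sq^{l_s}\equiv s\pmod n$ up to the exponent $m$ and use $q^m\equiv -1\pmod n$ to obtain $2s\equiv 0\pmod n$ directly. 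Your variant is more elementary and self-contained, avoiding the gcd lemma entirely, and your closing remark is accurate: the coset-leader hypothesis is never used (this is equally true of the paper's argument), so the conclusion holds for every residue $0<s<\frac n2$. What the paper's gcd machinery buys is visible not here but in the companion result, Lemma \ref{lem13}: there $m\equiv 4\pmod 8$ has a nontrivial odd part, so the candidate coset sizes $2^{\epsilon}i'$ with $\epsilon=3$ and odd $i'$ properly dividing the odd part of $m$ divide $2m$ without dividing $m$; your $q^m\equiv-1$ shortcut cannot dispose of those cases, and Lemmas \ref{lem1} and \ref{lem111} become essential. For the power-of-two case at hand, your shortcut is perfectly adequate.
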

\begin{proof}
It is known that ${\rm Ord}_{n}(q)=2m$, then $\left|C_{s}\right|$ is a divisor of $2^{k+1}$. Let  $\left|C_{s}\right|=2^t$ where $ 0\leq t\leq k+1$.  We claim that $t=k+1$. For otherwise, suppose that $t\leq k$.
Since $\frac{2^t}{{\rm gcd}(2^k,2^t)}=1$, by Lemma \ref{lem1}, we have ${\rm gcd}(q^{2^k}+1, ~q^{2^t}-1)=2$.  Then by the definition of cyclotomic coset, we have
$$ n \mid s(q^{2^t}-1),  $$
i.e.,
$$ n \mid 2s,  $$
which is impossible.  The desire conclusion then follows. 
\end{proof}

With  Lemmas \ref{lem6} and \ref{lem9},  we determine the dimensions of 	$\mathcal{C}_{(q,q^4+1,\delta+1,0)}$ as follows.
\begin{thm}\label{th1}
Let  $q$ be an odd prime power and  $n=q^4+1$. Let $\delta_1=\frac{n}{2}$, $\delta_2$, $\delta_3$ and $\delta_4$ be given in Lemma \ref{lem6}. Then the LCD BCH code
$\mathcal{C}_{(q,q^4+1,\delta+1,0)}$ has parameters 
$$[q^4+1,8i-7,d\geq 2\delta]$$
if $\delta_{i+1}+1\leq \delta \leq \delta_{i}$   (i=1,2,3);  and 
$\mathcal{C}_{(q,q^4+1,\delta_{4}+1, 0)}$ has parameters 
$$[q^4+1,25,d\geq 2\delta_{4}].$$  
\end{thm}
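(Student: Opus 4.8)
The plan is to assemble the three ingredients already in place—the automatic LCD property, the distance bound of Lemma \ref{lem8}, and the coset data of Lemmas \ref{lem6} and \ref{lem9}—into a dimension count. First I would record that the codes are LCD without further work: since $n=q^4+1$ satisfies $q^4\equiv -1\pmod n$, the integer $-1$ is a power of $q$ modulo $n$, so by the result of \cite{Li} quoted after Lemma 1 every cyclic code of length $n$, and in particular each $\mathcal{C}_{(q,q^4+1,\delta+1,0)}$, has complementary dual. The distance bound $d\geq 2\delta$ is then immediate from Lemma \ref{lem8} applied with $\lambda=1$ to the designed distance $\delta+1$, giving $d\geq 2((\delta+1)-1)=2\delta$.

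The heart of the argument is the dimension. The defining set of $\mathcal{C}_{(q,q^4+1,\delta+1,0)}$ is $T=\bigcup_{j=0}^{\delta-1}C_j$ and $\dim=n-|T|$. The key observation I would isolate is that, because each cyclotomic coset meets $\{0,\dots,\delta-1\}$ exactly through the size of its leader, a coset $C_s$ with leader $s$ lies in $T$ if and only if $s\leq \delta-1$: if $s\leq\delta-1$ then $s\in\{0,\dots,\delta-1\}$ so $C_s\subseteq T$, while if $s\geq\delta$ then every element of $C_s$ is at least $s>\delta-1$, so $C_s$ misses $\{0,\dots,\delta-1\}$ entirely. Hence $T$ is the disjoint union of the cosets whose leaders are $\leq\delta-1$, and consequently $\dim=n-|T|$ equals the total size of the cosets whose leaders are $\geq\delta$.

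Now I would read off the excluded leaders from Lemma \ref{lem6}. For $\delta_{i+1}+1\leq \delta\leq \delta_i$ the coset leaders $s\geq\delta$ are exactly $\delta_1,\dots,\delta_i$, since $\delta_1>\cdots>\delta_i\geq\delta>\delta_{i+1}$ and, by the meaning of ``$i$-th largest coset leader,'' there is no further leader in the intervening gaps. Using that $\delta_1=\tfrac n2$ gives a singleton coset—one checks $q\cdot\tfrac n2\equiv\tfrac n2\pmod n$ because $q-1$ is even, so $|C_{\delta_1}|=1$—while Lemma \ref{lem9} gives $|C_{\delta_2}|=|C_{\delta_3}|=|C_{\delta_4}|=2m=8$, each being a leader strictly between $0$ and $\tfrac n2$, I obtain $\dim=|C_{\delta_1}|+\cdots+|C_{\delta_i}|=1+8(i-1)=8i-7$ for $i=1,2,3$. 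The final assertion is the case $\delta=\delta_4$, where the excluded leaders are $\delta_1,\dots,\delta_4$, giving $\dim=1+8\cdot 3=25$.

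I do not expect a genuine obstacle here, since the substantive work—verifying that $\Delta_{2,4},\Delta_{3,4},\Delta_{4,4}$ are the second, third and fourth largest coset leaders, and that every interior coset has full size $2m$—is carried by Lemmas \ref{lem6} and \ref{lem9}. The one point demanding care is the bookkeeping of the previous paragraph: one must confirm that no coset leader lies strictly between two consecutive $\delta_i$'s, so that the list of excluded leaders stays equal to $\{\delta_1,\dots,\delta_i\}$ across each half-open range of $\delta$. Stating the last code only at the single value $\delta=\delta_4$, rather than on an interval, is deliberate, since extending it below $\delta_4$ would require knowledge of $\delta_5$, which is not determined in this section.
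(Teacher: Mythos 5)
Your proof is correct and follows exactly the route the paper intends: the theorem is stated there without a written proof, as an immediate assembly of the LCD remark after Lemma 1, the bound $d\geq 2\delta$ from Lemma \ref{lem8} (or Lemma \ref{lem000}), and the coset data of Lemmas \ref{lem6} and \ref{lem9}, which is precisely your dimension count $\dim=\lvert C_{\delta_1}\rvert+\cdots+\lvert C_{\delta_i}\rvert=1+8(i-1)$. Your explicit checks---that $\lvert C_{n/2}\rvert=1$ since $q$ is odd, and that no coset leader lies strictly between consecutive $\delta_i$'s by the definition of ``$i$-th largest''---are exactly the bookkeeping the paper leaves implicit.
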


We  employ SageMath to give  the following  example, where the dimensions of codes are consistent with the results in Theorem \ref{th1}.
\begin{exa}
Let $q=3$ and $m=4$. Then the LCD BCH code  $\mathcal{C}_{(3,82,15,0)}$ has the parameters $[82,17,28]$ and   $\mathcal{C}_{(3,82,17,0)}$ has the parameters   $[82,9,44]$.
\end{exa}

\subsection{\texorpdfstring{$m=8$}{}}

The proofs of the following partial results are omitted because they  are essentially the same as that the one  given in the case where $ m\geq 12$ and  $m\equiv 4~ ({\rm mod}~ 8)$.
\begin{lem}\label{lem7}
Let $q$ be an odd prime power,  $m=8$ and $n=q^8+1$. Let 
\begin{align*}
	&\Delta_{2,8}=\frac{(q-1)^2(q^2-1)(q^4-1)}{2}, \Delta_{3,8}=\frac{(q-1)^2(q^2-1)(q^4-1)}{2}-(q-1)^2 ~\text{and}\\
	&\Delta_{4,8}=\frac{(q-1)^2(q^2-1)(q^4-1)}{2}-(q^3+q^2).
\end{align*} Then $\Delta_{2,8}$, $\Delta_{3,8}$ and  $\Delta_{4,8}$ are coset leaders.
\end{lem}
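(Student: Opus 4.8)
The plan is to prove Lemma~\ref{lem7} by applying Proposition~\ref{prop1} with $m=8$ to each of the three candidates $\Delta\in\{\Delta_{2,8},\Delta_{3,8},\Delta_{4,8}\}$, following verbatim the three-step template already developed for $\Delta_3$ in Lemmas~\ref{lem3}, \ref{lem11} and \ref{lem14}. First I would rewrite each candidate as $\frac{n}{2}$ plus a short signed sum of powers of $q$; a direct expansion of the products gives
\begin{align*}
\Delta_{2,8}&=\frac{n}{2}-q^{7}+q^{5}-q^{4}+q^{3}-q,\\
\Delta_{3,8}&=\frac{n}{2}-q^{7}+q^{5}-q^{4}+q^{3}-q^{2}+q-1,\\
\Delta_{4,8}&=\frac{n}{2}-q^{7}+q^{5}-q^{4}-q^{2}-q.
\end{align*}
In particular $0<\Delta<\frac{n}{2}=\delta_1$ for each of the three, which supplies the condition $s\leq\frac{n}{2}$ of Proposition~\ref{prop1} for free. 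It then remains to rule out every admissible representation $\Delta=l q^{m-i}+h$.

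Since $m=8$, the index $i$ ranges only over the finite set $\mathbb{Z}_8\backslash\{0\}=\{1,\dots,7\}$, so there are no infinite families $A_j$ to handle and the whole computation is fully explicit. Using $R(q^{\alpha},q^{\beta})=0$ for $\alpha\geq\beta$ and $R(q^{\alpha},q^{\beta})=q^{\alpha}$ for $\alpha<\beta$, together with $R(\frac{n}{2},q^{8-i})=\frac{q^{8-i}+1}{2}$, I would compute $R(\Delta,q^{8-i})$ termwise exactly as in Eq.~(\ref{eq0}). For each $\Delta$ this produces a partition of $\{1,\dots,7\}$ into $S_1=\{i:R(\Delta,q^{8-i})\geq\frac{q^{8-i}+1}{2}\}$ and $S_2=\{i:R(\Delta,q^{8-i})<\frac{q^{8-i}+1}{2}\}$, the analogue of Lemma~\ref{lem3}; for instance $S_1=\{1,2,4,7\}$ and $S_2=\{3,5,6\}$ for $\Delta_{2,8}$, the sign of the leading correction term beyond $\frac{q^{8-i}+1}{2}$ deciding membership.

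Next, mirroring Lemma~\ref{lem11}, for $i\in S_1$ I set $h_1(i)=R(\Delta,q^{8-i})-q^{8-i}<0$ and $l_1(i)=\frac{\Delta-h_1(i)}{q^{8-i}}$, and for $i\in S_2$ I set $h_2(i)=R(\Delta,q^{8-i})\geq0$ and $l_2(i)=\frac{\Delta-h_2(i)}{q^{8-i}}$; in both cases $|h(i)|<\frac{q^{8-i}+1}{2}$, so this is the unique residue in that window. The core is then to establish
\begin{align*}
l_1(i)(q^{8-i}-1)&<-h_1(i)(q^{i}+1)\quad(i\in S_1),\\
l_2(i)(q^{8-i}+1)&<h_2(i)(q^{i}-1)\quad(i\in S_2),
\end{align*}
which I would verify, as in Lemma~\ref{lem11}, by comparing the top few powers of $q$ on each side, each side being an explicit polynomial in $q$. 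Finally, exactly as in Lemma~\ref{lem14}, any hypothetical admissible representation $\Delta=l(i)q^{8-i}+h(i)$ forces $h(i)$ to equal the residue computed above (via $h(i)\equiv R(\Delta,q^{8-i})\pmod{q^{8-i}}$ and $|h(i)|<\frac{q^{8-i}+1}{2}$), whereupon the displayed inequality shows that $h(i)$ falls outside $-\frac{l(q^{8-i}-1)}{q^{i}+1}<h<\frac{l(q^{8-i}+1)}{q^{i}-1}$; Proposition~\ref{prop1} then yields that each $\Delta$ is a coset leader.

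The hard part will be purely computational rather than conceptual: carrying out the residue calculations and the leading-term inequality checks correctly for all three values of $\Delta$ and all seven indices $i$, and in particular confirming that the strict comparisons persist for the smallest odd prime power $q=3$, where the gaps between successive powers of $q$ are tightest and the leading-term heuristic is most likely to fail. Because the case $m=8$ is finite and small, no idea beyond the $m\geq12$ template is required, which is precisely why these details may be omitted.
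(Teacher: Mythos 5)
Your proposal is correct and takes essentially the same approach the paper intends: the paper explicitly omits the proof of this lemma because it is ``essentially the same'' as the $m\geq 12$, $m\equiv 4~({\rm mod}~8)$ template of Lemmas~\ref{lem3}, \ref{lem11} and \ref{lem14}, which is exactly the residue-partition-plus-inequality argument you instantiate at $m=8$ via Proposition~\ref{prop1}. Your expansions of $\Delta_{2,8}$, $\Delta_{3,8}$, $\Delta_{4,8}$ about $\frac{n}{2}$ and your sample partition $S_1=\{1,2,4,7\}$, $S_2=\{3,5,6\}$ for $\Delta_{2,8}$ are all accurate, and spot-checking the resulting inequalities (including at $q=3$) confirms they hold.
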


\begin{lem}\label{lem10}
With the notation given before. Then 
\begin{equation*}
	\delta_2=\Delta_{2,8}, \delta_{3}=\Delta_{3,8} ~\text{and}~\delta_{4}=\Delta_{4,8}.
\end{equation*}
\end{lem}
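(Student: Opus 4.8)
\emph{Overall strategy.} I would mirror the proofs of Lemmas \ref{lem4} and \ref{lem6}. Lemma \ref{lem7} already guarantees that $\Delta_{2,8}$, $\Delta_{3,8}$ and $\Delta_{4,8}$ are coset leaders, and plainly $\tfrac n2=\delta_1>\Delta_{2,8}>\Delta_{3,8}>\Delta_{4,8}$. Hence it suffices to show that no integer in any of the three open intervals $(\Delta_{2,8},\tfrac n2)$, $(\Delta_{3,8},\Delta_{2,8})$ and $(\Delta_{4,8},\Delta_{3,8})$ is a coset leader; the three equalities then follow immediately. Every non-coset-leader will be detected through Proposition \ref{prop1}.

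\emph{The interval below $\tfrac n2$.} For $\delta_2=\Delta_{2,8}$ I would run the sweep already used in Lemma \ref{lem6}. Feeding $(i,l)=(1,\tfrac{q-1}2)$, $(2,\tfrac{(q-1)^2}2)$ and $(4,\tfrac{(q-1)^2(q^2-1)}2)$ into Proposition \ref{prop1}, each choice rules out every integer in $\big(\tfrac{ln}{q^i+1},\tfrac{ln}{q^i-1}\big)$, and a direct computation shows these three intervals chain together (the top endpoint of one equals the bottom endpoint of the next) and cover $\big(\tfrac{(q-1)^2(q^2-1)n}{2(q^4+1)},\tfrac n2\big)$. Two checks then close this case: the interior seams $\tfrac{(q-1)n}{2(q+1)}$ and $\tfrac{(q-1)^2n}{2(q^2+1)}$ are non-integers because $m$ is even, so nothing slips through them; and $\big\lfloor\tfrac{(q-1)^2(q^2-1)n}{2(q^4+1)}\big\rfloor=\Delta_{2,8}$, since $\tfrac{(q-1)^2(q^2-1)n}{2(q^4+1)}=\Delta_{2,8}+\tfrac{(q-1)^2(q^2-1)}{q^4+1}$ with $0<(q-1)^2(q^2-1)=q^4-2q^3+2q-1<q^4+1$. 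Thus there is no coset leader strictly between $\Delta_{2,8}$ and $\tfrac n2$, and with Lemma \ref{lem7} this yields $\delta_2=\Delta_{2,8}$. (Since $m=2^3$, this equality is also contained in Theorem \ref{thm0}.)

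\emph{The two short intervals.} For $\delta_3$ and $\delta_4$ I would show that every integer $s$ with $\Delta_{3,8}<s<\Delta_{2,8}$, respectively $\Delta_{4,8}<s<\Delta_{3,8}$, fails to be a coset leader. As in Lemma \ref{lem4}, for a fixed admissible $l$ the range conditions on $h$ in Proposition \ref{prop1} are equivalent to $\tfrac{(q^i-1)s}{n}<l<\tfrac{(q^i+1)s}{n}$, so it is enough to exhibit some $i\in\{1,\dots,m-1\}$ and an integer $l$ in this window with $1\le l\le\tfrac{q^i-1}2$. The upper constraint comes for free: using $s<\Delta_{2,8}$ one has $\tfrac{(q^i+1)s}{n}<\tfrac{(q^i+1)\Delta_{2,8}}{n}<\tfrac{q^i-1}2$, the last inequality being $(q^i+1)(q-1)^2(q^2-1)(q^4-1)<(q^i-1)(q^8+1)$, a routine estimate valid for all $i\ge1$. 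Hence any integer landing in the window is automatically a legal value of $l$.

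\emph{Main obstacle.} The real content, and the step I expect to be hardest, is to guarantee that for \emph{some} admissible $i$ the window $\big(\tfrac{(q^i-1)s}{n},\tfrac{(q^i+1)s}{n}\big)$ actually contains an integer. Its length is $\tfrac{2s}{n}<1$, so this cannot be obtained from a length count: existence of $l$ is equivalent to the existence of an exponent $i\in\{1,\dots,m-1\}$ with $sq^i\bmod n<s$ or $sq^i\bmod n>n-s$, i.e.\ to some cyclic $q$-multiple of $s$ dropping below $s$. Confirming this for every $s$ in the two (small but $q$-dependent) gaps $\Delta_{2,8}-\Delta_{3,8}=(q-1)^2$ and $\Delta_{3,8}-\Delta_{4,8}=q^3+2q-1$ is exactly the estimate carried out in Lemma \ref{lem4}; since it runs by the same argument given there, this is also why the full proof may be omitted as being ``essentially the same'' as the case $m\ge12$.
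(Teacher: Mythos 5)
Your proposal is correct and coincides with the paper's own proof: the paper likewise rules out all of $(\Delta_{2,8},\tfrac n2)$ via Proposition \ref{prop1} with the same three pairs $(i,l)=(1,\tfrac{q-1}2),(2,\tfrac{(q-1)^2}2),(4,\tfrac{(q-1)^2(q^2-1)}2)$, invokes the non-integrality of the two seam points and the identity $\bigl\lfloor\tfrac{(q-1)^2(q^2-1)n}{2(q^4+1)}\bigr\rfloor=\Delta_{2,8}$ together with Lemma \ref{lem7} to get $\delta_2=\Delta_{2,8}$, and then dispatches $\delta_3$ and $\delta_4$ by declaring the remaining argument ``essentially the same as the proof of Lemma \ref{lem4},'' exactly as you do. Your extra details --- the explicit verification of the floor identity and the observation that the genuine content of the deferred step is the existence of an integer in the window $\bigl(\tfrac{(q^i-1)s}{n},\tfrac{(q^i+1)s}{n}\bigr)$ rather than the endpoint estimates --- only make the shared argument more transparent.
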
 
\begin{proof}
Let $i=1$ and $l=\frac{q-1}{2}$ in Proposition \ref{prop1}. Then $s$ is not a coset leader if 
\begin{equation*}
	\frac{(q-1)n}{2(q+1)}<s<\frac{n}{2}.
\end{equation*}
Let $i=2$ and $l=\frac{(q-1)^2}{2}$ in Proposition \ref{prop1}. Then $s$ is not a coset leader if 
\begin{equation*}
	\frac{(q-1)^2n}{2(q^2+1)}<s<\frac{(q-1)n}{2(q+1)}.
\end{equation*}
Let $i=4$ and $l=\frac{(q-1)^2(q^2-1)}{2}$ in Proposition \ref{prop1}. Then $s$ is not a coset leader if 
\begin{equation*}
	\frac{(q-1)^2(q^2-1)n}{2(q^4+1)}<s<	\frac{(q-1)^2n}{2(q^2+1)}.
\end{equation*}
Since $m$ is even, 	$\frac{(q-1)^2n}{2(q^2+1)}$ and $	\frac{(q-1)n}{2(q+1)}$ are not integers. Moreover,
\begin{align*}
	\biggl\lfloor 	\frac{(q-1)^2(q^2-1)n}{2(q^4+1)} \biggr\rfloor =  	\Delta_{2,8}.
\end{align*}
By Lemma \ref{lem7}, we obtain that $\delta_{2}=\Delta_{2,8}$. 
The remaining proof
is essentially the same as that    given in  the proof of
Lemma \ref{lem4}, and then  is omitted  here.
\end{proof}

With  Lemmas \ref{lem9} and \ref{lem10},  we determine the dimension of 	$\mathcal{C}_{(q,q^8+1,\delta+1,0)}$ as follows.
\begin{thm}\label{th2}
Let  $q$ be an odd prime power and  $n=q^8+1$. Let $\delta_1=\frac{n}{2}$, $\delta_{2}$, $\delta_{3}$ and $\delta_{4}$ be given in Lemma \ref{lem10}. Then the LCD BCH code
$\mathcal{C}_{(q,q^8+1,\delta+1,0)}$ has parameters 
$$[q^8+1,16i-15,d\geq 2\delta]$$
if $\delta_{i+1}+1\leq \delta \leq \delta_{i}$   (i=1,2,3);  and 
$\mathcal{C}_{(q,q^8+1,\delta_{4}+1, 0)}$ has parameters 
$$[q^8+1,49,d\geq 2\delta_{4}].$$  
\end{thm}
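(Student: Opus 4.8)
The plan is to reduce the whole statement to a count of cyclotomic coset leaders lying below the designed distance, since the LCD property and the distance bound are immediate from results already in place. For $\mathcal{C}_{(q,q^8+1,\delta+1,0)}$ the defining set with respect to $\beta$ is $T=\bigcup_{s=0}^{\delta-1}C_s$, and because every $C_s$ equals $C_{\ell}$ for its leader $\ell\le s$, we get the disjoint union $T=\bigcup\{C_\ell:\ell\text{ is a coset leader},\,0\le \ell\le \delta-1\}$. Hence $k=n-|T|=n-\sum_{\ell}|C_\ell|$, the sum over all coset leaders $\ell\le\delta-1$, so the problem is entirely about (a) the sizes $|C_\ell|$ and (b) which leaders fall below $\delta-1$ in each range. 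The two easy parts I would clear first: the length is $n=q^8+1=(q^m+1)/\lambda$ with $\lambda=1\mid q+1$, so the code is LCD by the observation recorded just before Lemma \ref{lem000}; and applying Lemma \ref{lem000} with designed distance $\delta+1$ gives $d\ge 2((\delta+1)-1)=2\delta$, the claimed bound. Only the dimension then remains.

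Next I would pin down the coset sizes. Here $m=8=2^{3}$, so Lemma \ref{lem9} gives $|C_\ell|=2m=16$ for every coset leader $\ell$ with $0<\ell<n/2$, while separately $|C_0|=1$ and $|C_{n/2}|=1$. Since the cosets partition $\mathbb{Z}_n$, writing $P$ for the number of coset leaders in the open interval $(0,n/2)$ yields $n=1+1+16P$, i.e. $P=(n-2)/16$. Then I invoke Lemma \ref{lem10}: $\delta_2=\Delta_{2,8}$, $\delta_3=\Delta_{3,8}$, $\delta_4=\Delta_{4,8}$ are the second, third and fourth largest coset leaders, which together with $\delta_1=n/2$ are consecutive in the ordering, so no coset leader lies strictly between $\delta_{i+1}$ and $\delta_i$. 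Note each of $\delta_2,\delta_3,\delta_4$ lies in $(0,n/2)$ and so has size $16$.

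The core is then the bookkeeping. For $\delta$ in the range $\delta_{i+1}+1\le\delta\le\delta_i$ one has $\delta_{i+1}\le\delta-1<\delta_i$, so the coset leaders exceeding $\delta-1$ are exactly the $i$ largest ones $\delta_1,\dots,\delta_i$; equivalently the leaders $\le\delta-1$ consist of $0$ together with all $P$ leaders in $(0,n/2)$ except the $i-1$ leaders $\delta_2,\dots,\delta_i$ (note $\delta_1=n/2\notin(0,n/2)$). Therefore $|T|=1+16\bigl(P-(i-1)\bigr)=n-1-16(i-1)$ and $k=n-|T|=16i-15$, giving $1,17,33$ for $i=1,2,3$. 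The final assertion is the specialisation $\delta=\delta_4$, i.e. the $i=4$ case, for which $|T|=1+16(P-3)=n-49$ and $k=49$.

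The genuine mathematical work, namely identifying $\Delta_{2,8},\Delta_{3,8},\Delta_{4,8}$ as coset leaders and as the top four and computing all coset sizes, is already done in Lemmas \ref{lem7}, \ref{lem9} and \ref{lem10}, so the main obstacle here is purely the careful bookkeeping. The delicate points are keeping the exceptional cosets $C_0$ and $C_{n/2}$, each of size $1$, out of the size-$16$ count; checking that $\delta_1=n/2$ never enters $T$ because $\delta-1\le n/2-1$; and confirming that ``$\delta_i$ is the $i$-th largest leader'' genuinely forbids any stray leader in the gaps. These are exactly the spots where an off-by-one would corrupt the dimension formula $16i-15$.
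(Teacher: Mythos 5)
Your proposal is correct and follows exactly the route the paper intends: the theorem is stated as a direct consequence of Lemmas \ref{lem9} and \ref{lem10} (plus Lemma \ref{lem000} for $d\geq 2\delta$ and the LCD observation), and your bookkeeping $k=n-|T|=16i-15$, with the size-$1$ cosets $C_0$ and $C_{n/2}$ handled separately, is precisely the omitted computation. The only point you gloss over slightly is that every coset leader is at most $n/2$ (so all cosets other than $C_0,C_{n/2}$ have size $16$), but this follows from $q^m\equiv -1 \pmod n$, equivalently from $\delta_1=n/2$ being the largest coset leader as recorded in Theorem \ref{thm0}.
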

We  employ SageMath to give  the following example, where the dimensions of codes are consistent with the results in Theorem \ref{th2}.
\begin{exa}
Let $q=3$ and $m=8$. Then the LCD BCH code  $\mathcal{C}_{(3,6562,1281,0)}$ has the parameters $[6562, 17, 4268]$ and   $\mathcal{C}_{(3,6562,1277,0)}$ has the parameters   $[6562,33,d\geq2552]$.
\end{exa}

\section{Dimensions of BCH codes of length \texorpdfstring {$n=\frac{q^m+1}{\lambda}$}{}}

Let $\lambda>1$ be a divisor of $q+1$. In this section, we will   determine the dimensions of LCD BCH codes $\mathcal{C}_{(q,n,\delta+1,0)}$ for $1\leq \delta-1 \leq \frac{ q^{\lfloor(m+1)/2\rfloor}}{\lambda}$. For some special values of $\lambda$, we also give the first few largest coset leaders  and then determine the dimensions of some LCD BCH code of length $n=\frac{q^m+1}{\lambda}$ with   large designed distances. The following lemma will play an important role in this
section.

\begin{lem}\label{lemm1}
For every integer $1\leq s \leq n-1$,  $s$ is (or is  not) a coset leader of $C_s$ modulo $n$ if and only if $\lambda s$ is (or is not) a coset leader of $C_{\lambda s}$ modulo $\lambda n$. Moreover, $\left|C_{ s}\right|=\left|C_{\lambda s}\right|$.
\end{lem}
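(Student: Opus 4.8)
The plan is to exhibit an explicit order-preserving bijection between the $q$-cyclotomic coset $C_s$ modulo $n$ and the coset $C_{\lambda s}$ modulo $\lambda n$, exploiting that $\lambda n = q^m+1$ and that multiplication by the fixed positive integer $\lambda$ is a strictly increasing injection. All three assertions (the ``is'' equivalence, the ``is not'' equivalence, and the size equality) will then drop out of this single bijection.

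First I would establish the key reduction identity: for every $j\geq 0$,
\[
\lambda s q^{j}\bmod \lambda n \;=\; \lambda\bigl(s q^{j}\bmod n\bigr).
\]
Writing $sq^{j}=an+r$ with $0\leq r<n$ gives $\lambda sq^{j}=a(\lambda n)+\lambda r$ with $0\leq \lambda r<\lambda n$, so $\lambda r$ is exactly the canonical representative in $\{0,\dots,\lambda n-1\}$. Hence the map $t\mapsto \lambda t$ sends the residue $sq^{j}\bmod n$ to the residue $\lambda s q^{j}\bmod \lambda n$, and therefore $C_{\lambda s}=\{\lambda t: t\in C_s\}$. Since $\lambda x\equiv \lambda y \pmod{\lambda n}$ forces $x\equiv y\pmod n$, this map is injective on $\{0,\dots,n-1\}$, so it is a bijection between $C_s$ and $C_{\lambda s}$; this immediately yields $|C_s|=|C_{\lambda s}|$. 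Equivalently, $sq^{l}\equiv s\pmod n$ holds if and only if $\lambda s q^{l}\equiv \lambda s\pmod{\lambda n}$, so the two cosets share the same least period $l_s=l_{\lambda s}$.

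For the coset-leader statement I would invoke monotonicity: because $\lambda>0$, the map $t\mapsto \lambda t$ is strictly increasing on the nonnegative integers, so it carries the minimum of $C_s$ to the minimum of $C_{\lambda s}$, that is, $\min C_{\lambda s}=\lambda\,\min C_s$. Consequently $s=\min C_s$ (i.e.\ $s$ is the coset leader modulo $n$) if and only if $\lambda s=\min C_{\lambda s}$ (i.e.\ $\lambda s$ is the coset leader modulo $\lambda n$), and the ``is not'' version is simply the logical complement of this equivalence.

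The main, and essentially only, obstacle is the bookkeeping in the reduction identity: one must check carefully that reducing $\lambda s q^{j}$ modulo $\lambda n$ produces \emph{exactly} $\lambda$ times the reduction of $sq^{j}$ modulo $n$, so that both the orbit structure under multiplication by $q$ and the natural order on $\{0,\dots,n-1\}$ are transported faithfully. Once that identity is secured, the bijection, the size equality, and the coset-leader equivalence all follow formally with no further computation.
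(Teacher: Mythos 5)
Your proof is correct, and it takes a genuinely different route from the paper's. The paper proves the forward direction by contradiction: if $\lambda s$ were not a leader modulo $\lambda n$, there would exist $1\leq J<\lambda s$ and $l$ with $\lambda s\equiv Jq^{l}\pmod{\lambda n}$; since $J=\lambda j$ with $1\leq j<s$ would contradict $s$ being a leader modulo $n$, one writes $J=h+\lambda c$ with $1\leq h\leq\lambda-1$, and reducing the congruence modulo $\lambda$ gives $\lambda\mid hq^{l}$, impossible because $\lambda\mid q+1$ forces $\gcd(\lambda,q)=1$. The converse is dismissed as obvious, and the size equality is handled by a separate divisibility comparison of $n\mid s(q^{l_{1}}-1)$ with $\lambda n\mid \lambda s(q^{l_{2}}-1)$. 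You instead establish the exact reduction identity $\lambda sq^{j}\bmod \lambda n=\lambda\,(sq^{j}\bmod n)$, hence the set equality $C_{\lambda s}=\lambda C_{s}$ among canonical representatives, and all three claims (leader, non-leader, and $|C_{s}|=|C_{\lambda s}|$) fall out of the single fact that multiplication by $\lambda$ is a strictly increasing injection commuting with the $q$-orbit structure. Your route buys economy (one identity replaces three separate arguments) and strictly more generality: you nowhere use $\lambda\mid q+1$, nor even $\gcd(\lambda,q)=1$, which is precisely what the paper's step ruling out $\lambda\mid hq^{l}$ requires, so your argument works for an arbitrary positive integer $\lambda$ once the cosets are defined. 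The paper's case analysis gains nothing in exchange here; as a minor aside, its displayed step contains the slip $(h+\lambda s)q^{l}$ where $(h+\lambda c)q^{l}$ is meant, the sort of bookkeeping hazard your structural identity avoids entirely.
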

\begin{proof}
We claim that $\lambda s$ is a coset leader modulo $\lambda n$ if $s$ is a coset leader modulo $n$. For otherwise, there exists an integer $1 \leq J<\lambda s$ such that
\begin{align}\label{equ00}
	\lambda s \equiv  Jq^l ~({\rm mod } ~\lambda n)	 
\end{align}
for some $l$. Then $J\neq \lambda j$ for $1\leq j <s$ because $s$ is a coset leader modulo $n$. Put $J=h+\lambda c$, where $h\in \{1,2,..,\lambda-1\}$ and  $c \in \{0,1,\cdots,s-1\}$. By (\ref{equ00}), we get 
$ \lambda \mid   \lambda s - (h+\lambda s )q^l$ and $\lambda\mid hq^l$, which is impossible. Hence, the above claim holds. If $\lambda s$ is a coset leader modulo  $\lambda n$, it is obvious that $s$ is a coset leader modulo $n$.  The partial result of `is not' can be proved by  a similar analysis. Let $l_1$ and $l_2 $ be the sizes of $\left|C_s\right|$ and $\left|C_{\lambda s}\right|$, respectively. Then  $n \mid s(q^{l_1}-1)$ and $\lambda n\mid 
\lambda s (q^{l_2}-1)$. By the definition of the size of cyclotomic coset,  we obtain $l_1=l_2$.  The result follows.
\end{proof}

Here, we list some results on the coset leaders modulo $q^m+1$. 
\begin{prop}\label{pro4}\cite{Li}
If $1\leq s \leq q^{\lfloor  \frac{m-1}{2} \rfloor}+1$ and $s \not\equiv 0~ ({\rm mod }~q)$, then $s$ is a coset leader with $\left|C_s\right|=2m$.
\end{prop}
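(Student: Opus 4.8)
The plan is to prove the two assertions separately: that $s$ is a coset leader, and that $|C_s|=2m$. Throughout write $t=\lfloor (m-1)/2\rfloor$, so the hypotheses read $1\le s\le q^{t}+1$ and $q\nmid s$.

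For the coset-leader claim I would work directly with the residues $r_j:=sq^{j}\bmod n$ for $0\le j\le m-1$. Since $q^{m}\equiv -1\pmod n$, the coset $C_s$ is exactly $\{r_j,\ n-r_j:0\le j\le m-1\}$, so $s$ is a coset leader precisely when $r_j\in[s,\,n-s]$ for every $j$; this is the same condition encoded by Proposition \ref{prop1}, because $s=lq^{m-i}+h$ admits an admissible pair $(l,h)$ if and only if $r_i\notin[s,\,n-s]$. First I would dispose of the small range $1\le j\le m-1-t$: here $sq^{j}\le(q^{t}+1)q^{m-1-t}<n-s$, so no wraparound occurs, $r_j=sq^{j}\ge s$, and the bound $r_j\le n-s$ is a one-line estimate. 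The substance is the range $m-t\le j\le m-1$, which I would treat through the base-$q$ digits of $s$. Writing $s=\sum_{k=0}^{t}s_k q^{k}$ with $s_0\neq 0$ (this is exactly where $q\nmid s$ enters), multiplication by $q^{j}$ together with $q^{m}\equiv -1$ splits $sq^{j}$ into a positive part $P=\sum_{k+j<m}s_k q^{k+j}$ and a negative part $N=\sum_{k+j\ge m}s_k q^{k+j-m}$, so $r_j=P-N$. Since $P\ge s_0 q^{j}\ge q^{m-t}$ and $N<q^{t}$, the lower estimate $r_j\ge q^{m-t}-q^{t}\ge s$ follows; and since $s\le q^{t}+1$ forces $P\le q^{m}-q^{j}\le q^{m}-q^{m-t}\le n-s$, we get $r_j\le P\le n-s$.

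For the size claim I would use that $|C_s|$ divides $\mathrm{Ord}_n(q)=2m$ and rule out every proper divisor $d\mid 2m$. If $|C_s|=d<2m$ then $n\mid s(q^{d}-1)$, hence $\tfrac{n}{\gcd(n,q^{d}-1)}\mid s$. By Lemma \ref{lem1} with $b=q$ odd, $\gcd(q^{m}+1,q^{d}-1)$ equals $2$ when $d/\gcd(m,d)$ is odd and $q^{\gcd(m,d)}+1$ when it is even. In the first case $s\ge n/2$, impossible since $s\le q^{t}+1$; in the second case, setting $g=\gcd(m,d)$ one checks that $g=m$ forces $d/g=1$ (odd), so $g<m$ and thus $g\le m/2$, whence $s\ge\tfrac{n}{q^{g}+1}\ge\tfrac{q^{m}+1}{q^{m/2}+1}>q^{t}+1$, again a contradiction. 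Therefore $|C_s|=2m$.

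The main obstacle is the wraparound range $m-t\le j\le m-1$: controlling $r_j=sq^{j}\bmod n$ once $sq^{j}$ exceeds $n$. The decisive point is that $q\nmid s$ guarantees a nonzero lowest digit $s_0$, which keeps the positive part $P$ bounded below by $q^{j}\ge q^{m-t}$; without this, $s$ could be a multiple of $q$ and fail to be minimal in its coset (for instance $s=q$ has coset leader $1$). I expect the upper estimate $P\le n-s$ to demand the most careful bookkeeping, since $P$ may carry its top digit at position $m-1$; the saving grace is that $s$ has at most $t+1$ nonzero base-$q$ digits and $s\le q^{t}+1$, which caps how large $P$ can be.
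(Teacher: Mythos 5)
The paper does not prove this proposition --- it is quoted verbatim from \cite{Li} --- so your argument can only be checked on its own merits, and it holds up. Your decomposition $r_j = P - N$ with $P=\sum_{k+j<m}s_kq^{k+j}$, $N=\sum_{k+j\ge m}s_kq^{k+j-m}$ is valid (both bounds $0< P-N<n$ follow from your estimates $P\ge s_0q^{j}\ge q^{m-t}$, $N\le q^{t}-1$, $P\le q^{m}-q^{j}$, so $r_j$ really equals $P-N$), the three ranges of $j$ are exhaustive, and the inequalities $q^{m-t}-q^{t}\ge q^{t}(q-1)\ge s$ and $q^{m}-q^{m-t}<q^{m}-q^{t}\le n-s$ close the coset-leader claim; the size claim via Lemma \ref{lem1} is also correct, since in the even case $g=\gcd(m,d)<m$ forces $g\le m/2$ and $\tfrac{n}{q^{g}+1}\ge\tfrac{q^{m}+1}{q^{m/2}+1}>q^{\lfloor (m-1)/2\rfloor}+1$ for $q\ge 3$, $m\ge 3$ (for $m=2$ the even case is vacuous, as $d\in\{1,2\}$ both give $d/g$ odd). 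This digit-splitting plus gcd argument is essentially the standard route taken in \cite{Li} itself. Two small caveats you should make explicit: the statement implicitly needs $m\ge 2$ --- for $m=1$, $q=3$, $s=2$ one has $2\cdot 3\equiv 2\ \ ({\rm mod}\ 4)$, so $\left|C_2\right|=1\ne 2m$ --- and several of your estimates (e.g.\ $q^{t}(q-1)\ge s$ and $2q^{m-1}<q^{m}$ in the small range) use $q\ge 3$, which is fine in this paper's odd-$q$ setting but should be flagged if the proposition is meant for arbitrary prime powers.
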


\begin{prop}\cite{liu}
Let $m \geq 2$ be an even  integer. For $q^{\frac{m}{2}-1} \leq s \leq q^{\frac{m}{2}}$ with $s \not\equiv 0~ ({\rm mod} ~q) $, $s$ is a coset leader with $\left|C_s\right|=2m$.  
\end{prop}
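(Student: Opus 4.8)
The plan is to prove the two assertions separately: first that $s$ is a coset leader, using the characterization of Proposition \ref{prop1}, and then that $|C_s|=2m$, following the gcd-based method of Lemma \ref{lem13}. For the first part, observe that $s\le q^{m/2}<\frac{n}{2}$ (the inequality $2q^{m/2}<q^m+1$ is just $(q^{m/2}-1)^2>0$), so by Proposition \ref{prop1} it suffices to show $s\ne lq^{m-i}+h$ for every $i\in\{1,\dots,m-1\}$ and every pair $(l,h)$ obeying $1\le l\le\frac{q^i-1}{2}$ and $-\frac{l(q^{m-i}-1)}{q^i+1}<h<\frac{l(q^{m-i}+1)}{q^i-1}$. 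First I would record the standard consequence of these bounds: if $s=lq^{m-i}+h$ held, then $\frac{ln}{q^i+1}<s<\frac{ln}{q^i-1}$, equivalently
\begin{equation*}
\frac{(q^i-1)s}{n}<l<\frac{(q^i+1)s}{n}.
\end{equation*}

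The argument then splits according to the size of $i$. For $1\le i\le\frac{m}{2}-1$ I would show $\frac{(q^i+1)s}{n}<1$, so that no integer $l\ge1$ lies in the interval above; this reduces to $\frac{n}{q^i+1}>q^{m/2}\ge s$, i.e.\ $q^m-q^{m/2}+1>q^{m/2+i}$, which holds because $m/2+i\le m-1$. For $\frac{m}{2}\le i\le m-1$ I would argue through the balanced representation: the bound $l\le\frac{q^i-1}{2}$ forces $|h|\le\frac{q^{m-i}-1}{2}$, so $h$ is the (uniquely determined) balanced residue of $s$ modulo $q^{m-i}$. Since $m-i\ge1$, the hypothesis $s\not\equiv0\pmod q$ rules out $h=0$ (which would force $q^{m-i}\mid s$). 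Writing $i'=m-i$ and using $s\le q^{m/2}$ I would bound $l\le q^{m/2-i'}$; for $1\le i'\le\frac{m}{2}-1$ this yields $\frac{l(q^{m-i}+1)}{q^i-1}\le1$ and $\frac{l(q^{m-i}-1)}{q^i+1}\le1$, so the admissible open interval for $h$ contains no nonzero integer, contradicting $h\ne0$.

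The delicate boundary, and the step I expect to be the main obstacle, is $i=\frac{m}{2}$ (so $i'=\frac{m}{2}$), where the generic estimate degenerates: here $l\le q^{0}=1$ forces $l=1$, and the admissible range for $h$ becomes $-\frac{q^{m/2}-1}{q^{m/2}+1}<h<\frac{q^{m/2}+1}{q^{m/2}-1}$, i.e.\ $h\in\{0,1\}$. The case $h=0$ gives $s=q^{m/2}\equiv0\pmod q$, excluded by hypothesis; the case $h=1$ gives $s=q^{m/2}+1>q^{m/2}$, outside the stated range. Hence no valid representation exists for any $i$, so $s$ is a coset leader. It is precisely at this boundary that both the congruence condition $s\not\equiv0\pmod q$ and the upper bound $s\le q^{m/2}$ are needed simultaneously, which is why the statement is phrased for this exact range.

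For the size, I would note $|C_s|$ divides ${\rm Ord}_n(q)=2m$; supposing $t:=|C_s|<2m$ gives $n\mid s(q^t-1)$, hence $\frac{n}{\gcd(n,q^t-1)}\mid s$. By Lemma \ref{lem1}, $\gcd(n,q^t-1)$ equals $2$ when $\frac{t}{\gcd(m,t)}$ is odd and $q^{\gcd(m,t)}+1$ otherwise. In the first case $\frac{n}{2}\mid s$ is impossible since $0<s<\frac{n}{2}$. In the second case, writing $g=\gcd(m,t)$, the relation $\frac{n}{q^g+1}\mid s\le q^{m/2}$ forces $g\ge\frac{m}{2}$; but $\frac{t}{g}$ even forces $\nu_2(t)>\nu_2(m)$ and hence $\frac{m}{g}$ odd, which excludes $g=\frac{m}{2}$ (that would require $\frac{m}{g}=2$ even) and leaves only $g=m$, whence $t=m$ and $\frac{t}{g}=1$ is odd --- a contradiction. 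Therefore $t=2m$. This part mirrors Lemma \ref{lem13} and should be routine once the $2$-adic bookkeeping is in place.
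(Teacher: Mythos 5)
Your argument is correct, but it is worth noting that the paper contains no proof to compare against: this proposition is imported verbatim from \cite{liu}, where it is established by a direct computation showing that every $sq^j \bmod (q^m+1)$ exceeds $s$. Your route is genuinely different and arguably more in the spirit of this paper: you recycle Proposition \ref{prop1} (the coset-leader criterion of \cite{shi}, which postdates \cite{liu}) exactly as the authors do for $\Delta_2,\Delta_3,\Delta_4$, and your size argument via $\frac{n}{\gcd(n,q^t-1)}\mid s$, Lemma \ref{lem1} and the $2$-adic parity of $t/\gcd(m,t)$ is the same mechanism as Lemmas \ref{lem9} and \ref{lem13}. I checked the delicate points and they hold: for $1\le i\le \frac{m}{2}-1$ the interval $\bigl(\frac{(q^i-1)s}{n},\frac{(q^i+1)s}{n}\bigr)$ indeed lies below $1$ since $s(q^i+1)\le q^{m-1}+q^{m/2}<q^m+1$; for $i\ge\frac{m}{2}$ the balanced residue $h$ is forced (here $q$ odd makes $\frac{q^{m-i}-1}{2}$ the correct integer bound), $l\le q^{m/2-(m-i)}$ follows from $lq^{m-i}=s-h$, and at the boundary $i=\frac{m}{2}$ the only surviving integers $h=0,1$ are killed precisely by $s\not\equiv 0\ ({\rm mod}\ q)$ and $s\le q^{m/2}$, as you observe; and in the size argument $g=\gcd(m,t)\ge\frac{m}{2}$ follows from $\frac{q^m+1}{q^g+1}\le q^{m/2}$, with the parity bookkeeping ($\nu_2(t)>\nu_2(m)$ forces $m/g$ odd, excluding $g=\frac{m}{2}$) correct. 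In fact you prove slightly more than stated, covering all $1\le s\le q^{m/2}$ with $s\not\equiv 0\ ({\rm mod}\ q)$, which subsumes Proposition \ref{pro4} for even $m$. One caveat: Proposition \ref{prop1} assumes $q$ odd, so your proof covers only odd $q$, whereas \cite{liu} proves the statement over ${\rm GF}(q)$ for every prime power; since this paper applies the proposition only under its standing odd-$q$ hypothesis, the restriction is harmless here, but you should state it.
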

\begin{prop}\cite{liu}\label{pro3}
Let $m\geq 3$ be an odd integer. For $q^{\frac{m-1}{2}} \leq s \leq q^{\frac{m+1}{2}}$ with $s \not\equiv 0~ ({\rm mod }~q)$, $s $ is a coset leader except  that $ s= q^{\frac{m+1}{2}}-r$ for $1\leq r \leq q-1$. Moreover,  $\left|C_s\right|=2$ if  $m=3$ and $s=q^2-q+1$;  $\left|C_s\right|=2m$, otherwise.
\end{prop}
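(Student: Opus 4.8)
The plan is to combine Proposition \ref{prop1} with the gcd formula of Lemma \ref{lem1}. First I would note that every $s$ with $q^{(m-1)/2}\le s\le q^{(m+1)/2}$ satisfies $s\le q^{(m+1)/2}\le \frac n2$ (using $q\ge 3$ and $m\ge 3$, so that $q^m\ge 3q^{(m+1)/2}$), so Proposition \ref{prop1} applies and the coset-leader question reduces to a single arithmetic condition: $s$ is a coset leader if and only if there are no integers $i,l,h$ with $1\le i\le m-1$, $1\le l\le\frac{q^i-1}{2}$, $-\frac{l(q^{m-i}-1)}{q^i+1}<h<\frac{l(q^{m-i}+1)}{q^i-1}$ and $s=lq^{m-i}+h$. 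The whole proof is then an analysis of when such a representation can exist.

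For the exceptional values $s=q^{(m+1)/2}-r$ with $1\le r\le q-1$ I would exhibit the representation directly, taking $i=\frac{m-1}{2}$, $l=1$ and $h=-r$, so that $s=lq^{m-i}+h$. The constraint $1\le 1\le\frac{q^{(m-1)/2}-1}{2}$ is clear, and the lower bound equals $-\frac{q^{(m+1)/2}-1}{q^{(m-1)/2}+1}=-\big(q-\tfrac{q+1}{q^{(m-1)/2}+1}\big)$, which lies below $-r$ for every $r\le q-1$ when $m\ge 5$, but only for $r\le q-2$ when $m=3$. This already isolates the borderline value $s=q^2-q+1$ at $m=3$, which the representation fails to produce and which therefore remains a coset leader; this is the content of the ``Moreover'' clause.

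To prove that every remaining $s$ in the range with $s\not\equiv 0\pmod q$ and not of the exceptional form is a coset leader, I would show no valid representation exists by splitting on $i$. The uniform fact is that the bounds on $h$ force $|h|<\frac{q^{m-i}+1}{2}$, so in any representation $l$ is the nearest integer to $s/q^{m-i}$ and $h=s-lq^{m-i}$ is determined; moreover the permitted size of $h$ is controlled by $\frac{s}{q^i}\le q^{(m+1)/2-i}$. Then: for $i\ge\frac{m+1}{2}$ this bound is below $1$ (a one-line check disposes of the borderline $i=\frac{m+1}{2}$ itself), forcing $h=0$ and hence $q^{m-i}\mid s$, contradicting $s\not\equiv 0\pmod q$; for $i\le\frac{m-3}{2}$ one has $q^{m-i}>s$, so $l=1$ and $|h|=q^{m-i}-s$ exceeds the permitted bound $\frac{q^{m-i}-1}{q^i+1}$, as the estimate $s\le q^{(m+1)/2}$ shows; and the single middle value $i=\frac{m-1}{2}$ forces $l=1$ with $h\le 0$, recovering exactly the exceptional family above. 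I expect the case $i\le\frac{m-3}{2}$, together with the tight estimates at $i=\frac{m\pm1}{2}$, to be the main obstacle, since these require the exact inequalities of Proposition \ref{prop1} rather than crude size comparisons.

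Finally, for the coset sizes I would use that $l_s:=|C_s|$ divides $2m$ and set $d=\gcd(m,l_s)$. From $n\mid s(q^{l_s}-1)$ and Lemma \ref{lem1}: if $l_s/d$ is odd then $\gcd(n,q^{l_s}-1)=2$, forcing $\frac n2\mid s$, impossible for $0<s<\frac n2$; if $l_s/d$ is even then $\gcd(n,q^{l_s}-1)=q^d+1$, so $\frac{q^m+1}{q^d+1}\mid s$ and the bound $\frac{q^m+1}{q^d+1}\le s\le q^{(m+1)/2}$ yields $d\ge\frac{m-1}{2}$. For $m>3$ the only divisor of $m$ that is at least $\frac{m-1}{2}$ is $m$ itself, whence $d=m$ and $l_s=2m$. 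For $m=3$ the value $d=1$ becomes admissible exactly when $\frac{q^3+1}{q+1}=q^2-q+1$ divides $s$, i.e. $s=q^2-q+1$, and a direct computation then gives $l_s=2$; every other coset leader again has $l_s=2m$. This yields the size statement and matches the special case recorded in the proposition.
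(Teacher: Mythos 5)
Your proposal is correct, with one scope caveat noted below; but first, a structural point: the paper contains no proof of this statement. Proposition \ref{pro3} is quoted from \cite{liu}, and the paper's only original contribution to it is the Remark correcting the overlooked case $m=3$, $s=q^2-q+1$. The original argument in \cite{liu} works directly with the coset elements $sq^j \bmod (q^m+1)$, since the representation criterion of Proposition \ref{prop1} (from \cite{shi}) did not yet exist; your route instead runs everything through that criterion plus Lemma \ref{lem1}, which is precisely the machinery this paper uses for its own new results (Lemmas \ref{lem3}--\ref{lem14} and \ref{lem13}). I checked the delicate steps and they hold: for the exceptional family, the representation $i=\frac{m-1}{2}$, $l=1$, $h=-r$ is admissible exactly when $r<q-\frac{q+1}{q^{(m-1)/2}+1}$, i.e.\ for all $1\le r\le q-1$ when $m\ge5$ but only $r\le q-2$ when $m=3$; your analysis of the middle index shows this family exhausts the representations with $i=\frac{m-1}{2}$; for $i\le\frac{m-3}{2}$ admissibility would force $s>\frac{n}{q^i+1}\ge\frac{n}{q^{(m-3)/2}+1}>q^{(m+1)/2}$, impossible; and for $i\ge\frac{m+1}{2}$ the two bounds pinch $h$ to $0$ (at the borderline $i=\frac{m+1}{2}$ one gets $l\le q-1$, whence both bounds on $|h|$ are at most $1$), contradicting $s\not\equiv0\pmod q$. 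The size argument via $d=\gcd(m,l_s)$ and Lemma \ref{lem1} is the same standard argument as in the paper's Lemma \ref{lem13}, and it correctly isolates $d=1$, $s=q^2-q+1$, $\left|C_s\right|=2$ at $m=3$. A merit of your reconstruction over the citation is that the $m=3$ exception flagged in the Remark falls out of the argument rather than having to be patched in afterwards.

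The caveat: Proposition \ref{prop1} is stated only for \emph{odd} prime powers $q$ (indeed $\frac{q^i-1}{2}$ is not even an integer bound for even $q$), while Proposition \ref{pro3} carries no parity restriction and the paper invokes it, through Lemma \ref{lem12}, in settings where even $q$ can occur (e.g.\ $q=8$, $\lambda=3$ satisfies $1<\lambda<q+1$ and $\lambda\mid q+1$). Your proof therefore establishes the proposition only for odd $q$; covering even $q$ would require either the direct computation of \cite{liu} or an even-characteristic analogue of the representation criterion. Within the odd-$q$ setting that dominates this paper, the proposal is complete and sound.
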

\begin{rem}
Proposition \ref{pro3}  is imperfect in    \cite{liu} when $m=3$ 	because the authors ignored the  special case of   $s=q^2-q+1$. Here, we give its corrected result of it. 
\end{rem}

Below we aim to study the coset leaders modulo $n=\frac{q^m+1}{\lambda}$ in an interval  $[1, \frac{q^{\lfloor\frac{m+1}{2}\rfloor}}{\lambda}]$. When $\lambda=q+1$, the  results are studied in \cite{Yan}. Hence, we assume that $1 <\lambda<q+1$ and $\lambda\mid q+1$.

\begin{lem}\label{lem12}
Let $m\geq 3$ be an odd integer,  $1 <\lambda<q+1$ and $\lambda\mid q+1$.   For  $1\leq s \leq \frac{q^{\frac{m+1}{2}}}{\lambda}$ with $s \not\equiv 0 ({\rm mod }~q)$, 
$s$ is a coset leader except that 
\begin{equation*}
	s =
	\begin{cases}
		\frac{q^{\frac{m+1}{2}}+1}{\lambda}-c ~for~   1\leq c \leq \lfloor \frac{q}{\lambda}\rfloor	,& ~if~m \equiv 1~ ({\rm mod }~ 4),\\
		\frac{q^{\frac{m+1}{2}}-1}{\lambda}-c ~for~	0\leq c \leq \lfloor \frac{q-2}{\lambda}\rfloor,& ~~if~m \equiv 3 ~({\rm mod }~ 4).
	\end{cases}
\end{equation*} Moreover,  $\left|C_s\right|=2$ if $\lambda=m=3$, $q>3$ and $q \equiv 2 ~({\rm mod }~3)$ and $s=\frac{q^2-q+1}{\lambda}$;  $\left|C_s\right|=2m$, otherwise.
\end{lem}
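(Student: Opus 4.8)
The plan is to push the whole question down to the modulus $q^m+1$, where Propositions \ref{pro4} and \ref{pro3} already classify the small coset leaders, and then read the answer back through Lemma \ref{lemm1}. By that lemma, for $1\le s\le n-1$ the integer $s$ is (is not) a coset leader modulo $n=\frac{q^m+1}{\lambda}$ precisely when $\lambda s$ is (is not) a coset leader modulo $\lambda n=q^m+1$, and then $|C_s|=|C_{\lambda s}|$. Since $\lambda\mid q+1$ gives $\gcd(\lambda,q)=1$, the condition $s\not\equiv 0\pmod q$ is equivalent to $\lambda s\not\equiv 0\pmod q$; and as $s$ ranges over $[1,\frac{q^{(m+1)/2}}{\lambda}]$, the multiple $\lambda s$ ranges over the multiples of $\lambda$ in $[\lambda,q^{(m+1)/2}]$. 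So everything reduces to deciding, for each such $\lambda s$, whether it is a coset leader modulo $q^m+1$ and what $|C_{\lambda s}|$ is.

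First I would split the range at $q^{(m-1)/2}$. For $1\le \lambda s\le q^{(m-1)/2}+1$, Proposition \ref{pro4} gives directly that $\lambda s$ is a coset leader with $|C_{\lambda s}|=2m$. For $q^{(m-1)/2}\le\lambda s\le q^{(m+1)/2}$, Proposition \ref{pro3} gives that $\lambda s$ is a coset leader with $|C_{\lambda s}|=2m$, except exactly when $\lambda s=q^{(m+1)/2}-r$ with $1\le r\le q-1$ (together with the single corrected value $\lambda s=q^2-q+1$ when $m=3$, which is a coset leader of size $2$). Because $\gcd(\lambda,q)=1$ and $\lambda>1$, we have $\lambda\not\mid q^{(m+1)/2}$, hence $\lambda s<q^{(m+1)/2}$, so the upper endpoint is never hit and the two ranges overlap consistently.

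The core step is to translate the exceptional condition $\lambda s=q^{(m+1)/2}-r$ into a condition on $s$. Such an integer $s$ exists iff $r\equiv q^{(m+1)/2}\pmod\lambda$, and since $q\equiv -1\pmod\lambda$ we have $q^{(m+1)/2}\equiv(-1)^{(m+1)/2}\pmod\lambda$. If $m\equiv 1\pmod 4$ then $(m+1)/2$ is odd, so $r\equiv -1\pmod\lambda$, i.e. $r=c\lambda-1$; the constraint $1\le r\le q-1$ becomes $1\le c\le\lfloor q/\lambda\rfloor$ and yields $s=\frac{q^{(m+1)/2}+1}{\lambda}-c$. If $m\equiv 3\pmod 4$ then $(m+1)/2$ is even, so $r\equiv 1\pmod\lambda$, i.e. $r=c\lambda+1$; the constraint $1\le r\le q-1$ becomes $0\le c\le\lfloor(q-2)/\lambda\rfloor$ and yields $s=\frac{q^{(m+1)/2}-1}{\lambda}-c$. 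These are exactly the two exceptional lists in the statement.

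Finally the coset sizes follow from $|C_s|=|C_{\lambda s}|$: every $s$ that is a coset leader has $|C_s|=2m$ unless we are in the size-$2$ case of Proposition \ref{pro3}, namely $m=3$ and $\lambda s=q^2-q+1$. For $s=\frac{q^2-q+1}{\lambda}$ to be an integer one needs $\lambda\mid q^2-q+1$, but $q\equiv -1\pmod\lambda$ gives $q^2-q+1\equiv 3\pmod\lambda$, forcing $\lambda=3$; with $3\mid q+1$ this is the condition $q\equiv 2\pmod 3$, which for an odd prime power already forces $q>3$, and then $|C_s|=|C_{q^2-q+1}|=2$. The one delicate point, and the main obstacle, is the interaction of this value with the $m\equiv 3\pmod 4$ list: for $\lambda=m=3$ the top index $c=\frac{q-2}{3}=\lfloor\frac{q-2}{\lambda}\rfloor$ produces precisely $s=\frac{q^2-q+1}{3}$, which by the \emph{corrected} Proposition \ref{pro3} is a genuine coset leader and must be excised from the list of non-leaders and instead recorded with $|C_s|=2$. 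Verifying that this is the only place where the corrected proposition changes the naive count, and that it lines up exactly with the floor functions, is where the bookkeeping needs the most care.
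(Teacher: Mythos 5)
Your proposal is correct and takes essentially the same route as the paper's own proof: reduce to the modulus $q^m+1$ via Lemma \ref{lemm1}, invoke Propositions \ref{pro4} and \ref{pro3}, solve $r\equiv q^{\frac{m+1}{2}}\equiv(-1)^{\frac{m+1}{2}}\ ({\rm mod}\ \lambda)$ to obtain the two exceptional lists with the stated ranges of $c$, and use $q^2-q+1\equiv 3\ ({\rm mod}\ \lambda)$ to force $\lambda=3$, $q\equiv 2\ ({\rm mod}\ 3)$ in the size-$2$ case. Your explicit observation that for $\lambda=m=3$ the top entry $c=\lfloor\frac{q-2}{\lambda}\rfloor$ of the $m\equiv 3\ ({\rm mod}\ 4)$ list is exactly $s=\frac{q^2-q+1}{3}$, a genuine coset leader of size $2$ that must be treated separately, is handled only implicitly in the paper, so your bookkeeping there is if anything more careful than the original.
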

\begin{proof}
For  $1 \leq \lambda s \leq q^{\frac{m+1}{2}}$, by Lemma \ref{lemm1}, Propositions \ref{pro4} and \ref{pro3},  $\lambda s $ is a coset leader modulo $q^m+1$  except $\lambda s= q^{\frac{m+1}{2}}-r$, i.e., $$s= \frac{q^{\frac{m+1}{2}}-(-1)^{{\frac{m+1}{2}}}}{\lambda} - \frac{r-(-1)^{{\frac{m+1}{2}}}}{\lambda},$$
for $1\leq r \leq q-1$.
Let $r=\lambda c + (-1)^{{\frac{m+1}{2}}}$. 
If $m \equiv 1~ ({\rm mod }~ 4)$,
then	$2\leq  \lambda c \leq q$, i.e., $1\leq c \leq \lfloor \frac{q}{\lambda}\rfloor$.
If $m \equiv 3~ ({\rm mod }~ 4)$, then $0\leq  \lambda c \leq q-2$, i.e., $0\leq c \leq \lfloor \frac{q-2}{\lambda}\rfloor$.
Again by  Lemma \ref{lemm1} and Proposition \ref{pro3}, if $m=3$ and $s=\frac{q^2-q+1}{\lambda}$ is an integer, then $s$ is a coset leader with $\left|C_s\right|=2$, which implies  $\lambda \mid q-2$.  Since $1<\lambda<q+1$ and $\lambda \mid q+1$, one  has $\lambda=3$, $q>3$ and  $q \equiv 2~ ({\rm mod }~3)$.
\end{proof}


With the results on the cyclotomic cosets in the interval $[1, \frac{q^{\frac{m+1}{2}}}{\lambda}]$ derived above, we have the following conclusions
on the parameters of BCH codes of   $n=\frac{q^m+1}{\lambda}$. The  proofs follow directly from Lemmas \ref{lem8} and  \ref{lem12}, and then details are omitted here.
\begin{thm}\label{th4}
Let $m\geq3$ be an odd integer,  $1 <\lambda<q+1$ and $\lambda\mid q+1$. 
For $1 \leq \delta-1 \leq \frac{q^{(m+1)/2}}{\lambda}$, 
if $m=\lambda=3$ , $q>3$ and $q \equiv 2~ ({\rm mod }~3)$, then
the LCD BCH code 	$\mathcal{C}_{(q,(q^3+1)/3,\delta+1,0)}$ has parameters $[\frac{q^3+1}{3},k,d\geq 2\delta]$
, where
\begin{equation*}
	k =
	\begin{cases}
		\frac{q^3+1}{3}-6(\delta-1-\lfloor \frac{\delta-1}{q}\rfloor)-1	,& ~if~\delta \leq \frac{q^2-q+1}{3},\\
		\frac{q^3+1}{3}-6(  \frac{q^2-q+1}{3}-1     -\lfloor \frac{\delta-1}{q}\rfloor)-3  	 ,& ~~if~\delta > \frac{q^2-q+1}{3};
	\end{cases}
\end{equation*}
Otherwise, 	the LCD BCH code 	$\mathcal{C}_{(q,(q^m+1)/\lambda,\delta+1,0)}$ has parameters $[\frac{q^m+1}{\lambda},k,d\geq 2\delta]$, where

\begin{equation*}
	k =
	\begin{cases}
		\frac{q^m+1}{\lambda}-2m(\delta-1-\lfloor \frac{\delta-1}{q}\rfloor)-1	,& ~if~\delta \leq \Delta ,\\
		\frac{q^m+1}{\lambda}-2m(\Delta-\lfloor \frac{\delta-1}{q}\rfloor)-1		  ,& ~if~\delta > \Delta ,\\
	\end{cases}
\end{equation*}
and
\begin{equation*}
	\Delta =
	\begin{cases}
		
		\frac{q^{\frac{m+1}{2}}+1}{\lambda}-\lfloor \frac{q}{\lambda}\rfloor	,& ~if~m \equiv 1~ ({\rm mod }~ 4),\\
		\frac{q^{\frac{m+1}{2}}-1}{\lambda}-\lfloor \frac{q-2}{\lambda}\rfloor	  ,& ~if~m \equiv 3~ ({\rm mod }~ 4).
	\end{cases}
\end{equation*}

\end{thm}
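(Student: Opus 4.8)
The plan is to read off the dimension from the defining set and reduce everything to a lattice-point count governed by Lemma~\ref{lem12}. Write $n=(q^{m}+1)/\lambda$. The generator polynomial of $\mathcal{C}_{(q,n,\delta+1,0)}$ is $\operatorname{lcm}(m_{0}(x),\dots,m_{\delta-1}(x))$, so its defining set is $T=\bigcup_{i=0}^{\delta-1}C_{i}$ and $k=\dim\mathcal{C}_{(q,n,\delta+1,0)}=n-|T|$. The two qualitative assertions need no new work: the code is LCD because $\lambda\mid q+1$ (as recalled in Section~\ref{sec2}), and applying Lemma~\ref{lem8} with designed distance $\delta+1$ gives $d\geq 2((\delta+1)-1)=2\delta$. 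Thus the whole theorem comes down to evaluating $|T|$ in closed form.

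Since the $q$-cyclotomic cosets partition $\mathbb{Z}_{n}$, the set $T$ is precisely the union of those cosets whose coset leader lies in $\{0,1,\dots,\delta-1\}$, and these summands are pairwise disjoint; hence $|T|=\sum|C_{\ell}|$ taken over coset leaders $\ell\leq\delta-1$. I would classify each $s$ with $1\leq s\leq\delta-1\leq q^{(m+1)/2}/\lambda$ as follows. If $s\equiv 0\pmod q$ then $s/q$ is a strictly smaller element of $C_{s}$, so $s$ is never a coset leader and its coset is already accounted for by a smaller leader. If $s\not\equiv 0\pmod q$ then, because the whole interval $[1,q^{(m+1)/2}/\lambda]$ is covered by Lemma~\ref{lem12}, $s$ is a coset leader with $|C_{s}|=2m$ unless it is one of the finitely many exceptional values listed there, these exceptions being exactly the ones packaged into $\Delta$. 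Adding the contribution of the trivial leader $0$, with $|C_{0}|=1$, turns the problem into counting coset leaders of size $2m$ in $[1,\delta-1]$.

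The count then splits at $\Delta$. The number of $s\in[1,\delta-1]$ with $s\not\equiv 0\pmod q$ equals $(\delta-1)-\lfloor(\delta-1)/q\rfloor$, which is the source of the floor term in both branches. When $\delta\leq\Delta$ the interval stops short of every exceptional value, so each such $s$ is a genuine size-$2m$ leader and $|T|=1+2m\big((\delta-1)-\lfloor(\delta-1)/q\rfloor\big)$, yielding the first line of the formula. When $\delta>\Delta$ the interval meets the block of exceptional values sitting just below $q^{(m+1)/2}/\lambda$, and one subtracts their contribution; the endpoints of this block are pinned by $\Delta$ from below and by $\lfloor q^{(m+1)/2}/\lambda\rfloor$ from above, which is what produces the second line. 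In the degenerate parameters $\lambda=m=3$ with $q\equiv 2\pmod 3$, Lemma~\ref{lem12} reinstates $s=(q^{2}-q+1)/\lambda$ as a bona fide coset leader but with $|C_{s}|=2$ instead of $2m$; I would carry this coset as a separate additive term, and it is exactly this extra $2$ that converts the trailing constant $-1$ into $-3$ and moves the breakpoint to $(q^{2}-q+1)/3$.

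The step I expect to be the genuine obstacle is the bookkeeping in the regime $\delta>\Delta$. There one must know exactly how many exceptional values of Lemma~\ref{lem12} lie in $[\Delta,\delta-1]$ and how the multiples of $q$ are positioned relative to them, since these two quantities enter $|T|$ with opposite signs and a single miscount perturbs the coefficient of $2m$. The clean way to settle this is to verify that the exceptional values form a run of consecutive integers whose top endpoint is $\lfloor q^{(m+1)/2}/\lambda\rfloor$ and whose bottom endpoint is $\Delta$, and then, using $q^{(m+1)/2}\equiv 0$ and $\lambda\not\equiv 0\pmod q$ together with the parity of $(m+1)/2$ in the two cases $m\equiv 1$ and $m\equiv 3\pmod 4$, to pin down precisely where the multiples of $q$ fall inside this run. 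Once that interleaving is established, matching it against $\lfloor(\delta-1)/q\rfloor$ reproduces the stated $\Delta$-dependent count, and the special case $\lambda=m=3$ follows from the identical template.
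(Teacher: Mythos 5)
Your overall route is exactly the paper's: the paper derives this theorem from Lemmas~\ref{lem8} and~\ref{lem12} and omits all details, and your scaffolding is sound — $k=n-|T|$ with $|T|=\sum_{\ell}|C_\ell|$ over coset leaders $\ell\le\delta-1$, the $-1$ from $C_0$, the first branch, and the special case $m=\lambda=3$, where the size-$2$ coset of $(q^2-q+1)/3$ accounts for the $-3$ and the shifted breakpoint. But the one step you defer with ``matching it against $\lfloor(\delta-1)/q\rfloor$ reproduces the stated $\Delta$-dependent count'' is precisely where the argument breaks, and it breaks in a way your own setup makes visible. In the generic case $\Delta$ is itself the \emph{smallest} exceptional value of Lemma~\ref{lem12} (take $c=\lfloor q/\lambda\rfloor$, resp.\ $c=\lfloor(q-2)/\lambda\rfloor$), hence \emph{not} a coset leader; the exceptional values fill exactly the consecutive integers from $\Delta$ up to $\lfloor q^{(m+1)/2}/\lambda\rfloor\ge\delta-1$. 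So for $\delta>\Delta$ the number of size-$2m$ leaders in $[1,\delta-1]$ is
\begin{equation*}
\Bigl((\delta-1)-\Bigl\lfloor\frac{\delta-1}{q}\Bigr\rfloor\Bigr)-\bigl((\delta-1)-\Delta+1\bigr)=\Delta-1-\Bigl\lfloor\frac{\delta-1}{q}\Bigr\rfloor,
\end{equation*}
one less than the $\Delta-\lfloor(\delta-1)/q\rfloor$ demanded by the printed second branch; your plan therefore proves $k=n-2m\bigl(\Delta-1-\lfloor\frac{\delta-1}{q}\rfloor\bigr)-1$, which exceeds the claimed value by $2m$. Two sanity checks confirm this. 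First, since $\Delta$ is not a leader and $q\nmid\Delta$, the defining sets for $\delta=\Delta$ and $\delta=\Delta+1$ coincide, so $k$ cannot drop at the crossover, yet the two printed branches differ there by exactly $2m$. Second, take $q=5$, $m=3$, $\lambda=2$, $\delta=12$: then $n=63$, $\Delta=11$, and the leaders in $[1,11]$ are $1,2,3,4,6,7,8,9$ (one checks $11\in C_8$ and $12\in C_3$), each of coset size $6$, giving $k=63-1-48=14$, while the stated branch gives $63-6(11-2)-1=8$. Note that the special-case branch correctly carries the ``$-1$'' inside the bracket — there $\Delta=(q^2-q+1)/3$ \emph{is} a genuine leader, of coset size $2$, which is exactly why that branch alone also picks up the extra $-2$ in the constant; the general branch as printed (and as you assert it) is off by $2m$.

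Separately, the obstacle you single out — ``how the multiples of $q$ are positioned relative to'' the exceptional run — is a non-issue: on the run one has $\lambda s\equiv 1-\lambda c$ (for $m\equiv1\ ({\rm mod}\ 4)$) or $\lambda s\equiv-1-\lambda c$ (for $m\equiv3\ ({\rm mod}\ 4)$) modulo $q$, and in the permitted ranges of $c$ these are never $0$, so no exceptional value is divisible by $q$ and in fact $\lfloor(\delta-1)/q\rfloor=\lfloor(\Delta-1)/q\rfloor$ throughout the regime $\Delta\le\delta-1\le\lfloor q^{(m+1)/2}/\lambda\rfloor$. So the step you expected to be hard is easy, while the step you treated as routine bookkeeping is where the proposal fails: carried out honestly, your count \emph{contradicts} the stated general second branch rather than reproducing it, and as written the proposal does not establish the statement (the LCD property, the bound $d\ge2\delta$ via Lemma~\ref{lem8}, the first branch, and the $m=\lambda=3$ case are all fine).
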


\begin{exa}
Let $q=5$,$m=3$ and $\lambda=3$. Then the LCD BCH code $\mathcal{C}_{(5,42,8,0)}$ has the parameters $[42, 11, 14]$ and   $\mathcal{C}_{(5,42,9,0)}$ has the parameters   $[42,9,22]$, which  the dimensions are consistent with the results in Theorem \ref{th4}.
\end{exa}

When $m$ is  even and $q$ is an odd prime power, the divisors of $q+1$ except 2  may not divide $q^m+1$. Thus, we only consider the case of $\lambda=2$.  We omit  the proof of  the following lemma because it is  
essentially the same as the one given in the proof of 
Lemma \ref{lem12}.

\begin{lem}
Let $q$ be an odd prime power and   $m\geq4$ be an even integer. For $1 \leq s \leq \frac{q^{m/2}}{2}$ with  $s \not\equiv 0~ ({\rm mod} ~q) $, $s$ is  a coset leader modulo $n=\frac{q^m+1}{2}$ with $\left|C_s\right|=2m$.
\end{lem}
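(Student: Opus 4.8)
The plan is to reduce the statement, exactly as in the proof of Lemma \ref{lem12}, to the already-established facts about coset leaders modulo $q^m+1$ through the scaling correspondence of Lemma \ref{lemm1}. Taking $\lambda=2$ there, an integer $s$ with $1\le s\le n-1$ is a coset leader modulo $n=\frac{q^m+1}{2}$ if and only if $2s$ is a coset leader modulo $2n=q^m+1$, and in that case $\left|C_s\right|=\left|C_{2s}\right|$. Hence it suffices to verify that $2s$ is a coset leader modulo $q^m+1$ of size $2m$ for every $s$ in the stated range.

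First I would translate the hypotheses on $s$ into hypotheses on $2s$. From $1\le s\le \frac{q^{m/2}}{2}$ we obtain $2\le 2s\le q^{m/2}$. Moreover, since $q$ is odd we have $\gcd(2,q)=1$, so $2s\equiv 0~({\rm mod}~q)$ would force $s\equiv 0~({\rm mod}~q)$; as $s\not\equiv 0~({\rm mod}~q)$ by assumption, we also have $2s\not\equiv 0~({\rm mod}~q)$.

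Next I would cover the interval $2\le 2s\le q^{m/2}$ by the two known results for $q^m+1$. Because $m$ is even, $\lfloor\frac{m-1}{2}\rfloor=\frac{m}{2}-1$, so Proposition \ref{pro4} handles all admissible values with $2\le 2s\le q^{m/2-1}+1$, while the even-$m$ proposition from \cite{liu} (the one stated just before Proposition \ref{pro3}) handles $q^{m/2-1}\le 2s\le q^{m/2}$. These two ranges overlap at $q^{m/2-1}$ and together exhaust $[2,q^{m/2}]$, and in both results the hypothesis $2s\not\equiv 0~({\rm mod}~q)$ is precisely what was verified above. Consequently $2s$ is a coset leader modulo $q^m+1$ with $\left|C_{2s}\right|=2m$, and applying Lemma \ref{lemm1} once more yields that $s$ is a coset leader modulo $n$ with $\left|C_s\right|=\left|C_{2s}\right|=2m$.

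I expect no serious obstacle. Unlike the odd case of Lemma \ref{lem12}, the even-$m$ results for $q^m+1$ carry no exceptional values, so there is no analogue of the excluded family $q^{(m+1)/2}-r$ or of the sporadic value $s=q^2-q+1$; the conclusion therefore holds without exception and the coset size is uniformly $2m$. The only point needing a little care is confirming that the two cited intervals genuinely overlap at $q^{m/2-1}$ and jointly cover the full range $[2,q^{m/2}]$, which is immediate once $m$ is even.
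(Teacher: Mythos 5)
Your proof is correct and coincides with the paper's intended argument: the paper omits the proof precisely because it is ``essentially the same'' as that of Lemma \ref{lem12}, namely applying the scaling correspondence of Lemma \ref{lemm1} with $\lambda=2$ and then invoking Proposition \ref{pro4} together with the even-$m$ proposition from \cite{liu} to cover $2\le 2s\le q^{m/2}$. Your added checks (that $2s\not\equiv 0~({\rm mod}~q)$ since $q$ is odd, and that the two cited ranges overlap at $q^{m/2-1}$) are exactly the routine verifications the authors left implicit.
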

\begin{thm}\label{thm5}
Let $q$ be an odd prime power,  $m\geq4$ be an even integer and $n=\frac{q^m+1}{2}$. 
For $1 \leq \delta-1 \leq \frac{q^{m/2}}{2}$, 
the LCD BCH code 	$\mathcal{C}_{(q,n,\delta+1,0)}$ has parameters $[n,n-2m(\delta-1-\lfloor \frac{\delta-1}{q}\rfloor)-1,d\geq 2\delta]$.
\end{thm}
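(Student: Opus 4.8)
The plan is to prove Theorem~\ref{thm5} by combining the dimension formula for BCH codes, $k=n-|T|$, with the explicit structure of the defining set $T$ and the two ingredients already established: the immediately preceding lemma (which identifies all coset leaders and their sizes in the relevant range) and Lemma~\ref{lem8} (which supplies the lower bound $d\ge 2\delta$). Since the code is $\mathcal{C}_{(q,n,\delta+1,0)}$ with $n=\frac{q^m+1}{2}$, its defining set is $T=\bigcup_{t=0}^{\delta-1}C_t$, i.e.\ the union of the $q$-cyclotomic cosets of $0,1,\dots,\delta-1$ modulo $n$. The LCD property is guaranteed by the remark in Section~2 that every cyclic code of length $(q^m+1)/\lambda$ with $\lambda\mid q+1$ is LCD, so only the dimension requires work.

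First I would compute $|T|$ by counting which integers $t$ in $\{0,1,\dots,\delta-1\}$ are coset leaders and accounting for the size of each coset. The integer $t=0$ contributes the singleton coset $C_0=\{0\}$, giving $|C_0|=1$. For $1\le t\le\delta-1$, the preceding lemma guarantees that $t$ is a coset leader with $|C_t|=2m$ \emph{provided} $t\not\equiv 0\pmod q$; this uses the hypothesis $\delta-1\le \frac{q^{m/2}}{2}$, which keeps every such $t$ inside the interval $[1,\frac{q^{m/2}}{2}]$ covered by the lemma. The integers $t$ with $t\equiv 0\pmod q$ are \emph{not} new coset leaders: each such $t=qt'$ lies in the coset $C_{t'}$ of the smaller leader $t'<t$ (since multiplying by $q$ permutes a coset), so it contributes nothing new to $T$ and must be excluded from the count.

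The key step is therefore the counting. Among $\{1,2,\dots,\delta-1\}$ there are exactly $\lfloor\frac{\delta-1}{q}\rfloor$ multiples of $q$, so the number of genuine coset leaders in this range is $(\delta-1)-\lfloor\frac{\delta-1}{q}\rfloor$, each contributing a coset of size $2m$ by the lemma. Adding the contribution of $C_0$ gives
\begin{equation*}
	|T| = 1 + 2m\Bigl(\delta-1-\Bigl\lfloor\tfrac{\delta-1}{q}\Bigr\rfloor\Bigr),
\end{equation*}
and hence $k=n-|T|=n-2m\bigl(\delta-1-\lfloor\frac{\delta-1}{q}\rfloor\bigr)-1$, matching the claimed dimension. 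Finally $d\ge 2\delta$ follows directly from Lemma~\ref{lem8} applied with the designed distance $\delta+1$, since that lemma yields $d\ge 2((\delta+1)-1)=2\delta$.

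The main obstacle, and the only point demanding care, is verifying that every residue class is handled exactly once so that the cosets $C_t$ for the distinct leaders $t$ are genuinely disjoint and cover $T$ without overcounting. This rests on the disjointness of distinct cyclotomic cosets (stated in Section~2) together with the precise claim that each multiple of $q$ in the range has already appeared in an earlier, smaller coset; I would confirm this by noting that if $t=qt'$ with $t'\not\equiv 0\pmod q$ then $t'$ is itself a coset leader in range and $t\in C_{t'}$, whereas if $t'$ is again a multiple of $q$ one iterates until reaching a non-multiple, so no multiple of $q$ ever opens a new coset. Once this bookkeeping is pinned down the dimension formula is immediate, which is exactly why the authors describe the proof as following directly from Lemmas~\ref{lem8} and the coset-leader lemma and omit the routine details.
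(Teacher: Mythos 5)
Your proposal is correct and is precisely the argument the paper leaves implicit: the paper omits the proof of Theorem~\ref{thm5} because it ``follows directly'' from Lemma~\ref{lem8} and the preceding coset-leader lemma, and your reconstruction --- counting $|T|=1+2m\bigl(\delta-1-\lfloor\frac{\delta-1}{q}\rfloor\bigr)$ by excluding multiples of $q$ (each of which lies in the coset of a smaller leader) and invoking Lemma~\ref{lem8} for $d\geq 2\delta$ --- is exactly that intended routine computation. Your careful verification that multiples of $q$ open no new cosets and that distinct leaders give disjoint cosets is the right bookkeeping and contains no gaps.
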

\begin{exa}
Let $q=5$,$m=4$ and $\lambda=2$. Then the LCD BCH code $\mathcal{C}_{(5,313,11,0)}$ has the parameters $[313, 248, d\geq 20]$, which  the dimension is  consistent with the result in Theorem \ref{thm5}.
\end{exa}
In following, we determine  the first few largest
coset leaders modulo $n=\frac{q^m+1}{\lambda}$. Let $\delta_i^\lambda$ denote the $i$-th largest coset leader modulo $\frac{q^m+1}{\lambda}$.
In \cite{Yan},   the authors determined the   first  few largest coset leaders modulo $q^m+1$ for  an odd integer  $m$.
\begin{lem}\cite{Yan}
Let  $m$ be an odd positive integer and $n=q^m+1$. Then $\delta_1^1=\frac{q^m+1}{2}$, $\delta_2^1=\frac{(q-1)(q^m+1)}{2(q+1)}$ and $\delta_3^1=\frac{q-1}{2(q+1)}(q^m-2q^{m-2}-1)$. Moreover, $\left|C_{\delta_1^1}\right|=1$, $\left|C_{\delta_2^1}\right|=2$ and $\left|C_{\delta_3^1}\right|=2m$.
\end{lem}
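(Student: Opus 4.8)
The plan is to handle the three leaders in decreasing order, in each case combining Proposition~\ref{prop1} (which characterises the coset leaders $s\le q^m$) with the gcd-formulae of Lemmas~\ref{lem1} and~\ref{lem111} (which pin down the coset sizes). For each $j$ I would argue in three moves: show $\delta_j^1\le \frac{n}{2}$ is itself a coset leader; show that no integer strictly between $\delta_j^1$ and $\delta_{j-1}^1$ is a coset leader, so that $\delta_j^1$ really is the $j$-th largest; and compute $|C_{\delta_j^1}|$. Throughout, $q$ is odd (otherwise $\frac{n}{2}$ is not an integer).

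The top leader is immediate: since $q$ is odd, $\frac{q-1}{2}$ is an integer, so $\frac{n}{2}q\equiv\frac{n}{2}\pmod n$ and $C_{n/2}=\{n/2\}$ has size $1$; as every coset leader satisfies $s\le\frac{n}{2}$ by Proposition~\ref{prop1}, this forces $\delta_1^1=\frac{n}{2}$. For $\delta_2^1=\frac{(q-1)n}{2(q+1)}$ I would feed $i=1$ and $l=\frac{q-1}{2}$ into Proposition~\ref{prop1}: a short computation shows that the excluded values $lq^{m-1}+h$, as $h$ runs over its admissible range, fill exactly the open interval $(\delta_2^1,\frac{n}{2})$, so no coset leader lies there. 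That $\delta_2^1$ itself is a coset leader I would verify directly rather than through the criterion: one computes $\delta_2^1 q\equiv\frac{q+3}{2}\cdot\frac{n}{q+1}\pmod n$, which exceeds $\delta_2^1=\frac{q-1}{2}\cdot\frac{n}{q+1}$, so $\delta_2^1$ is the minimum of its coset. The size is then read off from the divisibility $n\mid\delta_2^1(q^l-1)$, equivalently $2(q+1)\mid(q-1)(q^l-1)$, which holds at $l=2$ (giving the integer $\frac{(q-1)^2}{2}$) but fails at $l=1$; hence $|C_{\delta_2^1}|=2$ and $\delta_2^1$ is the second largest leader.

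The third leader is the substantive part. Writing $\delta_3^1=\delta_2^1-\frac{(q-1)(q^{m-2}+1)}{q+1}$ and expanding it in base $q$ (using that $m$ is odd, so $\frac{q^m+1}{q+1}$ and $\frac{q^{m-2}+1}{q+1}$ are the alternating sums), I would prove $\delta_3^1$ is a coset leader by the mechanism of Lemmas~\ref{lem3},~\ref{lem11} and~\ref{lem14}: partition $\{1,\dots,m-1\}$ into the indices $i$ with $R(\delta_3^1,q^{m-i})\ge\frac{q^{m-i}+1}{2}$ and those with $R(\delta_3^1,q^{m-i})<\frac{q^{m-i}+1}{2}$, read off the forced $h(i)$ and $l(i)$ in each class, and verify the two inequalities $l(i)(q^{m-i}-1)<-h(i)(q^i+1)$ and $l(i)(q^{m-i}+1)<h(i)(q^i-1)$ that witness violation of the $h$-range. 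To locate $\delta_3^1$ as the third largest I would rule out coset leaders in $(\delta_3^1,\delta_2^1)$ by the existence argument of Lemma~\ref{lem4}: for each candidate $\delta_3^1+u$ one produces $i$ and an integer $l\le\frac{q^i-1}{2}$ with $\frac{(q^i-1)(\delta_3^1+u)}{n}<l<\frac{(q^i+1)(\delta_3^1+u)}{n}$, the bound $\delta_3^1+u<\delta_2^1$ guaranteeing $l\le\frac{q^i-1}{2}$.

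Finally, for $|C_{\delta_3^1}|$ I would use $\mathrm{Ord}_n(q)=2m$, so the size divides $2m$, and eliminate every proper divisor $t$. If $t$ is odd then $t\mid m$, so Lemma~\ref{lem1} gives $\gcd(n,q^t-1)=2$ and $n\mid\delta_3^1(q^t-1)$ collapses to $\frac{n}{2}\mid\delta_3^1$, impossible since $0<\delta_3^1<\frac{n}{2}$. If $t=2t'$ with $t'\mid m$ proper, Lemma~\ref{lem1} gives $\gcd(n,q^t-1)=q^{t'}+1$, so the divisibility becomes $M\mid(q-1)Y$ with $M=\frac{n}{q^{t'}+1}$ and $Y=\frac{q^{m-2}+1}{q+1}$; using $\gcd(n,q-1)=2$ and $\gcd(n,q^{m-2}+1)=q+1$ (Lemma~\ref{lem111}) one bounds $\gcd(M,(q-1)Y)\le 2(q+1)$, contradicting $M\ge\frac{q^m+1}{q^{m/3}+1}$ except for finitely many small $(q,m)$ that are checked by hand. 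This yields $|C_{\delta_3^1}|=2m$. The main obstacle is the coset-leader verification for $\delta_3^1$: the base-$q$ bookkeeping behind the two-set split is where the genuine work sits, the remaining steps being short computations or direct appeals to the gcd lemmas.
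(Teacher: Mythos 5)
The first thing to note is that the paper offers no proof of this statement: it is imported verbatim from \cite{Yan}, so there is no in-paper argument to match. Your plan is nevertheless correct in outline, and it is in effect the paper's own even-$m$ machinery transplanted to odd $m$: Proposition~\ref{prop1} for the coset-leader verification, the interval-covering argument of Lemma~\ref{lem4} for emptying the gaps, and the gcd Lemmas~\ref{lem1} and~\ref{lem111} for the coset sizes. The original proof in \cite{Yan} could not have taken this route, since Proposition~\ref{prop1} comes from the later paper \cite{shi}; it instead analyses $q$-adic expansions of coset elements directly, so your route buys uniformity with Section~3 of this paper at the price of depending on \cite{shi}. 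The steps you actually carry out check out: the excluded set for a pair $(i,l)$ in Proposition~\ref{prop1} is exactly the set of integers in $\bigl(\frac{ln}{q^i+1},\frac{ln}{q^i-1}\bigr)$, which for $(1,\frac{q-1}{2})$ is precisely $(\delta_2^1,\frac{n}{2})$; the computation $\delta_2^1q\equiv\frac{q+3}{2}\cdot\frac{n}{q+1}\pmod n$ is right, and combined with the failure of $2(q+1)\mid(q-1)^2$ (which would force $q+1\mid 4$, and still fails at $q=3$) it yields $|C_{\delta_2^1}|=2$ with leader $\delta_2^1$; your eliminations for $|C_{\delta_3^1}|=2m$ are sound, including the one borderline case $q=3$, $m=3$, $t'=1$, where $M=q^2-q+1=7<2(q+1)$ but $Y=\frac{q^{m-2}+1}{q+1}=1$ makes $M\mid(q-1)Y$ fail outright, so your ``checked by hand'' caveat is harmless.

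Two cautions. First, the entire substance of the lemma sits in the two steps you only schematize, namely that $\delta_3^1$ is a coset leader and that no leader lies in $(\delta_3^1,\delta_2^1)$; as written this is a plan, not a proof. The bookkeeping is genuinely tamer here than in the paper's $m\equiv 4\pmod 8$ case, since $\delta_3^1=\frac{n}{2}-\frac{n}{q+1}-\frac{(q-1)(q^{m-2}+1)}{q+1}$ has a short alternating base-$q$ expansion, so the mechanism of Lemmas~\ref{lem3}, \ref{lem11} and~\ref{lem14} will go through, but it must be executed. Second, in the gap-covering step the interval $\bigl(\frac{(q^i-1)s}{n},\frac{(q^i+1)s}{n}\bigr)$ has length $\frac{2s}{n}<1$, so for a fixed $i$ it may contain no integer; like the paper's own Lemma~\ref{lem4}, your sketch asserts the existence of a suitable $i$ without producing it, and exhibiting that $s$-dependent choice of $i$ is exactly where the remaining work lies.
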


From Lemma \ref{lemm1}, we can obtain some results.
\begin{coro}\label{coro1}
Let $m$ be an odd positive integer, $\lambda=2$, $q \equiv 1 ~({\rm mod } ~4)$  and $n=\frac{q^m+1}{\lambda}$. Then $\delta_1^2=\frac{(q-1)(q^m+1)}{4(q+1)}$ and $\delta_2^2=\frac{q-1}{4(q+1)}(q^m-2q^{m-2}-1)$. Moreover,  $\left|C_{\delta_1^2}\right|=2$ and $\left|C_{\delta_2^2}\right|=2m$.
\end{coro}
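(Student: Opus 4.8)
The plan is to reduce Corollary \ref{coro1} entirely to the already-established facts about coset leaders modulo $q^m+1$ via the bijection in Lemma \ref{lemm1}. The key observation is that the hypothesis $\lambda=2$ together with $q\equiv 1\ ({\rm mod}\ 4)$ and $m$ odd guarantees that $2\mid q^m+1$, so that $n=\frac{q^m+1}{2}$ is a genuine integer and the correspondence $s\leftrightarrow 2s$ between coset leaders modulo $n$ and coset leaders modulo $2n=q^m+1$ applies. Note also that $\delta_1^1=\frac{q^m+1}{2}$ is itself even here (since $q^m+1\equiv 2\ ({\rm mod}\ 4)$ means $\frac{q^m+1}{2}$ is odd---I should check this parity carefully, as it determines whether $\delta_1^1$ lies in the image of the doubling map), whereas the remaining large coset leaders do. First I would verify that $\delta_2^1=\frac{(q-1)(q^m+1)}{2(q+1)}$ and $\delta_3^1=\frac{q-1}{2(q+1)}(q^m-2q^{m-2}-1)$ are both even, so that $\frac{\delta_2^1}{2}$ and $\frac{\delta_3^1}{2}$ are integers.

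The main structural step is the following: Lemma \ref{lemm1} establishes that $s\mapsto 2s$ is a bijection from the set of coset leaders modulo $n$ onto the set of coset leaders modulo $2n=q^m+1$ that are divisible by $2$, preserving coset sizes. Consequently the $i$-th largest coset leader modulo $n$ is exactly one-half of the $i$-th largest \emph{even} coset leader modulo $q^m+1$. The crux is therefore to determine which of the top coset leaders modulo $q^m+1$ given in the preceding lemma are even. I expect $\delta_1^1=\frac{q^m+1}{2}$ to be odd and hence \emph{excluded} from the image, so that the largest coset leader modulo $n$ corresponds not to $\delta_1^1$ but to $\delta_2^1$; this explains why $\delta_1^2=\frac{\delta_2^1}{2}=\frac{(q-1)(q^m+1)}{4(q+1)}$ and $\delta_2^2=\frac{\delta_3^1}{2}=\frac{q-1}{4(q+1)}(q^m-2q^{m-2}-1)$, matching the claimed formulas.

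The computation I would carry out in detail is the parity analysis. Writing $q=4t+1$, one has $q^m\equiv 1\ ({\rm mod}\ 4)$ since $m$ is odd, so $q^m+1\equiv 2\ ({\rm mod}\ 4)$ and therefore $\delta_1^1=\frac{q^m+1}{2}$ is odd, confirming it is not in the image of doubling. For $\delta_2^1$, I would show $\frac{q-1}{q+1}\cdot\frac{q^m+1}{2}$ is even by noting $q-1\equiv 0\ ({\rm mod}\ 4)$ while $q+1\equiv 2\ ({\rm mod}\ 4)$, so $\frac{q-1}{q+1}$ contributes an extra factor of $2$ to the already-integral quotient; a parallel argument handles $\delta_3^1$ using that $q-1$ carries the factor $4$. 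Having confirmed that $\delta_2^1$ and $\delta_3^1$ are the two largest \emph{even} coset leaders modulo $q^m+1$, the bijection gives $\delta_1^2=\frac{\delta_2^1}{2}$ and $\delta_2^2=\frac{\delta_3^1}{2}$, and the size assertions $\left|C_{\delta_1^2}\right|=\left|C_{\delta_2^1}\right|=2$ and $\left|C_{\delta_2^2}\right|=\left|C_{\delta_3^1}\right|=2m$ follow immediately from the size-preservation clause of Lemma \ref{lemm1}.

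The hard part will be the parity bookkeeping: I must be careful that the $2$-adic valuations work out so that exactly $\delta_1^1$ is odd while $\delta_2^1,\delta_3^1$ are even, since the whole conclusion hinges on $\delta_1^1$ dropping out of the image and the indexing shifting by one. I should also confirm there are no even coset leaders lying strictly between $\delta_2^1$ and $\delta_1^1$, or between $\delta_3^1$ and $\delta_2^1$; but this is automatic, because any coset leader modulo $q^m+1$ in those gaps would, if even, produce via halving a coset leader modulo $n$ contradicting the ordering, and the preceding lemma already asserts $\delta_2^1,\delta_3^1$ are consecutive in the full list. Thus once the parities are pinned down the result is immediate.
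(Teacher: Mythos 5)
Your proposal is correct and follows exactly the route the paper intends: the paper derives Corollary~\ref{coro1} directly from Lemma~\ref{lemm1} together with the Yan--Liu--Li--Yang lemma on the three largest coset leaders modulo $q^m+1$, and your parity analysis (showing $\delta_1^1=\frac{q^m+1}{2}$ is odd while $\delta_2^1,\delta_3^1$ are even when $q\equiv 1\ ({\rm mod}\ 4)$ and $m$ is odd, so the doubling bijection shifts the indexing by one) is precisely the step the paper leaves implicit. Your $2$-adic bookkeeping checks out, so nothing is missing.
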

\begin{thm}\label{thm55}
Let $m$ be an odd positive integer, $\lambda=2$, $q \equiv 1 ~({\rm mod } ~4)$  and $n=\frac{q^m+1}{\lambda}$. Let $\delta_1^2$ and $\delta_2^2$  be given in Corollary \ref{coro1}. Then the LCD BCH code
$\mathcal{C}_{(q,n,\delta+1,0)}$ has parameters 
$$[n,2,d\geq 2\delta]$$
if $\delta_2^2+1\leq \delta \leq \delta_1^2$   and 
$\mathcal{C}_{(q,n,\delta_2^2+1,0)}$ has parameters 
$$[n,2+2m,d\geq 2\delta_2^2].$$  
\end{thm}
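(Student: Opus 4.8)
The plan is to reduce the whole statement to a dimension count via the defining set, since the LCD property and the distance bound are both already available from earlier results. First I would observe that $n=\frac{q^m+1}{2}$ has the shape $(q^m+1)/\lambda$ with $\lambda=2$, and $2\mid q+1$ because $q$ is odd; hence, by the observation recorded in Section \ref{sec2} (following \cite{Li}), every cyclic code of length $n$ is an LCD code, so $\mathcal{C}_{(q,n,\delta+1,0)}$ is LCD for every admissible $\delta$. The minimum-distance bound is immediate from Lemma \ref{lem8} applied with designed distance $\delta+1$: it gives $d\ge 2\big((\delta+1)-1\big)=2\delta$, and likewise $d\ge 2\delta_2^2$ for the code $\mathcal{C}_{(q,n,\delta_2^2+1,0)}$. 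Thus the only substantive work is the dimension.

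For the dimension I would describe the defining set explicitly. The code $\mathcal{C}_{(q,n,\delta+1,0)}$ has generator polynomial ${\rm lcm}(m_0(x),\dots,m_{\delta-1}(x))$, so its defining set is $T=\bigcup_{i=0}^{\delta-1}C_i$. Since each $C_i$ coincides with $C_s$ for the coset leader $s$ of the coset containing $i$, and a coset $C_s$ meets $\{0,1,\dots,\delta-1\}$ precisely when its leader satisfies $s\le\delta-1$, we obtain $T=\bigcup_{s\in\Gamma_{(n,q)},\,s\le\delta-1}C_s$. Because the $q$-cyclotomic cosets partition $\mathbb{Z}_n$, this gives the complementary formula
\[
k=n-|T|=\sum_{\substack{s\in\Gamma_{(n,q)}\\ s\ge\delta}}|C_s|,
\]
so the dimension equals the total size of those cosets whose leaders are at least $\delta$.

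The remainder is casework driven entirely by Corollary \ref{coro1}, which identifies $\delta_1^2$ and $\delta_2^2$ as the two largest coset leaders modulo $n$, with $|C_{\delta_1^2}|=2$ and $|C_{\delta_2^2}|=2m$. For $\delta_2^2+1\le\delta\le\delta_1^2$, the only coset leader $\ge\delta$ is $\delta_1^2$: every other leader is at most $\delta_2^2<\delta$, while $\delta_1^2\ge\delta$ by the upper bound on $\delta$; hence $k=|C_{\delta_1^2}|=2$. For $\mathcal{C}_{(q,n,\delta_2^2+1,0)}$ we are in the case $\delta=\delta_2^2$, so the coset leaders that are $\ge\delta_2^2$ are exactly $\delta_1^2$ and $\delta_2^2$, giving $k=|C_{\delta_1^2}|+|C_{\delta_2^2}|=2+2m$.

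I expect the main difficulty to be bookkeeping rather than genuine mathematics: one must keep straight the shift between the designed distance $\delta+1$ and the index range $\{0,\dots,\delta-1\}$, and must rely on the fact that $\delta_1^2$ and $\delta_2^2$ really are the first two largest coset leaders so that no additional leader enters the relevant range. That fact is exactly what Corollary \ref{coro1} supplies, so no fresh cyclotomic-coset analysis is required here.
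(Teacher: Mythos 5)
Your proposal is correct and follows exactly the argument the paper leaves implicit for Theorem \ref{thm55}: LCD-ness from the remark after Lemma 1 (since $2\mid q+1$), the bound $d\geq 2\delta$ from Lemma \ref{lem8}, and the dimension via $k=n-|T|=\sum_{s\in\Gamma_{(n,q)},\,s\geq\delta}\left|C_s\right|$ combined with Corollary \ref{coro1}'s identification of $\delta_1^2,\delta_2^2$ as the two largest coset leaders with $\left|C_{\delta_1^2}\right|=2$ and $\left|C_{\delta_2^2}\right|=2m$. Your bookkeeping of the shift between designed distance $\delta+1$ and the index set $\{0,\dots,\delta-1\}$ (in particular that $C_{\delta_1^2}$ stays outside $T$ when $\delta=\delta_1^2$ because a coset leader is the minimum of its coset) is exactly right, so nothing is missing.
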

\begin{exa}
Let $q=5$,$m=3$ and $\lambda=2$. Then the LCD BCH code $\mathcal{C}_{(5,63,21,0)}$ has the parameters $[63, 2, d\geq 40]$ and   $\mathcal{C}_{(5,63,20,0)}$ has the parameters $[63, 8, d\geq 38]$, which  the dimensions are consistent with the results in Theorem \ref{thm55}.
\end{exa}

For an even integer $m$, we can obtain some analogous results. The following  corollary is obtained directly from Lemma 22 in \cite{liu3} and Lemma \ref{lemm1}.
\begin{coro}\label{coro2}
Let  $q$ be an odd prime power, $m$ be an even positive integer and $\lambda=2$. Then $\delta_1^2=\frac{\delta_2}{2}$ , where $\delta_2$ is defined by Theorem \ref{thm0}. Moreover, 
\begin{equation*}	 
	\left| C_{\delta_1^2}  \right|=\left\{\begin{array}{ll}2^{\nu_2(m)+1}, & if m\equiv 2^{\nu_2(m)}~ ({\rm mod}~ 2^{{\nu_2(m)}+1}), \\
		2m, &if m=2^k\geq2.\\
	\end{array}
	\right. 
\end{equation*}
\end{coro}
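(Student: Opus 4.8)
The plan is to transfer the question from the modulus $\lambda n = q^m+1$ to the modulus $n=\frac{q^m+1}{2}$ by means of Lemma \ref{lemm1}. That lemma provides a dictionary: an integer $s$ with $1\leq s\leq n-1$ is a coset leader modulo $n$ if and only if $2s$ is a coset leader modulo $q^m+1$, and moreover $|C_s|=|C_{2s}|$. Since every $2s$ is even, the coset leaders modulo $n$ are in monotone increasing bijection with the \emph{even} coset leaders modulo $q^m+1$ via $s\leftrightarrow 2s$. Consequently the largest coset leader modulo $n$ equals exactly one half of the largest even coset leader modulo $q^m+1$, and its cyclotomic coset size coincides with that of the corresponding coset leader modulo $q^m+1$. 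So it suffices to locate the largest even coset leader modulo $q^m+1$ among $\delta_1,\delta_2,\dots$ from Theorem \ref{thm0}.

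First I would settle the parity of the top two coset leaders modulo $q^m+1$. Because $m$ is even and $q$ is odd, $q^m\equiv 1\pmod 4$, hence $q^m+1\equiv 2\pmod 4$ and $\delta_1=\frac{q^m+1}{2}$ is \emph{odd}; thus $\delta_1$ contributes no coset leader modulo $n$. Next I would check that $\delta_2$ is \emph{even} using its explicit form in Theorem \ref{thm0}. If $m=2^k$ with $k\geq 2$, the product $\prod_{j=0}^{k-1}(q^{2^j}-1)$ has $k\geq 2$ even factors, so $\delta_2=\frac{q-1}{2}\prod_{j=0}^{k-1}(q^{2^j}-1)$ is divisible by $4$. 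If $m\equiv 2^{\nu_2(m)}\pmod{2^{\nu_2(m)+1}}$, write $m=2^{\nu_2(m)}t$ with $t$ odd; then $\frac{q^m+1}{q^{2^{\nu_2(m)}}+1}$ is the alternating sum of the $t$ odd powers $(q^{2^{\nu_2(m)}})^i$ and is therefore odd, while $\Psi_q(\nu_2(m)-1)$ carries the factor $\prod_{j=0}^{\nu_2(m)-1}(q^{2^j}-1)$ with $\nu_2(m)\geq 1$ even factors and is hence even; so $\delta_2$ is even.

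Combining these observations, $\delta_2$ is the largest \emph{even} coset leader modulo $q^m+1$: it is the second largest overall, the only larger coset leader $\delta_1$ is odd, and $\delta_2$ itself is even, so nothing even lies strictly between $\delta_2$ and $\delta_1$. Applying the dictionary of Lemma \ref{lemm1} with $s=\frac{\delta_2}{2}$ then yields at once $\delta_1^2=\frac{\delta_2}{2}$ together with $|C_{\delta_1^2}|=|C_{\delta_2}|$, and the value of the latter size, namely $2^{\nu_2(m)+1}$ when $m\equiv 2^{\nu_2(m)}\pmod{2^{\nu_2(m)+1}}$ and $2m$ when $m=2^k\geq 2$, is exactly what Lemma 22 of \cite{liu3} records for $\delta_2$ modulo $q^m+1$.

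I expect the only delicate point to be the parity computation for $\delta_2$ in the mixed case, specifically confirming that the quotient $\frac{q^m+1}{q^{2^{\nu_2(m)}}+1}$ is odd; the two cases of Theorem \ref{thm0} are exhaustive for even $m$ by Lemma \ref{lem00}, and once the parities are in hand everything else is a direct invocation of Lemma \ref{lemm1}. In fact, since Lemma 22 of \cite{liu3} already supplies the value and the coset size of $\delta_2$ (and with it the parity), the cleanest writeup simply cites it for the statement ``the largest even coset leader modulo $q^m+1$ is $\delta_2$'' and lets Lemma \ref{lemm1} perform the transfer to $n=\frac{q^m+1}{2}$.
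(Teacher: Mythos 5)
Your proposal is correct and takes essentially the same route as the paper, which obtains Corollary \ref{coro2} directly from Lemma 22 of \cite{liu3} together with the transfer dictionary of Lemma \ref{lemm1}, exactly as you do. The only difference is that you spell out the parity bookkeeping (that $\delta_1=\frac{q^m+1}{2}$ is odd while $\delta_2$ is even in both cases of Theorem \ref{thm0}, so $\delta_2$ is the largest even coset leader modulo $q^m+1$), which the paper leaves implicit in the word ``directly''.
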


The following corollary  from  Lemmas \ref{lem4}, \ref{lem13} and  \ref{lemm1}.
\begin{coro}\label{coro3}
Let  $q$ be an odd prime power and $\lambda=2$. Let  $m\equiv 2^{\nu_2(m)}~({\rm mod}~ 2^{{\nu_2(m)}+1})$ and $m\geq 2^{\nu_2(m)}+2^{{\nu_2(m)}+1}$ for $\nu_2(m)=1,2$. Let $n=\frac{q^m+1}{2}$. Then $\delta_2^2=\frac{\delta_3}{2}$ and $\delta_3^2=\frac{\delta_4}{2}$, where $\delta_3$ and $\delta_4$ are defined by Conjecture \ref{con11} and Lemma \ref{lem4}. Moreover,  $\left|C_{	\delta_2^2}\right|=\left|C_{\delta_3^2}\right|=2m$.
\end{coro}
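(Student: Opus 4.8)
The plan is to exploit Lemma \ref{lemm1} with $\lambda = 2$, which is legitimate here since $2 \mid q+1$ and $n = \frac{q^m+1}{2}$. This lemma sets up an order-preserving bijection $s \leftrightarrow 2s$ between the coset leaders modulo $n$ and the \emph{even} coset leaders modulo $q^m+1$, and it moreover preserves coset sizes, $|C_s| = |C_{2s}|$. Because doubling is monotone, this bijection matches the $i$-th largest coset leader modulo $n$ with the $i$-th largest even coset leader modulo $q^m+1$. Hence the whole statement reduces to two things: (i) determining the parities of the four largest coset leaders $\delta_1,\delta_2,\delta_3,\delta_4$ modulo $q^m+1$, and (ii) transporting the size information of Lemma \ref{lem13} through Lemma \ref{lemm1}.

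First I would settle the parity of $\delta_1 = \frac{q^m+1}{2}$. Since $q$ is odd and $m$ is even (as $\nu_2(m)\geq 1$), one has $q^m \equiv 1 \pmod 4$, so $\frac{q^m+1}{2}$ is odd; thus $\delta_1$ contributes nothing to the list of even coset leaders. Next I would show that $\delta_2,\delta_3,\delta_4$ are all even by reducing their explicit expressions modulo $2$. Using that every power $q^k$ is odd, each block $\Phi(j)=q^{8j}-q^{8j-1}+q^{8j-3}-q^{8j-4}+q^{8j-5}-q^{8j-7}$ vanishes modulo $2$, while the remaining leading terms of $\Delta_2,\Delta_3,\Delta_4$ together with $\frac{n}{2}$ (itself odd) amount to an even count of odd summands; the analogous formulas from Conjecture \ref{con11} for $\nu_2(m)=1$ reduce the same way. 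This yields that $\delta_2,\delta_3,\delta_4$ are even.

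With $\delta_1$ odd and $\delta_2,\delta_3,\delta_4$ even, and since these four are by definition consecutive in the decreasing list of all coset leaders modulo $q^m+1$, they are precisely the first, second and third largest \emph{even} coset leaders. Applying the order-preserving halving bijection then gives $\delta_1^2 = \frac{\delta_2}{2}$ (recovering Corollary \ref{coro2}), $\delta_2^2 = \frac{\delta_3}{2}$ and $\delta_3^2 = \frac{\delta_4}{2}$. Finally, the size claim follows from the size-preservation in Lemma \ref{lemm1}: $|C_{\delta_2^2}| = |C_{\delta_3}|$ and $|C_{\delta_3^2}| = |C_{\delta_4}|$, each of which equals $2m$ by Lemma \ref{lem13} for $\nu_2(m)=2$ and by the corresponding known computation for $\nu_2(m)=1$.

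The main obstacle I expect is the bookkeeping required to confirm that the explicit values of $\delta_3$ and $\delta_4$ are genuinely available in both regimes: for $\nu_2(m)=2$ they are supplied by Lemma \ref{lem4} and Lemma \ref{lem13}, but for $\nu_2(m)=1$ one must lean on the separately established values behind Conjecture \ref{con11} and re-verify that the corresponding coset sizes are $2m$. Once those inputs are in hand, the parity reductions are routine and the order-preserving halving bijection of Lemma \ref{lemm1} does the rest.
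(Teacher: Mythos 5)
Your proposal is correct and follows essentially the same route as the paper, which gives no written proof but derives the corollary directly from the halving bijection of Lemma \ref{lemm1} together with Lemmas \ref{lem4} and \ref{lem13}. Your explicit parity bookkeeping (that $\delta_1=\frac{q^m+1}{2}$ is odd while $\delta_2,\delta_3,\delta_4$ are even, so they are the three largest even coset leaders modulo $q^m+1$) is precisely the detail the paper leaves implicit, and your caveat about the $\nu_2(m)=1$ case resting on the separately established values behind Conjecture \ref{con11} matches the paper's own reliance on those external results.
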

\begin{thm}\label{thm3}
Let  $q$ be an odd prime power and $\lambda=2$. Let  $m\equiv 2^{\nu}~ ({\rm mod}~ 2^{{\nu_2(m)}+1})$ and $m\geq 2^{\nu_2(m)}+2^{{\nu_2(m)}+1}$ for $\nu_2(m)=1,2$. Let $n=\frac{q^m+1}{2}$. Let $\delta_1^2$, $\delta_2^2$ and $\delta_3^2$  be given in Corollaries \ref{coro2} and \ref{coro3}. Then the LCD BCH code
$\mathcal{C}_{(q,n,\delta+1,0)}$ has parameters 
$$[n,2m(i-1)+2^{v+1},d\geq 2\delta]$$
if $\delta_{i+1}^2+1\leq \delta \leq \delta_i^2$ (i=1,2)  and 
$\mathcal{C}_{(q,n,\delta_3^2+1,0)}$ has parameters 
$$[n,4m+2^{v+1},d\geq 2\delta_3^2].$$  
\end{thm}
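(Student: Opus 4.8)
The plan is to reduce the claim to a count of cyclotomic coset sizes, since Corollaries \ref{coro2} and \ref{coro3} already furnish the three largest coset leaders modulo $n$ together with the cardinalities of their cosets. The starting point is the standard dimension formula: the dimension of $\mathcal{C}_{(q,n,\delta+1,0)}$ is $n-|T|$, where the defining set is $T=\bigcup_{i=0}^{\delta-1}C_i$. Because the $q$-cyclotomic cosets partition $\mathbb{Z}_n$ and each coset leader is the minimal element of its coset, a coset $C_s$ with leader $s$ lies inside $T$ exactly when $s\leq\delta-1$ and is disjoint from $T$ exactly when $s\geq\delta$; indeed, a leader $s\geq\delta$ cannot lie in any $C_i$ with $i<\delta$, for that would force $s=\min C_i\leq i<\delta$. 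Complementary counting therefore gives
\begin{equation*}
\dim\mathcal{C}_{(q,n,\delta+1,0)}=n-|T|=\sum_{\substack{s\ \text{a coset leader}\\ s\geq\delta}}|C_s|,
\end{equation*}
so the dimension equals the total size of those cosets whose leaders are at least $\delta$. Establishing this identity cleanly is the one step that needs genuine care; everything afterward is a lookup.

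I would then invoke the corollaries: under the stated hypotheses on $m$ and with $\lambda=2$, the three largest coset leaders modulo $n$ are $\delta_1^2>\delta_2^2>\delta_3^2$, with $|C_{\delta_1^2}|=2^{\nu_2(m)+1}$ (this is the $2^{\nu+1}$ of the statement) and $|C_{\delta_2^2}|=|C_{\delta_3^2}|=2m$. Since $\delta_1^2,\delta_2^2,\delta_3^2$ are by definition the first, second, and third entries of the decreasing list of coset leaders, there is no coset leader exceeding $\delta_1^2$, none strictly between $\delta_1^2$ and $\delta_2^2$, and none strictly between $\delta_2^2$ and $\delta_3^2$. This is precisely what lets me determine, for each interval of $\delta$, which leaders satisfy $s\geq\delta$.

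The case analysis now drops out of the displayed identity. If $\delta_2^2+1\leq\delta\leq\delta_1^2$, the only coset leader with $s\geq\delta$ is $\delta_1^2$, so the dimension is $2^{\nu_2(m)+1}=2m(i-1)+2^{\nu_2(m)+1}$ with $i=1$. If $\delta_3^2+1\leq\delta\leq\delta_2^2$, the leaders with $s\geq\delta$ are exactly $\delta_1^2$ and $\delta_2^2$, giving $2m+2^{\nu_2(m)+1}=2m(i-1)+2^{\nu_2(m)+1}$ with $i=2$. Finally, for the single value $\delta=\delta_3^2$, i.e. the code $\mathcal{C}_{(q,n,\delta_3^2+1,0)}$, all three leaders qualify and the dimension is $2^{\nu_2(m)+1}+2m+2m=4m+2^{\nu_2(m)+1}$, as claimed.

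It remains to record the other two entries of each parameter triple. The length is $n$ by construction. Since $q$ is odd we have $2\mid q+1$, and $q^m\equiv-1\pmod n$ exhibits $-1$ as a power of $q$ modulo $n$, so by \cite{Li} the code is LCD. Because the designed distance of $\mathcal{C}_{(q,n,\delta+1,0)}$ is $\delta+1$, Lemma \ref{lem8} applied with $\lambda=2$ yields $d\geq 2((\delta+1)-1)=2\delta$, the advertised bound that improves on the BCH bound. I do not expect a substantive obstacle: the whole argument rests on the complementary-counting identity and on the pairwise disjointness of the three top cosets from the defining set, both of which follow immediately from the partition property of cyclotomic cosets and the minimality of coset leaders.
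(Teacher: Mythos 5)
Your proposal is correct and follows essentially the same route as the paper, which states Theorem \ref{thm3} without further argument precisely because it is the immediate consequence of the dimension formula $k=n-|T|$, the identification of the three largest coset leaders and their coset sizes in Corollaries \ref{coro2} and \ref{coro3}, and the bound $d\geq 2\delta$ from Lemma \ref{lem8}. Your complementary-counting identity and the observation that $T$ is exactly the union of cosets whose leaders are at most $\delta-1$ make explicit the step the paper leaves implicit, and your case analysis for $i=1,2$ and for $\delta=\delta_3^2$ reproduces the stated dimensions $2m(i-1)+2^{\nu_2(m)+1}$ and $4m+2^{\nu_2(m)+1}$ exactly.
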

\begin{exa}
Let $q=3$,$m=6$ and $\lambda=2$. Then the LCD BCH code $\mathcal{C}_{(5,365,72,0)}$ has the parameters $[365, 16, d\geq 132]$ and  $\mathcal{C}_{(5,365,66,0)}$ has the parameters $[365, 28, d\geq 130]$, which  the dimensions are consistent with the results in Theorem \ref{thm3}.
\end{exa}

\section{Concluding remarks}

In this paper, we determine the first few largest coset leaders modulo $q^m+1$, where $q$ is an odd prime power and $m=4,8$, or  $m\equiv 4~({\rm mod } ~8 )$. Consequently, the dimensions of some LCD BCH codes of length $q^m+1$ with   large designed distances
are given.  We also determine the dimensions of the LCD BCH codes $\mathcal{C}_{(q,(q^m+1)/\lambda,\delta+1,0)}$  with $1\leq \delta-1 \leq \frac{ q^{\lfloor(m+1)/2\rfloor}}{\lambda}$  and $\lambda\mid q+1$. When $n=\frac{q^m+1}{\lambda}$, we determine the first few largest coset leaders modulo $n$ for some special values of $\lambda$ and thus determine the dimensions of some LCD BCH codes of length $n$ with   large designed distances.

Actually, when $m\equiv 2^{\nu(m)} ~({\rm mod}~2^{\nu(m)+1})$ and $m\geq 2^{\nu(m)}+2^{\nu(m)+1}$ for a fixed integer $\nu(m)\geq3$, we also can  verify that the statements for $\delta_3$ and $\delta_4$ in Conjecture \ref{con11} are true. For example, suppose that  $m\equiv 8~ ({\rm mod}~16)$ and $m\geq 24$. By   Conjecture \ref{con11}, we have 
\begin{align*}
\delta_3&=\frac{(q-1)^2(q^2-1)(q^4-1)(q^m-2q^{m-16}-1)}{2(q^8+1)}.
\end{align*} As is easily checked,  we have 
\begin{equation}\label{eq10}
\begin{aligned}
	\delta_3&=\frac{(q-1)^2(q^2-1)(q^4-1)(q^m-2q^{m-16}-1)}{2(q^8+1)}\\
	&=\frac{q^m-1}{2}-q^{m-1}+q^{m-3}-q^{m-4}+q^{m-5}-q^{m-7}+q^{m-9}-q^{m-11}
	+q^{m-12}\\
	&-q^{m-13}+q^{m-15}-q^{m-16}+q^{m-17}-q^{m-19}+q^{m-20}-q^{m-21}+q^{m-23}\\
	&+\sum_{j=1}^{(m-24)/16} (-q^{16j-1}+q^{16j-3}-q^{16j-4}+q^{16j-5}-q^{16j-7}+q^{16j-9}-q^{16j-11}\\
	&+q^{16j-12}-q^{16j-13}+q^{16j-15}).
\end{aligned}
\end{equation}
By the similar analyses with  Lemmas \ref{lem3}, \ref{lem11} and \ref{lem14}, we can prove that $\delta_3$ is the third-largest coset leader modulo $q^m+1$ when $m\equiv 8 ~({\rm mod}~16)$ and $m\geq 24$. However, for an unfixed integer $\nu(m)$, the expression of $\delta_3$ and $\delta_4$  that are similar in form to Eq. (\ref{eq10}) in Conjecture \ref{con11} are very complex, and then  it will be difficult to prove that  $\delta_3$ and $\delta_4$ in Conjecture \ref{con11} are  the third-largest and fourth-largest coset leaders, respectively.
Therefore, a possible direction for the future work is to improve our method and confirm  Conjecture \ref{con11} when $\nu(m)$ is unfixed.



\begin{thebibliography}{00}
	\bibitem{Augot} D. Augot and N. Sendrier, Idempotents and the BCH bound, IEEE Trans. Inf. Theory,  40 (1994)  204–207.
	
\bibitem{Augot2} D. Augot, P. Charpin, and N. Sendrier, Studying the locator polynomials of minimum weight codewords of BCH codes, IEEE Trans. Inf. Theory, 38 (1992)  960–973.

\bibitem{Bose} R.C. Bose and D.K. Ray-Chaudhuri, On a class of error correcting binary group codes, Inf. Control, 3 (1960)  68-79.

\bibitem{Canteaut} A. Canteaut and F. Chabaud, A new algorithm for finding minimum weight words in a linear code: Application to McEliece’s cryptosystem and to narrow-sense BCH codes of length 511, IEEE Trans. Inf. Theory, 44 (1998) 367–378.


\bibitem{Char} P. Charpin , Open problems on cyclic codes,   Handbook Coding Theory,  1 (1998), 963-1063.

\bibitem{Charpin2} P. Charpin, On a class of primitive BCH-codes, IEEE Trans. Inf. Theory,   36 (1990) 222–228.

\bibitem{Ding2}C. Ding, X. Du and Z. Zhou, The Bose and minimum distance of a class of BCH Codes, IEEE Trans. Inf. Theory,  61 (2015) 2351–2356.

\bibitem{ding3}  C. Ding, C. Fan and  Z. Zhou, The dimension and minimum distance of two classes of primitive BCH codes,  Finite Fields Appl.,   45 (2017) 237–263.

\bibitem{DING} C. Ding, BCH codes in the past 55 years, in  Proc. 7th. Int. Worhshop Finite Fields Appl., Tianjing, China, 2016.

\bibitem{Ho} A. Hocquenghem, Codes correcteurs d’erreurs, Chiffres, 2 (1959) 147–156.

\bibitem{Kasami} T. Kasami and S. Lin, Some results on the minimum weight of primitive BCH codes,  IEEE Trans. Inf. Theory,   18 (1972) 824–825.

\bibitem{Kasami2}  T. Kasami, S. Lin, and W.W. Peterson, Linear codes which are invariant under the affine group and some results on minimum weights in BCH
codes, Electron. Commun. Jpn.,   50 (1967) 100–106.

\bibitem{Li} C. Li, C. Ding and S. Li, LCD cyclic codes over finite fields, IEEE Trans. Inf. Theory,   63 (2017) 4344-4356.

\bibitem{li2} S. Li, C. Li, C. Ding and  H. Liu, Two families of LCD BCH codes, IEEE Trans. Inf. Theory, 63 (2017)  5699-5717.

\bibitem{liu}H. Liu, C. Ding and  C. Li, Dimensions of three types of BCH codes over GF($q$), Discrete Math.,    340 (2017)  1910-1927.

\bibitem{Liu2} Y. Liu, Y. Li, Q. Fu, L. Lu and Y. Rao, Some binary BCH codes with length $n = 2^m +1$, Finite Fields Appl., 55 (2019)  109-133.

\bibitem{liu3} Y. Liu, R. Li, L. Guo and H. Song,  Dimensions of nonbinary antiprimitive BCH codes and some conjectures, 2017, arXiv:1712.06842.

\bibitem{LIU}  Y. Liu, X. Cao,  and  W. Lu, On some conjectures about optimal ternary cyclic codes, Des. Codes Cryptogr.,   88 (2020) 297–309.

\bibitem{LIU2}  Y. Liu and  C. Liu,  A class of cyclic codes whose duals have five zeros,   Des. Codes Cryptogr.,   81 (2016) 225–238.

\bibitem{LIU3}  Y. Liu, H. Yan and C. Liu,  A class of six-weight cyclic codes and their weight distribution, Des. Codes Cryptogr., 77 (2015) 1–9.

\bibitem{Mas} J. L. Massey,  Reversible codes,  Inf. Control,   7 (1963)  369-380.

\bibitem{SHI} M. Shi, N. Liu, F. Özbudak and  P. Solé, Additive cyclic complementary dual codes over $\mathbb{F}_4$,  Finite Fields  Appl.,   83 (2022) 102087.

\bibitem{wang} X. Wang, BCH codes over GF$(q)$ with length $q^m+1$,	arXiv:2106.11607.
\bibitem{Yan}H. Yan, H. Liu, C. Li and  S. Yang, Parameters of LCD BCH codes with two lengths,  Adv. Math. Commun.,   12 (2018) 579–594.

\bibitem{yang}  X. Yang and J. L. Massey, The condition for a cyclic code to have a complementary dual,  Discrete Math., 126 (1994) 391-393.

\bibitem{yue}  D. Yue and  Z. Hu,  On the dimension and minimum distance of BCH codes over GF($q$), J. Electron.,   13 (1996) 216–221.

\bibitem{yue1} D. Yue and G. Feng, Minimum cyclotomic coset representatives and their applications to BCH codes and Goppa codes,  IEEE Trans. Inf. Theory,  46 (2000)  2625–2628.

\bibitem{yue2} D. Yue and H. Zhu, On the minimum distance of composite-length BCH codes, IEEE Commun. Lett.,  3 (1999)  269–271.

\bibitem{shi} H. Zhu, M. Shi, X. Wang and T. Helleseth, The $q$-Ary Antiprimitive BCH Codes, IEEE Trans. Inf. Theory, 68 (2022) 1683-1695.

\end{thebibliography}
\end{document}